
\documentclass[10pt]{article}

\usepackage{bm}
\usepackage{fullpage}  % for narrow margins

% PA-28Dec2011 I also add url for the "url" command:
\usepackage{url}
\usepackage{algorithm, algorithmicx,algpseudocode}
\usepackage{multirow} %//multirow for format of table
\usepackage{hhline}  % for \hhline in tables
\usepackage{verbatim}

\usepackage{graphicx}
\usepackage{amsmath,amssymb,amsfonts,graphicx,amsthm,mathtools,nicefrac}%},mathrsfs}
\usepackage{lscape}
\usepackage{color}
\usepackage{authblk}
\usepackage{subfigure}
\usepackage{rotating}

%%%%%%
%%%%%%
\allowdisplaybreaks
\newcommand\numberthis{\addtocounter{equation}{1}\tag{\theequation}}
%%%%%%

\newtheorem{theorem}{Theorem}[section]
\newtheorem{assumption}{Assumption}[section]
\newtheorem{remark}{Remark}[section]

\numberwithin{equation}{section}

\newtheorem{lemma}[theorem]{Lemma}
\newtheorem{corollary}[theorem]{Corollary}

%%%%%%
\DeclareMathOperator{\rank}{\mathrm{rank}}
\DeclareMathOperator{\diag}{\mathrm{diag}}

\DeclareMathOperator{\trace}{trace}

\DeclareMathOperator{\Range}{Range}

\DeclareMathOperator{\vect}{vec}
\DeclareMathOperator{\conv}{conv}
%%%%%%
\def\la{\left\langle}
\def\ra{\right\rangle}
\def\lb{\left(}
\def\rb{\right)}
\def\lcb{\left\{}
\def\rcb{\right\}}

\def\lsb{\left[}
\def\rsb{\right]}
\def\lab{\left|}
\def\rab{\right|}

%%%%%%
% NEW MJW VERSION

%\newcommand{\matsnorm}[2]{|\!|\!| #1 | \! | \!|_{{#2}}}
\newcommand{\vecnorm}[2]{\left\| #1\right\|_{#2}}

\newcommand{\matsnorm}[2]{\left\| #1\right\|_{{#2}}}

\newcommand{\nucnorm}[1]{\ensuremath{\matsnorm{#1}{\footnotesize{\mbox{$\ast$}}}}}
\newcommand{\fronorm}[1]{\ensuremath{\matsnorm{#1}{\footnotesize{\mathsf{F}}}}}
\newcommand{\opnorm}[1]{\ensuremath{\matsnorm{#1}{}}}

\newcommand{\ginfnorm}[1]{\ensuremath{\matsnorm{#1}{\footnotesize{\mbox{$\calG$,$\infty$}}}}}
\newcommand{\gfronorm}[1]{\ensuremath{\matsnorm{#1}{\footnotesize{\mbox{$\calG$,$\mathsf{F}$}}}}}
\newcommand{\bfm}[1]{\bm{#1}}

\newcommand{\E}[2][]{\mathbb{E}_{#1} \left[ #2 \rule{0mm}{3mm}\right]}

% Define Boldfaces
\def\va{\bfm a}   \def\mA{\bfm A}  
\def\vb{\bfm b}   \def\mB{\bfm B}  
     \def\C{\mathbb{C}}
   \def\mD{\bfm D}  
\def\ve{\bfm e}   \def\mE{\bfm E}  
  \def\mF{\bfm F}  
\def\vg{\bfm g}   \def\mG{\bfm G}  
\def\vh{\bfm h}   \def\mH{\bfm H}  
   \def\mI{\bfm I}

   \def\mL{\bfm L}  
   \def\mM{\bfm M}

\def\vp{\bfm p}   \def\mP{\bfm P}  \def\P{\mathbb{P}}
   \def\mQ{\bfm Q}  
   \def\mR{\bfm R}  \def\R{\mathbb{R}}
   \def\mS{\bfm S}  
     
   \def\mU{\bfm U}  
   \def\mV{\bfm V}  
   \def\mW{\bfm W}  
\def\vx{\bfm x}   \def\mX{\bfm X}  
\def\vy{\bfm y}   \def\mY{\bfm Y}  
\def\vz{\bfm z}   \def\mZ{\bfm Z}

% Define mathcal font

\def\calA{{\cal  A}} 
\def\calB{{\cal  B}} 
 
\def\calD{{\cal  D}}

\def\calG{{\cal  G}} 
\def\calH{{\cal  H}} 
\def\calI{{\cal  I}}

\def\calO{{\cal  O}} 
\def\calP{{\cal  P}}

\def\calX{{\cal  X}}

\def\bzero{\bfm 0}

%%%%%%%%%%%%%%%%%%%%%%%%% Boldfaces in Greek %%%%%%%%%%%%%%%%%%%%%%%%
\newcommand{\bfsym}[1]{\bm{#1}}

\def\bbeta{\bfsym \beta}
\def\bgamma{\bfsym \gamma}             
           \def\bDelta {\bfsym {\Delta}}

             \def\bSigma{\bfsym \Sigma}
        \def\bLambda {\bfsym {\Lambda}}

\def\btau{\bfsym {\tau}}

% May add more in future.

%%%%%%%%%%%%%%%%%%%% hat in greek  %%%%%%%%%%%%%%%%%%%%%%%%%%%%

\def \calGT{\calG^{\ast}}
\def \GGT{\calG\calGT}
\def \calAT{\calA^{\ast}}

\def \mVT{\mV^{\ast}}

\def \tran {\mathsf{T}}
\def \tranH{\ast}

\def \bzero{\bm 0}

%%%%%%%%%%%%%%%%%%%%%%%%%%%%%%%%%%%%%%%%%%%%%%
%\newcommand{\kw}[2]{{\color{red} #1}{\color{blue} #2}}
\newcommand{\kw}[2]{{}{#2}}

\newcommand{\cjc}[1]{{{#1}}}

%%%%%%

%\newcommand{\msh}[2]{{\color{yellow} #1}{\color{cyan} #2}}
\newcommand{\msh}[2]{{}{#2}}

%%%%%%%%%%%%%%%%
%%%%%%%%%%%%%%%%
\begin{document}
	
\title{Vectorized Hankel Lift: A Convex Approach for Blind Super-Resolution of Point Sources \footnotetext{Authors  are listed alphabetically.}}
\author[1]{Jinchi Chen}
\author[1,2]{Weiguo Gao}
\author[1]{Sihan Mao}
\author[1]{Ke Wei}
\affil[1]{School of Data Science, Fudan University, Shanghai, China.\vspace{.15cm}}
\affil[2]{School of Mathematical Sciences, Fudan University, Shanghai, China.}

%\author[1]{Jinchi Chen}
%\author[1,2]{Weiguo Gao}
%\author[1]{Sihan Mao}
%\author[1]{Ke Wei}
%\affil[1]{School of Data Science, Fudan University, Shanghai, China.\vspace{.15cm}}
%\affil[2]{School of Mathematical Sciences, Fudan University, Shanghai, China.}

%\thanks{Manuscript received \today}}

%%%%%%
\maketitle
%%%%%%
\begin{abstract}
We consider the problem of resolving $ r$ point  sources from $n$ samples at the low end of the spectrum when point spread functions (PSFs) are not known. Assuming that the spectrum samples of the PSFs lie in low dimensional subspace (let $s$ denote the dimension), this problem can be reformulated as a matrix recovery problem, followed by location estimation. By exploiting the low rank structure of the vectorized Hankel matrix associated with the target matrix, a convex approach called Vectorized Hankel Lift is proposed for the matrix recovery. 
It is shown that $n\gtrsim rs\log^4 n$ samples are sufficient for Vectorized Hankel Lift to achieve the exact recovery.  \msh{}{For the location retrieval from the matrix, applying the single snapshot MUSIC method within the vectorized Hankel lift framework corresponds to the  spatial smoothing technique proposed to improve the performance of  the MMV MUSIC for the direction-of-arrival (DOA) estimation.} 
\\

\noindent 
\textbf{Keywords.} blind super-resolution, vectorized Hankel lift, low rank, MUSIC
\end{abstract}

%%%%%%%%%%%%%%%%
%%%%%%%%%%%%%%%%

\section{Introduction}
%\paragraph{Time domain}
%\paragraph{Frequency domain}
%\paragraph{Assumption}
\subsection{Problem formulation}
\label{sec: problem formulation}
 %\msh{}{In this paper, we study the super-resolution of point sources when point spread functions (PSFs) are not known. It is worth mentioning that the same problem was considered in \cite{yang2016super}. An earlier work \cite{chi2016guaranteed} considered a slightly different model where a point spread function is shared among all spikes, which can be viewed as a simplified blind super-resolution problem. In the first place, the problem formulation of blind super-resolution will be revisited. }
  
In this paper, we study the super-resolution of point sources when point spread functions (PSFs) are not known. More specifically, consider a point source signal $x(t)$ of the form
\begin{align}
\label{eq: point sources}
x(t) = \sum_{k=1}^{r}d_k \delta(t-\tau_k),
\end{align}
where $\delta(\cdot)$ is the Dirac function, $\{\tau_k\}$ and $\{d_k\}$ are the locations and amplitudes of the point source signal, respectively. Let $y(t)$ be its convolution with unknown point spread functions,
\begin{align}
\label{eq: convolution}
y(t)  =\sum_{k=1}^{r}d_k\delta(t-\tau_k)\ast g_k(t) = \sum_{k=1}^{r}d_k \cdot g_k(t-\tau_k),
\end{align}
where $\{g_k\}_{k=1}^r$ \kw{}{are the point spread functions  depending on} the locations of the point sources.

Taking the Fourier transform on both sides of \eqref{eq: convolution} yields
\begin{align}
\label{eq: countinuous convolution}
\widehat{y}(f)  = \int_{-\infty}^{+\infty} y(t)e^{-2\pi i f t}dt = \sum_{k=1}^{r}d_ke^{-2\pi i f \tau_k} \widehat{g}_k(f). %\quad   |f|\leq f_{\max},
\end{align}
\kw{}{The goal in blind super-resolution is to recover $\{d_k,\tau_k\}_{k=1}^r$}  from the low end of the spectrum
\begin{align}
\label{eq: discrete convolution}
\vy[j] =  \sum_{k=1}^{r}d_ke^{-2\pi i \tau_k\cdot j}\vg_k[j]\quad\text{for } j=0,\cdots, n-1
\end{align} 
when $\vg_k=[\widehat{g}_k(0),\cdots,\widehat{g}_k(n-1)]^\tran$, $k=1,\cdots,r$, are not known. Here we assume the index $j\in\{0,1,\cdots, n-1\}$ rather than  $j\in\{-\lfloor n/2\rfloor,\cdots,\lfloor n/2\rfloor \}$ only for convenience of notation. 
\kw{}{In addition to  blind super-resolution,} the  observation model \kw{}{ \eqref{eq: discrete convolution} also arises from many other important} applications, such as 3D single-molecule microscopy \cite{quirin2012optimal}, multi-user communication system \cite{luo2006low} and nuclear magnetic resonance spectroscopy \cite{qu2015accelerated}.

It is evident that the blind super-resolution problem is ill-posed without any further assumptions. To address this issue, \kw{}{we assume that the set of  vectors $\{\vg_k\}_{k=1}^r$ corresponding to the unknown point spread functions belong to a common and known low-dimensional subspace represented by} $\mB\in\C^{n\times s}$, i.e.,
\begin{align}
\label{eq: supspace assumption}
\vg_k = \mB\vh_k,
\end{align} 
where $\vh_k\in\C^s$ is the unknown orientation of $\vg_k$ in this subspace.  \kw{}{As is pointed out in \cite{yang2016super}, the subspace assumption is reasonable in several application scenarios. Moreover,} it has been extensively used in the literature, see for example \cite{ahmed2013blind,chi2016guaranteed,yang2016super,li2019atomic,li2019rapid}.

For any $\tau\in[0,1)$, define the vector $\va_{\tau}\in\C^n$ as
\begin{align*}
\va_{\tau} = \begin{bmatrix}
1 & e^{-2\pi i\tau \cdot 1}  &&\cdots& e^{-2\pi i \tau\cdot (n-1)}\numberthis\label{eq:atom_def}
\end{bmatrix}^\tran.
\end{align*}
Let $\vb_j\in\C^s$ be the $j$th column vector of $\mB^\tranH$. If we define the matrix $\mX^\natural\in\C^{s\times n}$ as
\begin{align}
\label{eq: data matrix}
\mX^\natural =\sum_{k=1}^{r}d_k\vh_k\va_{\tau_k}^\tran,
\end{align}
then under the subspace assumption \eqref{eq: supspace assumption} \msh{}{and using the lifting trick \cite{ahmed2013blind, candes2013phaselift, choudhary2014identifiability, chi2016guaranteed, yang2016super, li2016identifiability, ye2016compressive, ling2017blind, li2019rapid}}, the observation model \eqref{eq: discrete convolution} can be reformulated as a linear measurement of $\mX^\natural$:
\begin{align}
\label{eq: measurements}
\vy[j] = \big\langle \vb_j\ve_j^\tran, \sum_{k=1}^{r}d_k\vh_k\va_{\tau_k}^\tran  \big\rangle\text{ for } j=0,\cdots, n-1,
\end{align}
where the inner product of two matrices is given by $\la\mA,\mB\ra = \trace\lb\mA^*\mB\rb$,  $\ve_j$ is $(j+1)$th column of the $n\times n$ identity matrix $\mI_n$, and throughout this paper vectors and matrices are indexed starting with zero. Moreover, we can further rewrite \eqref{eq: measurements} in the following compact form,
\begin{align}
\label{eq: linear sampling}
\vy = \calA(\mX^\natural),
\end{align}
where $\calA:\C^{s\times n}\rightarrow \C^n$ is a linear operator defined by $[\calA(\mX)]_j=\langle\vb_j\ve_j^\tran,\mX\rangle$. %\kw{}{Letting $\calA^\ast$ be the adjoint operator of $\calA$, then it is not hard to show that $\calA^\ast(\vy) = \sum_{j=0}^{n-1}\vy[j]\vb_j\ve_j^\tran$.}
\kw{}{The adjoint of the operator $\calA(\cdot)$, denoted $\calA^*(\cdot)$, is defined as $\calA^\ast(\vy) = \sum_{j=0}^{n-1}\vy[j]\vb_j\ve_j^\tran$.}

\kw{}{Based on the above reformulation of  blind super-resolution under the subspace assumption, it can be seen that the key is to recover $\mX^\natural$ from the linear measurement vector $\vy$.}  Once $\mX^\natural$ is reconstructed, the frequency components can be extracted from $\mX^\natural$ by the subspace methods which will be detailed in Section \ref{freq retrieval}. After the frequency components are obtained,  $\{d_k,\vh_k\}$ can be recovered by 
solving a least squares system. Moreover, due to the multiplicative form of $d_k$ and $\vh_k$ in \eqref{eq: data matrix}, we only expect to recover them separately up to a scaling ambiguity. \kw{}{Thus, we will assume that $\vecnorm{\vh_k}{2}=1$ without loss of generality.}

\cjc{
Note that the formulations in \eqref{eq: discrete convolution} and \eqref{eq: linear sampling}  are by no means new and they have been utilized in \cite{yang2016super}. Moreover, when the point spread function $\vg$ is shared among all point sources (i.e., the stationary case), \eqref{eq: discrete convolution} reduces to the blind sparse spikes deconvolution model considered in  \cite{chi2016guaranteed}. To recover the target matrix $\mX^\natural$ from the linear measurements $\vy$, following the approach developed in \cite{tang2013compressed} for spectrally sparse signal recovery, a similar atomic norm minimization method is proposed in  \cite{yang2016super},
\begin{align}
	\label{eq atomic norm minimization}
	\min_{\mX} ~\|\mX\|_{\calB} \text{ subject to }y=\calA(\mX),
\end{align}
where the atomic norm $\|\mX\|_{\calB}$ is defined as 
	\begin{align*}
		\|\mX\|_{\calB} :=\inf \{ t>0: \mX\in t\cdot\conv(\calB) \} = \inf_{d_k, \tau_k, \|\vh_k\|_2=1}\bigg\{ \sum_{k=1}^r d_k: \mX = \sum_{k=1}^r d_k \vh_k \va_{\tau_k}^\tranH, d_k>0 \bigg\},
	\end{align*}
The successful recovery guarantee of \eqref{eq atomic norm minimization} is studied in \cite{yang2016super}, while the robust analysis is provided separately in  \cite{li2019atomic}. Note  that for  spectrally sparse signal recovery, in addition to  atomic norm minimization, there are also methods which exploit the low rank property of the structured matrix formed from the signal \cite{chen2014robust, cai2018spectral, cai2019fast}. This motivates  us to develop a low rank approach for blind super-resolution.

}
%%%analysis %

\subsection{Exploiting the low rank structure: Vectorized Hankel Lift}
\label{sec: vectoried Hankel}

%To overcome these limitations, in this papr  we introduce a new method to promote the low dimensional structure of $\mX^\natural$. 
%To motivate our approach for reconstructing $\mX^\natural$ from  $\vy$, 
We start with a brief view of spectrally sparse signal recovery based on the hidden low rank structure. Let $x(t)$ be a spectrally sparse signal consisting of $r$ complex sinusoids,
\begin{align*}
x(t) = \sum_{k = 1}^{r} d_k e^{-2\pi i t \tau_k}.
\end{align*}
Let $\vx=[x(0),\cdots,x(n-1)]^\tran$ be a vector of length $n$ which is obtained by sampling $x(t)$ at $n$ contiguous, equally-spaced points. In a nutshell, spectrally sparse signal recovery is about reconstructing the signal $\vx$ from its partial samples.
\kw{}{Recalling the definition of $\va_\tau$ in \eqref{eq:atom_def},} we can represent $\vx$ as
\begin{align}
\label{eq: spectrally sparse}
\vx %=\begin{bmatrix}
%x_0&x_1&\cdots&x_{n-1}
%\end{bmatrix}
= \sum_{k=1}^rd_k \va_{\tau_k}^\tran.
\end{align}
Let $\calH$ be a linear operator which maps a vector $\vx$ into an $n_1\times n_2$ Hankel matrix,
\begin{align}
\label{eq: standard hankel}
\calH(\vx) = \begin{bmatrix}
x_0 & x_1 &\cdots & x_{n_2-1}\\
x_1 & x_2 &\cdots & x_{n_2}  \\
\vdots& \vdots&\ddots & \vdots\\
x_{n_1-1}&x_{n_1}&\cdots & x_{n-1}
\end{bmatrix}\in\C^{n_1\times n_2},
\end{align}
where $x_i$ is the $i$th entry of $\vx$ and $n_1+n_2 = n+1$. Without loss of generality, we assume $n_1=n_2=(n+1)/2$ in this paper. Due to the particular expression of $\vx$ in \eqref{eq: spectrally sparse}, it is not hard to see that the rank of $\calH(\vx)$ is at most $r$ according to the Vandermonde decomposition of $\calH(\vx)$ \kw{}{\cite{chen2014robust}}. \kw{Also add another reference where Hankellization is first proposed.}{}%Following this route, \cite{chen2014robust} proposes an Enhanced Matrix Completion (EnMC) algorithm and bridges the gap between the spectrally sparse signal recovery and matrix completion.
%Then an Enhanced Matrix Completion (EnMC) algorithm is proposed to recover the frequencies. 
%In comparison with atomic norm minimization, the EMaC method requires a much milder incoherence condiion. Moreover, low rank Hankel matrix factorization sheds light on various fast and provable algorithms for super-resolution \cite{cai2018spectral,cai2019fast,zhang2018multichannel,chen2019exact}.

%It is worth noting that the matrix $\mX^\natural\in\C^{s\times n}$ is not only an ensemble of spectrally sparse signals but also a rank-r matrix when $r\ll \min\{s,n\}$. Intuitively, the low-dimensional structure of $\mX^\natural$ ca be efficiently exploited by constructing a block-diagonal matrix, where each block is the Hankelization of each row of $\mX^\natural\in\C^{s\times n}$. In fact, the above block-diagonal procedure can obtain a rank-$sr$ matrix in $\C^{n_1s\times n_2s}$. However, this procedure do not consider the structure where each row of $\mX^\natural$ share the same frequencies. In constract, 
%we can obtain a low rank block  diagonal matrix whose each block is a rearrange each row as a Hankel matrix 
%Following the line of low-rank Hankel matrix optimization, 
%we introduce a vectorized Hankel lifting technique to promote the low rank and spectrally sparse structures of the target matrix simultaneously.
Note  that the expression for the data matrix $\mX^\natural$ in \eqref{eq: data matrix} is overall similar to that for the spectrally sparse vector $\vx$ in \eqref{eq: spectrally sparse}, except that the weights $d_k\vh_k$ in front of $\va_{\tau_k}^\tran$ in \eqref{eq: data matrix} are vectors and consequently $\mX^\natural$ is a matrix rather than a vector.  Intuitively, if we treat each column of $\mX^\natural$ as a single element and form a matrix in the same fashion as in \eqref{eq: standard hankel}, it can be expected that the resulting matrix is also low rank. This is indeed true. Specifically, let $\calH$ be the vectorized Hankel lifting operator which maps a matrix $\mX\in\C^{s\times n}$ with columns $\{ \vx_j \}$ into an $sn_1\times n_2$ matrix,
\begin{align}
\label{eq: vec hankel}
\calH(\mX) = \begin{bmatrix}
\vx_0 & \vx_1 &\cdots & \vx_{n_2-1}\\
\vx_1 & \vx_2 &\cdots & \vx_{n_2}  \\
\vdots& \vdots&\ddots & \vdots\\
\vx_{n_1-1}&\vx_{n_1}&\cdots & \vx_{n-1}
\end{bmatrix}\in\C^{sn_1\times n_2},
\end{align}
where $n_1+n_2 = n+1$. %To avoid the misunderstanding, we claim that when the notation $\calH(\cdot)$ acts on a matrix, it denotes the vectorized Hankel lifting operator.
\kw{}{To distinguish the matrix $\calH(\mX)$ in \eqref{eq: vec hankel} from the one in \eqref{eq: standard hankel}, we refer to $\calH(\mX)$ as the {\em vectorized Hankel matrix associated with}  $\mX$.}
Then a simple algebra yields that the vectorized Hankel matrix $\calH(\mX^\natural)$ associated with $\mX^\natural$ \kw{}{appearing in the blind super-resolution problem} admits the following decomposition:
\begin{align*}
\calH(\mX^\natural) = \mE_{\vh, L}\diag(d_1,\cdots, d_r) \mE_R^\tran,\numberthis\label{eq:vandermonde}
\end{align*}
where the matrices $\mE_{\vh, L}$ and $\mE_R$ are given by
\begin{align}
\label{eq: left}
\mE_{\vh, L} &= \begin{bmatrix}
\vh_1 & \vh_2 &\cdots& \vh_r\\
\vh_1 e^{-2\pi i \tau_1\cdot 1} & \vh_2 e^{-2\pi i \tau_2\cdot 1} &\cdots & \vh_r e^{-2\pi i \tau_r\cdot 1}\\
\vdots  & \vdots & \ddots & \vdots\\
\vh_1 e^{-2\pi i \tau_1\cdot(n_1-1)} &\vh_2e^{-2\pi i \tau_2\cdot (n_1-1)} &\cdots &\vh_re^{-2\pi i \tau_r\cdot (n_1-1)}\\
\end{bmatrix}\in\C^{s n_1\times r}
\end{align}
and 
\begin{align}
\label{eq: right}
\mE_R &= \begin{bmatrix}
1 & 1 & \cdots & 1\\
e^{-2\pi i \tau_1} &e^{-2\pi i \tau_2}   &\cdots & e^{-2\pi i \tau_r}\\
\vdots &\vdots &\ddots &\vdots \\
e^{-2\pi i \tau_1\cdot (n_2-1)} &e^{-2\pi i \tau_2\cdot (n_2-1)}   &\cdots & e^{-2\pi i \tau_r\cdot (n_2-1)}\\ 
\end{bmatrix}\in\C^{n_2\times r}.
\end{align}
%and $\mD = $ is a diagonal matrix. % and $\mH=[\vh_1,\cdots, \vh_r]\in\C^{s\times r}$. 
%If all frequencies $\tau_k$ and orientations $\vh_k$ are distinct and all $d_k$ are non-zeros, $\calH(\mX^\natural)$ is obviously a low rank matrix when $r\ll  \min(s n_1,n_2)$. Therefore, assuming 
%\begin{align}
%\rank(\calH(\mX^\natural)) = r\ll \min(s n_1,n_2),
%\end{align}
%we ask when and under what condition the target matrix $\mX^\natural$ can be recovered from the underdetermined linear system \eqref{eq: linear sampling}.
It follows immediately that the rank of $\calH(\mX^\natural)$ is at most $r$ and thus it is a low rank matrix when 
$r$ is smaller than $\min(s n_1,n_2)$. 

In this paper we adopt the popular nuclear norm minimization to exploit the low rank structure of $\calH(\mX^\natural)$, yielding a convex approach for the reconstruction of $\mX^\natural$ which is also referred to {\em Vectorized Hankel Lift}.  
Exact
recovery guarantee will be established based on certain assumptions on the subspace matrix $\mB$ in \eqref{eq: supspace assumption}.
%Exact recovery guarantee will be established based on certain assumptions on the subspace matrix $\mB$ in \eqref{eq: supspace assumption}.}
%As already mentioned in Section \ref{sec: vectoried Hankel}, due to the similarity between the data matrix \eqref{eq: data matrix} and the spectrally sparse signal \eqref{eq: spectrally sparse}, we can that the vectorized Hankel matrix corresponding to $\mX^{\natural}$ is low rank. In this paper we adopt the commonly used convex relaxation approach to exploit the intrinsic low rank structure hidden in $\mX^{\natural}$ and then develop a convex method based on the vectorized Hankel Lift for the blind super-resolution problem. As we will see, the proposed nuclear norm minimization method can exactly recover the data matrix under a mild incoherence conditions and provided the measurement is large enough. Compared with the atomic norm minimization method in \cite{yang2016super}, our results do not need the random rotation assumption.
\subsection{{\color{black}Other} Related Work}
{\color{black} In this section, we give  a  brief introduction of other related work in addition to \cite{chi2016guaranteed, yang2016super, li2019atomic}.}
When the point spread functions are known and do not depend on the locations of the point sources, the measurement model \eqref{eq: discrete convolution} reduces to
\begin{align}
\label{eq: super resolution}
\vy[j] =  \sum_{k=1}^{r}d_ke^{-2\pi i \tau_k\cdot j}\text{ for } j=0,\cdots, n-1.
\end{align}
\kw{}{In this case,  estimating the locations  $\tau_k$ and amplitudes $d_k$ from $\vy$ is typically known as super-resolution or line spectrum estimation.  This problem arises
	in many areas of science and engineering, such as array imaging \cite{krim1996two,stoica2005spectral}, Direction-of-Arrival (DOA) estimation \cite{schmidt1982signal}, and inverse scattering \cite{fannjiang2010compressive}}.  The solution to this problem can date back to Prony \cite{prony1795essai}. %and has been investigated extensively in the literature \cite{kay1988modern,stoica2005spectral}. 
In the Prony's method, \kw{}{the locations are retrieved from the roots of a polynomial whose coefficients form an 
	annihilating filter for the observation vector. Nevertheless, the Prony's method is numerical unstable despite that in the noiseless setting successful retrieval is guaranteed in exact arithmetic. 
	As alternatives, several subspace methods have been developed, including
	MUSIC \cite{schmidt1986multiple},  ESPRIT \cite{roy1989esprit}, and the matrix pencil method \cite{hua1990matrix}. In the absence of noise, the subspace methods are also able to identify the locations of the point sources. When there is noise, the stability of these methods has  been studied in \cite{liao2016music,liao2015music,li2020super,moitra2015super} in the regime when $\Delta>C/n$, where $\Delta$ is the minimum (wraparound) separation between any two locations, and $C>1$ is a proper numerical constant.  The analysis essentially relies on the lower bound on the smallest singular value of the Vandermonde matrix.
	The super-resolution limits of MUSIC and ESPRIT have been discussed in \cite{li2019conditioning, li2020super}, which is about the noise level that can be tolerated in order for the algorithms to achieve super-resolution when $\Delta<1/n$. In this regime, it is difficult to obtain a general and nontrivial lower bound on the smallest singular value of the Vandermonde matrix. Thus, the super-resolution limits in \cite{li2019conditioning, li2020super} are established for point sources whose locations obey certain configurations. }
%It is beyond the scope of this work to review all existing literature. Instead we 
%The solution to this problem can date back to Prony \cite{prony1795essai}. In Prony's method, the frequencies are encoded as the zeros of a $r$-order trigonometric polynomial. Then the frequencies can be estimated by finding the root of this polynomial. This method do not need to make any assumption on the locations of frequencies and can achieve high resolution. However, the Prony's method is highly sensitive to noise. In order to robustly estimate the frequencies, various subspace methods are developed, including the MUSIC algorithm \cite{schmidt1986multiple}, ESPRIT method \cite{roy1989esprit}, matrix pencil method \cite{hua1990matrix} and so on. For instance, the MUSIC algorithm decomposes the data matrix into signal subspace and noise subspace using the singular value decomposition and retrieves frequency information from the signal subspace. These subspace methods have been widely used in applications since they are capable of high-resolution recovery and provide exact recovery in the presence of no noise. In recent years, the stability analysis of MUSIC \cite{liao2016music,liao2015music} and the super-resolution limit of ESPRIT algorithm are established in \cite{li2020super}. The performance analysis of one dimensional matrix pencil method is also provided in \cite{moitra2015super}.

%%%%%%%%%%%%%%%%%%%%%%%%%%%%%%%%%%%%%%
% cite Donoho's work on super-resolution. 
Inspired by compressed sensing and low rank matrix reconstruction, various optimization based methods have also been developed for super-resolution and related problems. In \cite{candes2014towards}, the total variation (TV) minimization method is used to resolve the locations of the point sources. It is shown that when $\Delta>C/n$, exact recovery of the locations can be guaranteed. Moreover, the solution to the TV minimization problem can be computed by solving a semidefinite programming (SDP). Note, in the discrete setting, super-resolution can be interpreted within the framework of compressed sensing. However, since the measurement model in super-resolution considers the low end spectrum, and hence is deterministic, the typical successful recovery guarantee for compressed sensing \cite{candes2006robust} cannot sufficiently explain the success of the TV norm minimization method for super-resolution. The robustness of TV norm minimization is studied in \cite{candes2013super}, and the super-resolution problem of non-negative point sources is considered in \cite{duval2015exact,schiebinger2018superresolution,denoyelle2017support,duval2020characterization,eftekhari2019sparse}. \kw{}{Moreover, super-resolution from time domain samples has been investigated in \cite{bendory2016robust,bernstein2019deconvolution,eftekhari2019sparse}.}
\kw{}{When only partial entries of $\vy$ are observed in \eqref{eq: super resolution},  filling in the missing entries  is indeed the spectrally sparse signal recovery problem.} Motivated by the work in \cite{chandrasekaran2012convex}, an atomic norm minimization  method (ANM) is proposed for this problem. 
It is shown that $\vy$ can be reconstructed from $\calO(r\log r\log n)$ random samples provided the frequencies are well separated. ANM has been extended in \cite{li2015off,yang2016exact}  to handle the case when multiple measurement vector (MMV) are available. In the setting of MMV,  multiple snapshots of observations are collected and they share the same frequencies information. \kw{}{As  already mentioned previously, the Hankel matrix corresponding to $\vy$ is a low rank matrix. Consequently,  spectrally sparse signal recovery  can be reformulated as a low rank Hankel matrix completion problem, and replacing the rank objective with the nuclear norm yields  a recovery method known as EMaC. It  has been shown that EMaC is able to
	reconstruct a spectrally sparse signal with high probability provided the number of observed entries
	is $O(r\log^4n)$. 
	{ In \cite{yang2018sparse}, a formulation of EMaC for the multi-snapshots scenario is  presented.  
	}
	Additionally,  based on the low rank property of the Hankel matrix,  provable non-convex algorithms have been developed in \cite{cai2018spectral,cai2019fast}  to reconstruct spectrally sparse signals.}  Later, Zhang et.al. \cite{zhang2018multichannel} extend one of the non-convex algorithms to complete an MMV matrix, and in this work the same vectorized Hankel lift technique is used to  exploit the hidden low rank structure. %Recently, \cite{chen2019exact} study a structured matrix completion problem, where each row of the data matrix in the Fourier domain can be regarded as a spectrally sparse signal. In \cite{chen2019exact}, they use the nuclear norm of a block-diagonal Hankel matrix to promote the low-dimensional structure and establish the recovery guarantee.
Recently, \kw{}{a matrix completion problem based on the low dimensional structure in the transform domain is studied in \cite{chen2019exact}.  More precisely, it is assumed that after applying the Fourier transform to each column of the target matrix,  each row of the resulting matrix will be a spectrally sparse signal. Since it does not require the spectrally signals share the same frequency information, a 
	block-diagonal low rank structure is adopted to exploit the low dimensional structure.  Exact recovery guarantee is also established provided the sampling complexity is nearly optimal.}

%The blind super-resolution model we investigated here is similar to the blind deconvolution and sensor calibration model, where the task is to recover the signal $\vx^\natural$ and point spread function $\vg^\natural$ from the measurements 
%$$\vy = \diag(\vg^\natural)\mA\vx^\natural.$$
%In \cite{ahmed2013blind}, Ahmed, Recht and Romberg assumed that the point spread function belongs to known low-dimensional subspace and applied a lifting technique to converted the blind deconvolution problem into nuclear norm minimization problem. They establish the performance guarantee in the case where $\mA$ is a Gaussian matrix. Ling and Strohmer \cite{ling2015self} address the sparse signal deconvolution problem under the framework in \cite{ahmed2013blind}. The framework in \cite{ahmed2013blind} is further extended to the  problem of simultaneous blind deconvolution and demixing \cite{ling2017blind, jung2017blind}. The identifiability of blind deconvolution and sensor calibraton was also studied in \cite{li2016identifiability, choudhary2014identifiability} and \cite{eldar2020sensor}.
\kw{}{Apart from super-resolution and spectrally sparse signal recovery, our work is also related to blind deconvolution. After the reparametrization of the signal and blurring kernel under the subspace assumption \cite{ahmed2013blind}, the goal in blind deconvolution is to recover the vectors $\vx^\natural$ and $\vh^\natural$ simultaneously  from the measurement vector in the form of  
	$$\vy = \diag(\mB\vh^\natural)\mA\vx^\natural.$$
	Noting that the above measurement model can be reformulated as a linear operation on a rank-$1$ matrix, a nuclear norm minimization method is proposed for blind deconvolution. The performance guarantee of the method has been  established in the case when $\mB$ is a partial Fourier matrix and $\mA$ is a Gaussian matrix. A non-convex gradient descent approach for blind deconvolution is developed and analyzed in \cite{li2019rapid}, and the identifiability problem is studied in \cite{li2016identifiability, choudhary2014identifiability}.
}

\subsection{Notation and Organization}
Throughout this work, vectors, matrices and operators are denoted by bold lowercase letters, bold uppercase letters and calligraphic letters, respectively. Note that  vectors and matrices are indexed starting with zero. \kw{}{The letter} $\calI$ denotes the identity operator. We use $\mG_i$ to denote the matrix defined by
\begin{align}
\label{eq: Hankel basis}
\mG_i = \frac{1}{\sqrt{w_i}}\sum_{\substack{j+k=i\\0\leq j\leq n_1-1\\ 0\leq k \leq n_2-1}} \ve_j\ve_k^\tran,
\end{align}
where $w_i$ is a constant defined as 
\begin{align}
\label{def wi}
w_i = \#\{(j,k) | j+k=i, 0\leq j \leq n_1-1, 0\leq k\leq n_2-1 \}.
\end{align}
In fact, $ \{\mG_i\}_{i=0}^{n-1}$ forms an orthonormal basis of the space of $n_1\times n_2$ Hankel matrices.

We use $\vx[i]$ to denote the $i$th entry of $\vx$ and $\mX_{j,k}$ or $\mX[j,k]$ to denote the $(j,k)$th entry of $\mX$. Additionally, the $i$th row and $j$th column of $\mX$ are denoted by $\mX_{i,\cdot}$ and $\mX_{\cdot,j}$, respectively. Furthermore, we  use the MATLAB notation $\mX(i:j,k)$ to denote a vector of size $j-i+1$, with entries $\mX_{i,k},\cdots,\mX_{j,k}$, i.e.,
\begin{align*}
\mX(i:j,k)=\begin{bmatrix}
\mX_{i,k},\cdots, \mX_{j,k}
\end{bmatrix}^\tran.
\end{align*}
For any matrix $\mX$, $\trace(\mX), \mX^\tranH, \mX^\tran$ and $\vect(\mX)$ are used to denote the trace, conjugate transpose, transpose and column vectorization of $\mX$, respectively. Also, $\opnorm{\mX}$, $\fronorm{\mX}$ and $\nucnorm{\mX}$ denote its spectral norm, Frobenius norm and nuclear norm, respectively.  

We use $\diag(\va)$ to denote the diagonal matrix specified by the vector $\va$. For a natural number $n$, we use $[n]$ to denote the set $\{0,\cdots,n-1\}$. 
For any two matrices $\mA,\mB$ \kw{}{of the same size}, their inner product is defined as  $\la\mA,\mB \ra = \trace(\mA^\tranH\mB)$. 
Moreover, we will \kw{}{refer to} $\mA\circ\mB,\mA\otimes\mB,\mA\odot\mB$ as the Hadamard, Kronecker product and Khatri-Rao product respectively. \kw{}{More precisely}, the Hadamard product is  the element-wise product of two matrices and the Kronecker product between  $\mA$ and $\mB$ is given by
\begin{align*}
\mA\otimes\mB = \begin{bmatrix}
\mA_{11}\mB & \mA_{12}\mB &\cdots &\mA_{1r}\mB\\
\mA_{21}\mB & \mA_{22}\mB &\cdots &\mA_{2r}\mB\\
\vdots &\vdots &\ddots &\vdots\\
\mA_{s1}\mB & \mA_{s2}\mB &\cdots &\mA_{sr}\mB\\
\end{bmatrix}\in\C^{sn_1\times rn_2},
\end{align*}
and the Khatri-Rao product is given by
\begin{align*}
\mA\odot\mB = \begin{bmatrix}
\va_1\otimes\vb_1 &\cdots &\va_r\otimes\vb_r\\
\end{bmatrix}\in\C^{sn_1\times r},
\end{align*}
where $\va_i$, $\vb_i$ denote the $i$th column of $\mA$ and $\mB$, respectively. By the application of \kw{}{the} Khatri-Rao product, we can rewrite $\mE_{\vh, L}$ in \eqref{eq: left} as $\mE_{\vh, L} = \mE_L\odot \mH$, where $\mE_L$ and $\mH$ are matrices given by
\begin{align}
\label{eq: Eleft}
\mE_L = \begin{bmatrix}
1 & 1 & \cdots & 1\\
e^{-2\pi i \tau_1} &e^{-2\pi i \tau_2}   &\cdots & e^{-2\pi i \tau_r}\\
\vdots &\vdots &\ddots &\vdots \\
e^{-2\pi i \tau_1\cdot (n_1-1)} &e^{-2\pi i \tau_2\cdot (n_1-1)}   &\cdots & e^{-2\pi i \tau_r\cdot (n_1-1)}\\ 
\end{bmatrix}\in\C^{n_1\times r}
\end{align}
and $\mH = \begin{bmatrix}\vh_1&\cdots &\vh_r
\end{bmatrix}\in\C^{s\times r}$. 

\kw{}{Throughout this paper, $c, c_1, c_2, \cdots$ denote absolute positive numerical constants whose values may vary from line to line. The notation $n\gtrsim f(m)$ means that there exists an absolute constant $c > 0$ such that
	$n \geq c\cdot f(m)$. Similarly, the notation $n\lesssim f(m)$ means that there exists an absolute constant $c>0$
	such that $n \leq  c\cdot f(m)$.}

% Organization  
The rest of this paper is organized as follows. \kw{}{Section \ref{sec: recovery guarantee} begins with the presentation of Vectorized Hankel Lift  and its recovery guarantee,} followed by the retrieval of  the point source locations. \kw{}{Numerical results to demonstrate the performance of Vectorized Hankel Lift is presented at the end of Section \ref{sec: recovery guarantee}}. The proofs of the main result are provided from Section \ref{sec:proofs} to Section \ref{aux}. Finally, we conclude this paper with a few future directions in Section \ref{conclusion}.

\section{Vectorized Hankel Lift and Frequency Retrieval}
\label{sec: recovery guarantee}
\subsection{Vectorized Hankel Lift and recovery guarantee}

Under the assumption that $\calH(\mX^\natural)$ is a low rank matrix, it is natural to reconstruct $\mX^\natural$ by solving the affine rank minimization problem
\begin{align}
\label{opt: rank mini}
\min \rank(\calH(\mX))\text{ s.t. }\vy = \calA(\mX).
\end{align}
However, the problem \eqref{opt: rank mini} is computational intractable due to the rank objective. Since the nuclear norm of a matrix is the tightest convex envelope of the matrix rank, seeking a solution with a small nuclear norm is also able to enforce the low rank structure. 
Therefore, instead of solving \eqref{opt: rank mini} directly, we consider the following nuclear norm minimization problem  for the recovery of  $\mX^\natural$:
\begin{align}
\label{convex opti}
\min_{\mX\in\C^{s\times n}} \nucnorm{\calH(\mX)}\text{ s.t. } \calA(\mX) = \vy.
\end{align}
In this paper, we refer to \eqref{convex opti} as  Vectorized Hankel Lift. 
There are many existing software packages that can be used to solve this problem. Thus we restrict our attention on the theoretical recovery guarantee of Vectorized Hankel Lift and investigate when the solution of \eqref{convex opti} coincides with $\mX^\natural$. 

We need to reformulate \eqref{convex opti} in order to facilitate the analysis. Let $\mZ$ be an $sn_1\times n_2$ matrix which can be expressed as
\begin{align*}
\mZ = \begin{bmatrix}
\vz_{0,0}  &\cdots &\vz_{0,n_2-1}\\
\vdots &\ddots &\vdots\\
\vz_{n_1-1,0} &\cdots &\vz_{n_1-1,n_2-1}\\
\end{bmatrix}\in\C^{sn_1\times n_2},
\end{align*}
where $\vz_{j,k}=\mZ(js:(j+1)s-1,k)$ for $j=0,\cdots,n_1-1$ and $k=0,\cdots,n_2-1$. Recall that $\calH$ is the vectorized Hankel lift operator defined in \eqref{eq: vec hankel}. The adjoint of $\calH$, denoted $\calH^\ast$, is a linear mapping from $sn_1\times n_2$ matrices to matrices of size ${s\times n}$. In particular, for any matrix $\mZ\in\C^{sn_1\times n_2}$, the $i$th column of $\calH^\ast(\mZ)$ is given by
\begin{align*}
\calH^\tranH(\mZ)\ve_i = \sum_{\substack{j+k = i\\0\leq j\leq n_1-1\\ 0\leq k \leq n_2-1}}\vz_{j,k}, \text{ for } i=0,\cdots, n-1.
\end{align*} Letting $\calD^2 = \calH^\tranH\calH$, we have 
\begin{align*}
\calD^2(\mX) = \begin{bmatrix}
w_0\vx_0 &\cdots &w_{n-1}\vx_{n-1}
\end{bmatrix},\quad \mbox{for any }\mX\in\C^{s\times n},
\end{align*}
%for any matrix $\mX\in\C^{s\times n}$,
where the scalar $w_i$ is defined as
\begin{align*}
w_i = \#\{(j,k) | j+k=i, 0\leq j \leq n_1-1, 0\leq k\leq n_2-1 \} \text{ for }i=0,\cdots, n-1.
\end{align*}
Moreover, we define $\calG = \calH\calD^{-1}$. Then 
\begin{align}
\label{eq: calG}
\calG(\mX)=\sum_{i=0}^{n-1}\calG \lb\vx_i\ve_i^\tran\rb = \sum_{i=0}^{n-1}\mG_i\otimes\vx_i,
\end{align}
where the set of matrices $\{\mG_i\}_{i=0}^{n-1}$ defined in \eqref{eq: Hankel basis} forms an orthonormal basis of the space of $n_1\times n_2$ Hankel matrices. The adjoint of $\calG$, denoted $\calG^\ast$, is given by $\calG^\ast=\calD^{-1}\calH^\ast$. Additionally, $\calG$ and $\calG^\ast$ satisfy
\begin{align*}
\calGT\calG = \calI\quad\quad
\opnorm{\calG}=1,\quad\mbox{and } \opnorm{\calGT}\leq 1.
\end{align*}
Letting $\mZ = \calH(\mX) = \calG\calD(\mX)$, it can be readily verified that
\begin{align*}
\calD(\mX) = \calGT(\mZ)\quad\mbox{and}\quad(\calI - \GGT)(\mZ) = \bzero. \end{align*}  
Furthermore, define $\mD = \diag(\sqrt{w_0},\cdots, \sqrt{w_{n-1}})$. We have $\calA\calD(\mX) = \mD\calA(\mX)$ for any matrix $\mX$.
Therefore, the optimization problem \eqref{convex opti} can be reformulated as 
\begin{align}
\label{opti}
\min_{\mZ\in\C^{sn_1\times n_2}} \nucnorm{\mZ} \text{ s.t. } \mD\vy = \calA\calGT(\mZ) \text{ and } (\calI - \GGT)(\mZ) = \bzero.
\end{align}

Due to the equivalence between \eqref{convex opti} and \eqref{opti}, it suffices to investigate the recovery guarantee  of  \eqref{opti}.
To this end, we make two assumptions.
\cjc{
\begin{assumption}
	\label{assumption 1}
	The column vectors $\{\vb_j\}_{j=0}^{n-1}$ of the subspace matrix $\mB^\tranH$ are independently and identically sampled  from a distribution $F$ which obeys the following properties: %\cite{candes2011probabilistic}:
	\begin{itemize}
		\item Isotropy property. A distribution $F$ obeys the isotropy property if for $\vb\sim F$,
		\begin{align}
		\label{eq: isotropy}
		\E{\vb\vb^\tranH} = \mI_s.
		\end{align}
		\item Incoherence property. A distribution $F$ satisfies the incoherence property with parameter $\mu_0$ if for $\vb\sim F$,
		\begin{align}
		\label{eq: incoherence b}
		\max_{0\leq \ell \leq s-1} \left| \vb[\ell] \right|^2\leq \mu_0
		\end{align}
		holds, where $\vb[\ell]$ denotes the $\ell$th entry of $\vb$.
		\item For $\vb\sim F$, the sampled column vectors $\{\vb_j\}_{j=0}^{n-1}$ satisfy
		\begin{align} 
		\label{eq: lower bound b}
			\min_{0\leq j \leq n-1}\vecnorm{\vb_j}{2}^2 \geq 1.
		\end{align}
		\end{itemize}
\end{assumption}
}
\msh{}{The first two conditions (\ref{eq: isotropy}) and (\ref{eq: incoherence b}) in Assumption \ref{assumption 1} are first introduced in \cite{candes2011probabilistic} in the context of compressed sensing and these two properties are also made in \cite{chi2016guaranteed,yang2016super,li2019atomic} for the blind super-resolution problem.} If $F$ has mean zero, the isotropy condition states that the entries of $\vb$ have unit variance and are uncorrelated, which implies $\mu_0\geq 1$ in the incoherence property. The lower bound  $\mu_0=1$ is achievable by several examples, for instance, when the components  of $\vb$ are 
Rademacher random variables taking the values $\pm 1$  with equal probability or  $\vb$ is  uniformly sampled  from
the rows of a Discrete Fourier Transform  (DFT) matrix. \msh{}{In addition to (\ref{eq: isotropy}) and (\ref{eq: incoherence b}), we also need (\ref{eq: lower bound b}) to establish our main result. However, we would like to point out that (\ref{eq: lower bound b}) is not a stringent condition, but holds (either trivially or with high probability) by many common random ensembles. 
\begin{itemize}
    \item If the components of $\vb$ are Rademacher random variables or $\vb$ is uniformly sampled from the rows of a DFT matrix, it is trivial that for any fixed $j\in[n]$, $\vecnorm{\vb_j}{2}^2 = s\geq 1$. 
	\item Suppose the components of $\vb$ are independently and identically sampled from a distribution with mean zero and unit variance, such as the uniform distribution on the interval $[-\sqrt{3},\sqrt{3}]$. In such case, we can apply the bounded difference inequality to show that (\ref{eq: lower bound b}) holds with high probability, see Lemma~\ref{lem: bounded difference ineqaulity}. 	
\end{itemize}
}

\begin{assumption}
	\label{assumption 2}
	There exists a constant $\mu_1>0$ such that
	\begin{align}
	\label{basic incoherence}
	\sigma_{\min}(\mE_L^\tranH\mE_L) \geq \frac{n_1}{\mu_1}\quad\text{and}\quad\sigma_{\min}(\mE_R^\tranH\mE_R) \geq \frac{n_2}{\mu_1},
	\end{align}
	where $\mE_L$ and $\mE_R$ are given in \eqref{eq: Eleft} and \eqref{eq: right} and $\sigma_{\min}(\cdot)$ denotes the smallest singular value of a matrix.
\end{assumption}

%and the references therein. The isotropy condition states that the entries of $\vb$ have unit variance and uncorrelated, which implies $\mu_0\geq 1$. In fact, the lower bound of $\mu_0$ is achievable by several examples. For instance, the column $\vb$ is uniformly selected from the rows of Fourier matrix $\mF\in\C^{s\times s}$ or its components are independent Rademacher random variables. Our results show that the smaller $\mu_0$, the fewer measurements we need for accurate recovery.
%The incoherence parameter $\mu_1$ in Assumption \ref{assumption 2} is the same as the usual definition
%Similar incoherence property was introduced in 
Assumption~\ref{assumption 2} is the same as the one made in 
\cite{chen2014robust,cai2018spectral,cai2019fast} for spectrally sparse signal recovery. Later, we will show that $\sigma_{\min}(\mE_{\vh,L}^\tranH\mE_{\vh,L}) \geq \frac{n_1}{\mu_1}$ also holds when $\sigma_{\min}(\mE_L^\tranH\mE_L) \geq \frac{n_1}{\mu_1}$, see Lemma~\ref{lem: incoherence 02}.
Recalling the definition of $\mE_L$ and $\mE_R$, this assumption is essentially about the conditioning property  of the Vandermonde matrix.
This property is studied in \cite{liao2016music} through the discrete Ingham inequality \cite{ingham1936some} 
and in \cite{moitra2015super} through the discrete large sieve inequality \cite{vaaler1985some}. 
{\color{black} In particular, it  follows from \cite{moitra2015super} that Assumption~\ref{assumption 2} holds when the minimum wrap-around distance between the frequencies, denoted $\Delta$, satisfies \begin{align*}
    \Delta \geq \frac{2\mu_1/(\mu_1-1)}{n}.\numberthis\label{eq:condition delta}
\end{align*}}%It has been established that  Assumption~\ref{assumption 2} is satisfied  when the minimum wrap-around distance between the frequencies is greater than about $1/n$. %\msh{}{In other words, such strict separation condition required in the analysis of atomic norm based approaches \cite{chi2016guaranteed, yang2016super,li2019atomic} is sufficient but not necessary to guarantee Assumption~\ref{assumption 2} to hold. Therefore, Assumption~\ref{assumption 2} is a milder condition compared to the separation condition.}
%In general, the analysis of low rank matrix completion and recovery relies on the incoherence condition. We claim that the target matrix $\mX^\natural$ obeys a mild incoherence condition with parameter $\mu_1$ if 
%Indeed, \cite{liao2016music} shows that when the minimum wrap-around distance between the frequencies is greater than about $2/n$, this property can be validated. \kw{Also discuss the work of the author from MIT}{}

We are in position to present the main result of this  paper.
\begin{theorem}[Exact recovery guarantee of Vectorized Hankel Lift]
	\label{main result}
	Under Assumptions~\ref{assumption 1} and \ref{assumption 2}, $\mZ^\natural=\calH(\mX^\natural)$ is the unique optimal solution to \eqref{opti} \msh{}{with probability exceeding $1-c_0(sn)^{-c_1} - ns^{-c_2}$, provided that $n\gtrsim \mu_0\mu_1\cdot sr\log^4(sn)$, where $c_0, c_1, c_2$ are absolute constants.}
\end{theorem}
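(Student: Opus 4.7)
The plan is to establish exact recovery via the standard dual certificate route for nuclear norm minimization. Let $\mZ^\natural = \calH(\mX^\natural) = \mU\bSigma\mV^\tranH$ be the compact SVD with $r$ nonzero singular values, and let $T$ denote the tangent space at $\mZ^\natural$ to the manifold of rank-$r$ matrices, with projection $\calP_T(\mZ) = \mU\mU^\tranH\mZ + \mZ\mV\mV^\tranH - \mU\mU^\tranH\mZ\mV\mV^\tranH$. Reformulation \eqref{opti} has two affine constraints, so standard convex analysis for the nuclear norm tells us that $\mZ^\natural$ is the unique minimizer provided that (i) $\calA\calGT$ restricted to $T \cap \text{range}(\calG)$ is injective, and (ii) there exists a dual certificate $\mW$ of the form $\mW = \calG\calAT\mD\valpha + (\calI - \GGT)\bbeta$ satisfying $\calP_T(\mW) = \mU\mV^\tranH$ and $\opnorm{\calP_{T^\perp}(\mW)} < 1$. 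Since $(\calI - \GGT)\bbeta$ lies in the orthogonal complement of $\text{range}(\calG) \supset T$, it only helps reduce $\opnorm{\calP_{T^\perp}(\mW)}$; the nontrivial task is to construct $\valpha$ so that the tangent-space part $\calG\calAT\mD\valpha$ matches $\mU\mV^\tranH$ on $T$ while staying small on $T^\perp$.

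Before constructing $\mW$, I would prove three preparatory estimates on the tangent space. The first is an incoherence bound: using the Vandermonde decomposition \eqref{eq:vandermonde} together with Assumption~\ref{assumption 2} (and Lemma~\ref{lem: incoherence 02} to transfer the Vandermonde singular-value lower bound to $\mE_{\vh,L}$), I would show that for each index $j$, $\fronorm{\calP_T \calG(\vb_j\ve_j^\tran)}^2 \lesssim \mu_0\mu_1 sr/n$, which plays the role of the coherence parameter in matrix completion. The second is a local isometry on $T$: with high probability,
\begin{align*}
\opnorm{\calP_T \calG\calAT\calA\calGT \calP_T - \tfrac{n}{s}\,\calP_T} \le \tfrac{n}{2s},
\end{align*}
proved via a matrix Bernstein inequality applied to the sum of independent rank-one operators $\calP_T\calG(\vb_j\vb_j^\tranH\ve_j\ve_j^\tran)\calGT\calP_T$, where the isotropy property of $F$ supplies the mean. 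The third is an off-support bound: for any fixed $\mZ \in T$, $\opnorm{(\calG\calAT\calA\calGT - \tfrac{n}{s}\calI)(\mZ)}$ is controlled by a Bernstein bound combining the incoherence estimate with a Hankel-structure-aware operator-norm bound.

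With these tools, I would construct $\mW$ via the golfing scheme of Gross. Partition $[n]$ into $L = O(\log(sn))$ disjoint blocks $\Omega_1,\ldots,\Omega_L$ of roughly equal size, let $\calA_\ell$ be the restriction of $\calA$ to $\Omega_\ell$, and set iteratively $\mF_0 = \mU\mV^\tranH$, $\mW_\ell = \mW_{\ell-1} + p_\ell^{-1}\calG\calA_\ell^\ast \calA_\ell\calGT(\mF_{\ell-1})$, $\mF_\ell = \mU\mV^\tranH - \calP_T(\mW_\ell)$. The local isometry lemma (with sampling in $\Omega_\ell$) guarantees a contraction $\fronorm{\mF_\ell} \le \tfrac{1}{2}\fronorm{\mF_{\ell-1}}$, so after $L$ steps $\fronorm{\calP_T(\mW_L) - \mU\mV^\tranH}$ is at most $\fronorm{\mU\mV^\tranH}/(2^L n)$, which can be absorbed by a tiny correction in the free direction $(\calI - \GGT)\bbeta$. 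The off-support bound, applied to each $\mF_{\ell-1}$, telescopes to give $\opnorm{\calP_{T^\perp}(\mW_L)} < 1$. Injectivity on $T\cap \text{range}(\calG)$ follows from the full-sample local isometry.

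The main obstacle will be the restricted isometry estimate for $\calP_T \calG\calAT\calA\calGT \calP_T$, because the operator Bernstein bound requires simultaneous control of (a) the single-index operator norm, which forces the Hankel-basis incoherence estimate $\fronorm{\calP_T\calG(\vb_j\ve_j^\tran)}$ to exploit both Assumption~\ref{assumption 1} and the Vandermonde structure of $\mE_{\vh,L}$ and $\mE_R$, and (b) the variance statistic, where the cross-interactions between different indices generated by the Khatri-Rao form $\mE_L \odot \mH$ must be bounded without relying on random assumptions on the $\vh_k$. Handling (b) without the distributional assumption on $\vh_k$ used in \cite{li2019atomic} is what distinguishes this analysis and is the source of the extra logarithmic factor in $n\gtrsim \mu_0\mu_1 sr\log^4(sn)$.
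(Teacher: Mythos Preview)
Your overall strategy is right, but there is a structural error that breaks the golfing argument as you have written it. You assert that $\text{range}(\calG)\supset T$, so that $(\calI-\GGT)\bbeta$ is orthogonal to $T$. This is false: a generic element $\mU\mA^\tranH+\mB\mV^\tranH$ of $T$ need not have the vectorized Hankel structure, so $T\not\subset\text{range}(\calG)$ and equivalently $\calP_T\GGT\calP_T\neq\calP_T$. This matters immediately for your local isometry: under the isotropy of $\vb_j$ one has $\E[\calAT\calA]=\calI$, so the correct mean of $\calP_T\calG\calAT\calA\calGT\calP_T$ is $\calP_T\GGT\calP_T$, not any scalar multiple of $\calP_T$. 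Consequently your golfing update $\mW_\ell=\mW_{\ell-1}+p_\ell^{-1}\calG\calA_\ell^\ast\calA_\ell\calGT(\mF_{\ell-1})$ does \emph{not} contract the tangent residual: even in expectation,
\[
\E[\mF_\ell]=\bigl(\calP_T-\calP_T\GGT\calP_T\bigr)\mF_{\ell-1},
\]
which is nonzero precisely because $T\not\subset\text{range}(\calG)$. (Your later remark that the leftover residual ``can be absorbed by a tiny correction in the free direction $(\calI-\GGT)\bbeta$'' is also inconsistent with your own claim that this direction is orthogonal to $T$.) The paper repairs this by folding $(\calI-\GGT)$ into \emph{every} golfing step,
\[
\mY^k=\mY^{k-1}+\Bigl(\tfrac{n}{m}\calG\calA_k^\ast\calA_k\calGT+\calI-\GGT\Bigr)\calP_T\bigl(\mU\mVT-\mY^{k-1}\bigr),
\]
so that the residual recursion becomes $\mE^k=\calP_T\bigl(\GGT-\tfrac{n}{m}\calG\calA_k^\ast\calA_k\calGT\bigr)\calP_T(\mE^{k-1})$, and then the correct isometry $\bigl\|\calP_T\calG(\calI-\tfrac{n}{m}\calA_k^\ast\calA_k)\calGT\calP_T\bigr\|\le\tfrac12$ does give Frobenius contraction. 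This still satisfies $\calGT(\bLambda)\in\text{Range}(\calAT)$ because $\calGT(\calI-\GGT)=0$, and the small final residual on $T$ is handled not by any correction term but by an \emph{inexact} certificate condition $\fronorm{\calP_T(\mU\mVT-\bLambda)}\le(16s\mu_0)^{-1}$ in the deterministic optimality theorem.

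A second gap is in the off-support bound. Simply ``telescoping'' $\opnorm{\calP_{T^\perp}(\bLambda)}$ from the Frobenius contraction does not close here: a single Bernstein estimate on $(\tfrac{n}{m}\calG\calA_k^\ast\calA_k\calGT-\GGT)(\mE^{k-1})$ produces bounds in terms of the Hankel-weighted norms $\gfronorm{\mE^{k-1}}$ and $\ginfnorm{\mE^{k-1}}$, not $\fronorm{\mE^{k-1}}$. The paper therefore tracks these two weighted norms through the iteration via a coupled recursion (Lemmas~\ref{lem: key lemma2}--\ref{lem: key lemma4}), after first establishing, by a separate probabilistic argument over a random partition (Lemma~\ref{lem: partition}), that a partition $\{\Omega_k\}$ with the required deterministic Hankel-norm properties exists. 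This two-layer argument, not the cross-interactions of the $\vh_k$, is what produces the $\log^4(sn)$ factor and cannot be replaced by the single off-support Bernstein bound you sketch.
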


{\color{black}\begin{remark}
The sampling complexity established in \cite{yang2016super} for the atomic norm minimization method is $n\gtrsim \mu_0\cdot sr\log^3(sn)$. While this is slightly better than the sampling complexity for Vectorized Hankel Lift, our analysis is based on less stringent assumptions. In our analysis,  the coefficients  $\{\vh_k\}_{k=1}^r$ are not required  to be i.i.d. samples from the uniform distribution on the complex unit sphere, but can be any unit norm vectors. In addition,  noting that the right-hand side of \eqref{eq:condition delta} is about { $2/n$} for moderately large $\mu_1$, which is smaller than $4/n$, the separation required in the main result of \cite{yang2016super}. It is worth noting that the  robust  analysis  of  the  atomic  norm  minimization  method  has  been  studied  in \cite{li2019atomic} and we will leave the robust analysis of Vectorized Hankel Lift for future work.
\end{remark}}

The proof of Theorem~\ref{main result} follows a well established route that has been widely used for compressed sensing and low rank matrix recovery. In a nutshell, a dual variable needs to be constructed to verify the optimality of $\mZ^\natural$. 
That being said, the details
of the proof itself are nevertheless quite involved and technical, and cannot be covered by the results from existing works. In particular, we need to show that there exists a partition of the measurements satisfying a list of desirable properties in order to construct the dual certificate.

%%%%%%%%%%%%
%%%%%%%%%%%%

\subsection{{\color{black}Variants of} MUSIC for frequency retrieval}
\label{freq retrieval}
In this section, we discuss the subspace method, particularly the MUltiple SIgnal Classification (MUSIC) algorithm \cite{schmidt1986multiple}, for computing the frequency parameters $\{\tau_k\}_{k=1}^r$ from the matrix $\mX^{\natural}$. Note that once $\{\tau_k\}_{k=1}^r$ are obtained, the weights $\{d_k, \vh_k\}$ can be computed by solving an overdetermined linear system. As can be seen later, 
applying the idea of the single snapshot MUSIC {\color{black}  to $\calH(\mX^\natural)$ yields a variant which is equivalent to  the existing spatial smoothing technique proposed to improve the performance of   the Multiple Measurement Vector (MMV) MUSIC. }

The careful reader may notice that every single row of $\mX^{\natural}$ is a spectrally sparse signal of the form \eqref{eq: spectrally sparse}, and moreover, all the rows share the same frequency parameters $\{\tau_k\}_{k=1}^r$. Thus we can apply the single snapshot MUSIC algorithm to a row of $\mX^\natural$ for frequency retrieval. 
Let $\vx_{\ell} =\sum_{k=1}^r d_k \vh_k[\ell] \va_{\tau_k}^\tran, 1\leq \ell \leq s$. Recall that $\calH(\vx_{\ell})$ is the Hankel matrix of rank $r$ and it admits the Vandermonde decomposition 
\begin{align*}
\calH(\vx_{\ell}) = \mE_L\diag(d_1 \vh_1[\ell],\cdots, d_r \vh_r[\ell])\mE_R^\tran.\numberthis\label{eq:vanderHx}
\end{align*}
Moreover, letting 
\begin{align*}\calH(\vx_\ell)^\tran=
\begin{bmatrix}
\mU & \mU_\perp
\end{bmatrix}\begin{bmatrix}
\bSigma &\\
&\bzero\\
\end{bmatrix}\begin{bmatrix}
\mV^\tranH\\
\mV_\perp^\tranH
\end{bmatrix}\numberthis\label{eq:svdHx}
\end{align*}
be the SVD of $\calH(\vx_\ell)^\tran$, where $\mU\in\C^{n_2\times r}, \mU_\perp\in\C^{n_2\times (n_2-r)}, \bSigma\in\R^{r\times r}, \mV\in\C^{n_1\times r}$ and $\mV_\perp\in\C^{n_1\times (n_1-r)}$, it is evident that 
{$\mU$ and $\mE_R$ span the same column space.
}Note that $\mE_R = \begin{bmatrix} \va_{\tau_1},\cdots, \va_{\tau_{r}} \end{bmatrix}$, where $\va_{\tau_k} = \begin{bmatrix} 1,\cdots, e^{-2\pi i\tau_k\cdot (n_2-1)}\end{bmatrix}^\tran$. It follows from the property of the Vandermonde matrix that 
\begin{quote}\em \centering $\va_\tau\in\Range(\mE_R)$ if and only if $\tau\in \{\tau_1,\cdots, \tau_r\}$.
\end{quote}Therefore we conclude that $\tau\in \{\tau_1,\cdots, \tau_r\}$ if and only if $1/\vecnorm{\mU_\perp^\tranH\va_\tau}{2}^2=\infty$. The single snapshot MUSIC algorithm utilizes this idea to identify the frequencies, and it consists of the following two steps:
\begin{enumerate}
	\item Compute the  SVD of $\calH(\vx_\ell)^\tran$ as in \eqref{eq:svdHx};
	\item Identify $\{\tau_k\}_{k=1}^r$ as the $r$ largest local maxima of 
	the pseudospectrum: $f(\tau)=1/\vecnorm{\mU_\perp^\tranH\va_\tau}{2}^2$.
\end{enumerate}
Here we present the single snapshot MUSIC algorithm directly based on the Hankel matrix $\calH(\vx_\ell)$. Equivalently, it  can be interpreted from the autocorrelation matrix model for signals, see for example \cite{li2018recovery} and references therein.  In the  noiseless setting, it  is easy to see that the  single snapshot MUSIC algorithm is able to compute $\{\tau_k\}_{k=1}^r$  exactly. When noise exists in $\vx_\ell$, the procedure of the algorithm remains unchanged, but with the SVD of $\calH(\vx_\ell)^\tran$ being  replaced by the SVD of the noisy Hankel matrix and with $\mU_\perp$ being the left singular vectors corresponding to the $n_2-r$ smallest singular values. The stability analysis of the single snapshot algorithm is discussed in \cite{liao2016music}.

To motivate the new variant of the MUSIC algorithm for  estimating the frequencies from $\mX^\natural$, we note that $\mE_R$ appears as a separate   component both in the Vandermonde decomposition of $\calH(\vx_\ell)$  and that of $\calH(\mX^\natural)$, see \eqref{eq:vandermonde}  and \eqref{eq:vanderHx}.
Therefore, we can replace the SVD of $\calH(\vx_{\ell})^\tran$ with the SVD of $\calH(\mX^\natural)^\tran$ in the first step of the single snapshot MUSIC algorithm. This gives the following variant:
\begin{enumerate}
	\item Compute the  SVD of $\calH(\mX^\natural)^\tran$:  $\calH(\mX^\natural)^\tran = \begin{bmatrix}\mU& \mU_\perp \end{bmatrix}\bSigma\mV^\tranH$, where $\mU\in\C^{n_2\times r}$ and $\mU_\perp\in\C^{n_2\times (n_2-r)}$;
	\item Identify $\{\tau_k\}_{k=1}^r$ as the $r$ largest local maxima of 
	the pseudospectrum: $f(\tau)=1/\vecnorm{\mU_\perp^\tranH\va_\tau}{2}^2$.
\end{enumerate}
 The following lemma establishes a connection between this variant and the single snapshot MUSIC, showing that the former one actually utilizes the SVD of the matrix formed by stacking all $\calH(\vx_{\ell})$  ($\ell=1,\cdots,s$) together.

\begin{lemma}\label{lem:MUSIC via VHM}
	Let $\widetilde{\calH}(\mX^\natural)$ be a matrix constructed by stacking all $\calH(\vx_\ell)$ on top of one another:
	\begin{align*}
	\widetilde{\calH}(\mX^\natural) = \begin{bmatrix}
	\calH(\vx_1)\\
	\vdots\\
	\calH(\vx_s)\\
	\end{bmatrix}\in\C^{sn_1\times n_2}.
	\end{align*}
	There exists a permutation matrix $\mP\in\R^{sn_1\times sn_1}$ such that $\widetilde{\calH}(\mX^\natural) = \mP\calH(\mX^\natural)$.
\end{lemma}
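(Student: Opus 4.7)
The plan is to show that $\widetilde{\calH}(\mX)$ and $\calH(\mX)$ contain precisely the same multiset of scalar entries arranged in the same order column-wise, differing only in how their $sn_1$ rows are ordered; the reordering is a fixed permutation depending only on $s$ and $n_1$ (not on $\mX$). From this, the matrix $\mP$ realizing the row permutation can be written down explicitly, and the identity $\widetilde{\calH}(\mX) = \mP\calH(\mX)$ follows by a direct entry-wise comparison.

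First I would compute both sides entry-wise using a block-indexing scheme. For $\calH(\mX)$, index its rows by pairs $(j,\ell)$ with $j\in\{0,\dots,n_1-1\}$ and $\ell\in\{1,\dots,s\}$ via the linear index $i = js + (\ell-1)$; then by the definition \eqref{eq: vec hankel}, the $(j,k)$-th block is the $(j+k)$-th column of $\mX$, so
\begin{align*}
\bigl[\calH(\mX)\bigr]_{js+(\ell-1),\,k} \;=\; \mX[\ell,\,j+k], \qquad k=0,\dots,n_2-1.
\end{align*}
For $\widetilde{\calH}(\mX)$, index rows by $(\ell,j)$ via $i' = (\ell-1)n_1 + j$; by the definition of the standard Hankel matrix applied to the $\ell$-th row $\vx_\ell$ of $\mX$, we obtain
\begin{align*}
\bigl[\widetilde{\calH}(\mX)\bigr]_{(\ell-1)n_1+j,\,k} \;=\; \vx_\ell[j+k] \;=\; \mX[\ell,\,j+k].
\end{align*}

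Thus the two matrices agree entry-wise after the identification $(\ell-1)n_1+j \leftrightarrow js+(\ell-1)$. Next I would define the permutation $\pi:\{0,\dots,sn_1-1\}\to\{0,\dots,sn_1-1\}$ by $\pi\bigl((\ell-1)n_1+j\bigr) = js+(\ell-1)$, which is nothing but the stride/perfect-shuffle permutation; bijectivity is immediate because every integer in the range admits a unique representation $js+(\ell-1)$ with $j\in\{0,\dots,n_1-1\}$ and $\ell\in\{1,\dots,s\}$. Setting $\mP\in\R^{sn_1\times sn_1}$ to be the $0/1$ matrix with $\mP_{i,\pi(i)}=1$ and zeros elsewhere yields a permutation matrix, and left-multiplication by $\mP$ moves the $\pi(i)$-th row of $\calH(\mX)$ to position $i$. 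Combining the two displayed identities confirms $\widetilde{\calH}(\mX)=\mP\calH(\mX)$.

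There is no real obstacle here; the content of the lemma is the bookkeeping observation that stacking row-wise Hankel matrices produces the same scalar array as the vectorized Hankel lift modulo a row reshuffle. The only care needed is in fixing the two indexing conventions consistently and noting that $\mP$ depends only on $(s,n_1)$, so the identity holds for every $\mX\in\C^{s\times n}$ with the same permutation matrix.
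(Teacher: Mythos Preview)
Your proof is correct and takes a genuinely different route from the paper's. The paper argues via the Vandermonde/Khatri--Rao factorizations: it writes $\widetilde{\calH}(\mX)=(\mH\odot\mE_L)\mD\mE_R^\tran$ and $\calH(\mX)=(\mE_L\odot\mH)\mD\mE_R^\tran$, then invokes the cited commutation law $\mH\odot\mE_L=\mP(\mE_L\odot\mH)$ for Khatri--Rao products to produce $\mP$. That argument, as written, uses the specific spectral structure of $\mX^\natural$ (the decomposition $\calH(\vx_\ell)=\mE_L\diag(d_k\vh_k[\ell])\mE_R^\tran$), so it is tailored to the data matrix appearing in the paper. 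Your approach instead compares the two liftings entry-by-entry and exhibits $\mP$ explicitly as the perfect-shuffle (stride) permutation $\pi((\ell-1)n_1+j)=js+(\ell-1)$. This is more elementary (no external citation needed), strictly more general (it establishes the identity for every $\mX\in\C^{s\times n}$, not just those admitting the Vandermonde factorization), and makes transparent that $\mP$ depends only on $(s,n_1)$. The paper's route has the complementary virtue of linking the lemma to the Khatri--Rao machinery used elsewhere in the analysis; both arrive at the same permutation matrix.
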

\begin{proof}
	Following the Vandermonde decomposition, the $\ell$th block of $\widetilde{\calH}(\mX^\natural)$ can be rewritten as
	\begin{align*}
	\calH(\ve_\ell^\tran \mX^\natural) &= \mE_L\begin{bmatrix}
	d_1\cdot \vh_1[\ell]&&\\
	&\ddots &\\
	&&d_r\cdot \vh_r[\ell]\\
	\end{bmatrix}\mE_R^\tran\\
	&=(\mE_L\odot\ve_\ell^\tran\mH)\begin{bmatrix}
	d_1&&\\
	&\ddots &\\
	&&d_r\\
	\end{bmatrix}\mE_R^\tran
	\end{align*}
	where $\vh_i$ is the $i$th column of $\mH$ and $\vh_i[\ell]$ is the $\ell$th entry of $\vh_i$. Thus $\widetilde{\calH}(\mX^\natural)$ has the following decomposition
	\begin{align*}
	\widetilde{\calH}(\mX^\natural) = (\mH\odot \mE_L)\mD\mE_R^\tran.
	\end{align*}
	According to the commutative law in \cite[Section 1.10.3]{zhang2017matrix}, there exists a permutation matrix $\mP$ such that  $\mH\odot \mE_L = \mP(\mE_L\odot \mH)$. 
\end{proof}

\msh{}{Based on Lemma~\ref{lem:MUSIC via VHM}, we will see that the variant  obtained by applying the single snapshot MUSIC idea to $\calH(\mX^\natural)$ corresponds to the spatial smoothing technique (more precisely the forward only spatial smoothing  technique). First, treating  the  rows of $\mX^\natural$ as i.i.d samples of a random signal whose covariance matrix can be used to compute the signal space  $\mU$  as in \eqref{eq:svdHx}, MMV MUSIC \cite{stoica2005spectral} uses the principal  eigenspace of the empirical covariance matrix (up to a scaling factor $1/s$)
\begin{align*}
    \mR = \sum_{i=1}^s \vx_i\vx_i^*
\end{align*}
to compute $\mU$.  However, when the signal comes from coherence sources,   the performance of MMV MUSIC will degrade. To deal with this difficulty, the forward only spatial smooth technique proposes to increase the number of samples by partitioning each $\vx_i$ into $n_2$ overlapped short samples (with each short sample being of length $n_1$, where $n_1+n_2=n+1$), and then construct the empirical covariance matrix from  all  the $s\cdot n_2$ short samples. A simple algebra yields that the  new empirical covariance matrix is indeed given  by  (up to a scaling factor ${1}/{(sn_2)}$)
\begin{align*}
    \widehat{\mR} = \sum_{i=1}^s \calH(\vx_i)\calH(\vx_i)^\ast.
\end{align*}
It is not  hard to  see   that the principal eigenspace of $\widehat{\mR}$ is the same as the principal singular vector space of $\widetilde{\calH}({\mX^\natural})$. Thus, by Lemma~\ref{lem:MUSIC via VHM}, we know that the variant obtained by applying the single snapshot MUSIC idea to $\calH(\mX^\natural)$ is equivalent to the spatial smoothing MUSIC. For more details  about spatial smoothing, see  \cite{evans1981high, evans1982application, yang2019source}.
    }

%%%%%%%%%%%%%%%%
%%%%%%%%%%%%%%%%
\subsection{Extension to higher dimension}
Vectorized Hankel Lift and the analysis are easily extended to higher dimensional array recovery problem. For ease of exposition, we give a brief discussion of the two-dimensional (2D) case but emphasize that the situation in higher dimensions is similar. For the 2D blind super-resolution problem, the data matrix can be expressed as
\begin{align*}
\mY_{j,\ell} = \sum_{k=1}^r d_k e^{-2\pi i (j\cdot \tau_{1k} + \ell\cdot \tau_{2k})} \mG_k[j,\ell],
%&=\sum_{k=1}^r d_k \lb\va_{\tau_{1k}}\va_{\tau_{2k}}^\tran\rb[j,\ell] \mG_k[j,\ell].
\end{align*}
where $d_k$ is the amplitude, $\btau_k:=(\tau_{1k},\tau_{2k})$ is the 2D frequency and $\mG_k$ corresponds to the Fourier samples of the unknown 2D point spread function. Letting $\va_{\tau_{s k}}=\begin{bmatrix}
1& e^{-2\pi i \tau_{s k}\cdot 1}&\cdots &e^{-2\pi i \tau_{s k}\cdot (n-1)}\end{bmatrix}^\tran\in\C^n$ for $s = 1,2$, the 2D data array can be rewritten in a more compact form:
\begin{align*}
\mY = \sum_{k=1}^r d_k \lb \va_{\tau_{1k}}\va_{\tau_{2k}}^\tran \rb\circ \mG_k,
\end{align*}
Likewise, we assume that there exists a subspace matrix $\mB\in\C^{n^2\times s}$ such that $\vect(\mG_k) = \mB\vh_k$ for any $k=1,\cdots,r$. Then
\begin{align*}
\vy:=\vect(\mY) &= \sum_{k=1}^rd_k\vect(\va_{\tau_{1k}}\va_{\tau_{2k}}^\tran)\circ\vect(\mG_k)=\sum_{k=1}^rd_k\lb\va_{\tau_{2k}}\otimes \va_{\tau_{1k}} \rb\circ\lb\mB\vh_k\rb.
\end{align*}
For any $0\leq j, \ell\leq n-1$, the $(jn+\ell)$th entry of $\vy$ is given by
\begin{align*}
\vy_{jn+\ell} &= \sum_{k=1}^r d_k \lb\va_{\tau_{2k}}\otimes \va_{\tau_{1k}} \rb^\tran\ve_{jn+\ell}
\lb\vb^\tranH _{jn+\ell}\vh_k\rb\\
&=\sum_{k=1}^r \trace\lb d_k \lb\va_{\tau_{2k}}\otimes \va_{\tau_{1k}} \rb^\tran\ve_{jn+\ell}
\lb\vb^\tranH _{jn+\ell}\vh_k\rb \rb\\
&=\sum_{k=1}^r \trace\lb \ve_{jn+\ell}
\vb^\tranH_{jn+\ell} d_k\vh_k  \lb\va_{\tau_{2k}}\otimes \va_{\tau_{1k}} \rb^\tran\rb\\
&=\Big\langle
\vb_{jn+\ell}\ve_{jn+\ell}^\tran, \sum_{k=1}^r  d_k\vh_k  \big(\va_{\tau_{2k}}\otimes \va_{\tau_{1k}} \big)^\tran\Big\rangle,
\end{align*}
where $\vb_{jn+\ell}$ is the $(jn+\ell)$th column of $\mB^\tranH$. Therefore, we have  $\vy = \calA(\mX^\natural)$, where
$\mX^\natural =\sum_{k=1}^r  d_k \va_{\tau_{2k}}^\tran \otimes  \lb\vh_k\va_{\tau_{1k}}^\tran\rb$, and 
$\calA:\C^{s\times n^2} \rightarrow\C^{n^2}$ is a linear operator given by
\begin{align*}
\lsb \calA(\mX)\rsb_{jn+\ell}=\la
\vb_{jn+\ell}\ve_{jn+\ell}^\tran, \mX\ra.
\end{align*}
As in the 1D case, the blind super-resolution problem is essentially about recovering the  target  matrix $\mX^\natural$ from the observation vector $\vy$.
%\begin{align}
%\label{eq: 2d measurements}
%\vy = \calA(\mX^\natural),\text{ where }\calA:\C^{s\times n^2} \rightarrow\C^{n^2}.
%\end{align}
%Then we consider how to recover $\mX^\natural$ from \eqref{eq: 2d measurements}. 

Note that the target matrix $\mX^\natural$ can be rewritten as the following block form:
\begin{align*}
\mX^\natural 
&=\begin{bmatrix}
\sum_{k=1}^r  d_k\lb\vh_k\va_{\tau_{1k}}^\tran \rb & \sum_{k=1}^r d_k e^{-2\pi i\tau_{2k}}\lb\vh_k\va_{\tau_{1k}}^\tran \rb &\cdots & \sum_{k=1}^r d_k e^{-2\pi i\tau_{2k}\cdot(n-1)} \lb\vh_k\va_{\tau_{1k}}^\tran \rb
\end{bmatrix}.
\end{align*}
Letting $\mX^\natural_\ell:=\sum_{k=1}^r d_k e^{-2\pi i\tau_{2k}\cdot\ell} \lb\vh_k\va_{\tau_{1k}}^\tran \rb$, we define the two-fold vectorized Hankel lift  of $\mX^\natural$ as follows:
\begin{align*}
\calH(\mX^\natural) =  \begin{bmatrix}
\calH(\mX_0^{\natural}) & \calH(\mX_1^{\natural}) & \cdots &\calH(\mX_{n_2-1}^{\natural})\\
\calH(\mX_1^{\natural}) & \calH(\mX_2^{\natural}) & \cdots &\calH(\mX_{n_2}^{\natural})\\
\vdots &\vdots &\ddots &\vdots\\
\calH(\mX_{n_1-1}^{\natural}) & \calH(\mX_{n_1}^{\natural}) & \cdots &\calH(\mX_{n-1}^{\natural})\\
\end{bmatrix},
\end{align*}
where $\calH(\mX_i^{\natural})$ is the vectorized Hankel matrix defined in \eqref{eq: vec hankel}. It can be readily shown that $\calH(\mX^\natural)$ has the following decomposition
\begin{align}
\label{eq: 2D V-factor}
\calH(\mX^\natural) = \begin{bmatrix}
\lb \mE_L\odot\mH\rb\mY^0\\
\lb \mE_L\odot\mH\rb\mY^1\\
\vdots\\
\lb \mE_L\odot\mH\rb\mY^{n_1-1}\\
\end{bmatrix}\mD\begin{bmatrix}
\mY^0\mE_R^\tran & \mY^1\mE_R^\tran&\cdots &\mY^{n_2-1}\mE_R^\tran
\end{bmatrix}:=\mL\mD\mR^\tran,
\end{align}
where $\mE_L,\mE_R$ are two matrices defined in \eqref{eq: Eleft} and \eqref{eq: right} but with the frequencies $\{\tau_{1k}\}_{k=1}^r$, $\mH = \begin{bmatrix}\vh_1&\cdots&\vh_r\end{bmatrix}\in\C^{s\times r},\mD = \diag(d_1,\cdots,d_r)$ and $\mY=\diag(e^{-2\pi i\tau_{21}},\cdots,e^{-2\pi i\tau_{2r}} )$.

If all frequencies $\tau_{1k},\tau_{2k}$ are distinct and all $d_k$ are non-zeros, it is not hard to see that $\calH(\mX^\natural)$ is a low rank matrix. Therefore, we can  recover $\mX^\natural$ by solving the following convex programming
\begin{align} 
\label{opt: 2D}
\min_{\mX\in\C^{s\times n^2} } \nucnorm{\calH(\mX)}\text{ s.t. }\calA(\mX) = \vy.
\end{align}
\msh{}{The recovery guarantee of  \eqref{opt: 2D} can be similarly established in the following theorem.} %In particular, it can be shown that $\calO(sr\log^4(sn))$ number of measurements are sufficient for \eqref{opt: 2D} to be able to exactly recover the data matrix $\mX^\natural$ based on similar assumptions for $\mL,\mR$ and $\mB$. 
The proof details are overall similar to that for Theorem \ref{main result}, and thus are omitted. 

\msh{}{
\begin{theorem}\label{thm:2D case}
	Under Assumption II.1 and suppose $\sigma_{\min}(\mL^\tranH\mL)\geq \frac{n_1^2}{\mu_1}$ and $\sigma_{\min}(\mR^\tranH\mR)\geq \frac{n_2^2}{\mu_1}$, the data matrix $\mX^\natural\in \C^{s\times n^2}$ is the unique optimal solution to \eqref{opt: 2D} with probability at least $1-c_0(sn)^{-c_1}-n^2s^{-c_2}$ for absolute constants $c_0, c_1, c_2$, provided that $n^2 \gtrsim \mu_0\mu_1\cdot sr\log^5(sn)$.
\end{theorem}
}

After the matrix $\mX^\natural$ is recovered, the frequency $\{\btau_k=(\tau_{1k}, \tau_{2k})\}_{k=1}^r$ can be estimated by a 2D-MUSIC algorithm \cite{berger2010signal,liao2015music,zheng2017super} based on the two-fold vectorized Hankel matrix $\calH(\mX^\natural)$ in \eqref{eq: 2D V-factor}, followed by the recovery of  $\{d_k\vh_k\}_{k=1}^r$ through least-squares.

\subsection{Numerical Experiments}
In this section, %we provide numerical examples to verify our theoretical results. We first evaluate the performance of Vectorized Hankel Lift which is solved by SDPT3 \cite{tutuncu2001sdpt3} based on CVX \cite{cvx,gb08}. Next, we test the performance of MUSIC via VHM under different scenarios.
we  empirically evaluate  the  performance of Vectorized Hankel Lift for the recovery of $\mX^\natural$ in the  blind super-resolution problem.
%and the performance of MUSIC via VHM for frequency estimation.
Vectorized Hankel Lift is solved by SDPT3 \cite{tutuncu2001sdpt3} based on CVX \cite{cvx}.
%\subsubsection{Phase transition for exact recovery of $\mX^\natural$}
%\label{sec: vectorized hankel lift}
%To evaluate the practical ability of Vectorized Hankel Lift, we characterize the phase transition for exact recovery and make comparison 
%We evaluate 
The recovery ability of Vectorized Hankel Lift will be evaluated via the framework of empirical phase transition and we compare it with the atomic norm minimization method \cite{yang2016super}. The locations $\{\tau_k\}_{k=1}^r$ of the point source signals are generated randomly from  $[0,1)$, while the amplitudes $\{d_k\}_{k=1}^r$ are generated via $d_k = (1+10^{c_k})e^{-i\psi_k}$ with $\psi_k$ being uniformly sampled from $[0,2\pi)$ and $c_k$ being uniformly sampled from $[0,1]$. %Note that 
%there is no requirement of 
%the minimum separation condition $\Delta:= \min_{k\neq j}\lab\tau_k-\tau_j\rab\geq\frac{1}{n}$ will be examined when comparing the phase transition of Vectorized Hankel Lift and the atomic norm minimization method in our first experiment. 
%which is different from the setting in the atomic norm minimization method \cite{yang2016super}. 
%The subspace matrix $\mB$ for the point spread functions is generated randomly with i.i.d standard Gaussian entries.  The coefficients  $\{\vh_k\}_{k=1}^r$ are i.i.d standard Gaussian random vectors followed by normalization. %We also examine the normalized symmetric Bernoulli weights and the empirical behaviors under the two sorts of weights are similar. Thus we concentrate on the Gaussian weight case unless otherwise stated.
\msh{}{The subspace matrix $\mB$ are sampled from two random ensembles which all satisfy the conditions in Assumption~\ref{assumption 1}. The first one is the random submatrix sampled from the DFT matrix, and the other one is the random matrix whose entries satisfy the uniform distribution over $[-\sqrt{3},\sqrt{3}]$.} The coefficients  $\{\vh_k\}_{k=1}^r$ are i.i.d. standard Gaussian random vectors followed by normalization. %For each pair of underlying changing variables, we conduct 20 Monte Carlo trails. Each trail is declared successful if the relative reconstruction Frobenius norm error of $\mX^\natural$ is less than $10^{-3}$. 
In our tests, 20 Monte Carlo trails are repeated for each problem instance and we report the probability of successful recovery out of those trials. 
A trail is declared to be successful if the relative reconstruction  error of $\mX^\natural$ in terms of the Frobenius norm is less than $10^{-3}$.

We first fix $n=64$ and vary the values of $r$ and $s$. Figure \ref{fig: phase transition for subspace vs spikes 1}(a) and Figure \ref{fig: phase transition for subspace vs spikes 1}(b) show the phase transitions of Vectorized Hankel Lift and atomic norm minimization method when \msh{}{the subspace matrix $\mB$ is randomly sampled from the DFT matrix and} the locations of point sources are randomly generated without imposing the separation condition, and Figure \ref{fig: phase transition for subspace vs spikes 1}(c) illustrates the phase transition of the atomic minimization method when the separation condition $\Delta:= \min_{k\neq j}\lab\tau_k-\tau_j\rab\geq\frac{1}{n}$ is imposed.
{\color{black}Here we omit the phase transition plot of Vectorized Hankel Lift for the frequency separation case  because the plot is similar to Figure \ref{fig: phase transition for subspace vs spikes 1}(a).}
It can be observed that the atomic norm minimization method has a higher phase transition curve when the separation condition is satisfied. However, 
in contrast to  Vectorized Hankel Lift, its performance degrades severely when there is no frequency separation requirement. That is, Vectorized Hankel Lift is less sensitive to the separation condition.
\msh{}{We also conduct the phase transition tests when the entries of $\mB$ are i.i.d. sampled from the uniform distribution over $[ -\sqrt{3},\sqrt{3}]$. The phase  transition diagrams are presented in  Figure~\ref{fig: phase transition for subspace vs spikes 2}, and similar observations can be made.
Note that the phase transition plot of Vectorized Hankel Lift for the frequency separation case is still omitted due to the high similarity with Figure \ref{fig: phase transition for subspace vs spikes 2}(a). } 
\begin{figure}[ht!]
	\centering
	\subfigure[]{
		\includegraphics[width= .31\textwidth]{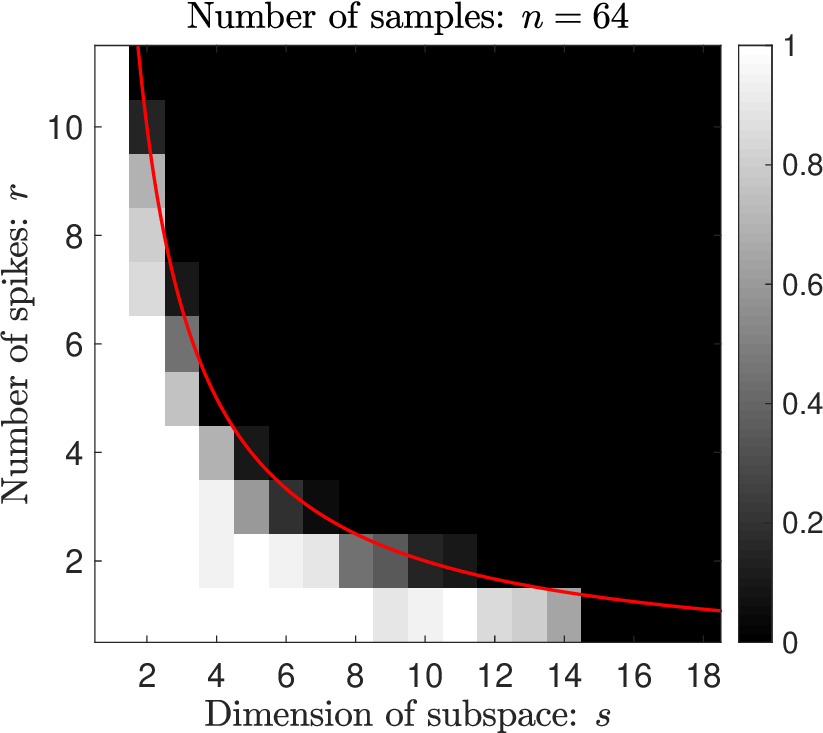}
	}
	\subfigure[]{
		\includegraphics[width= .31\textwidth]{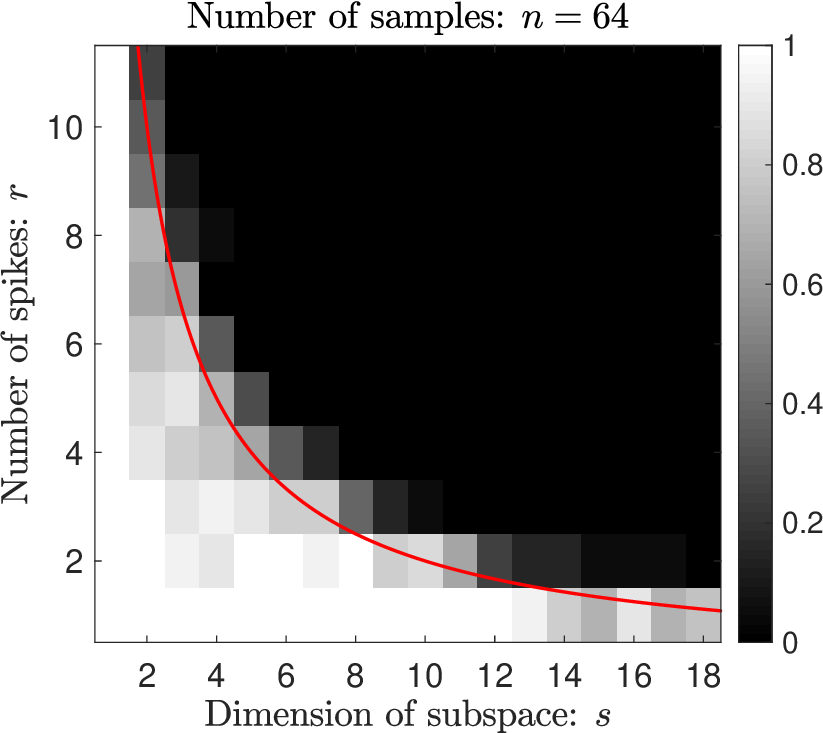}
	}
	\subfigure[]{
		\includegraphics[width= .31\textwidth]{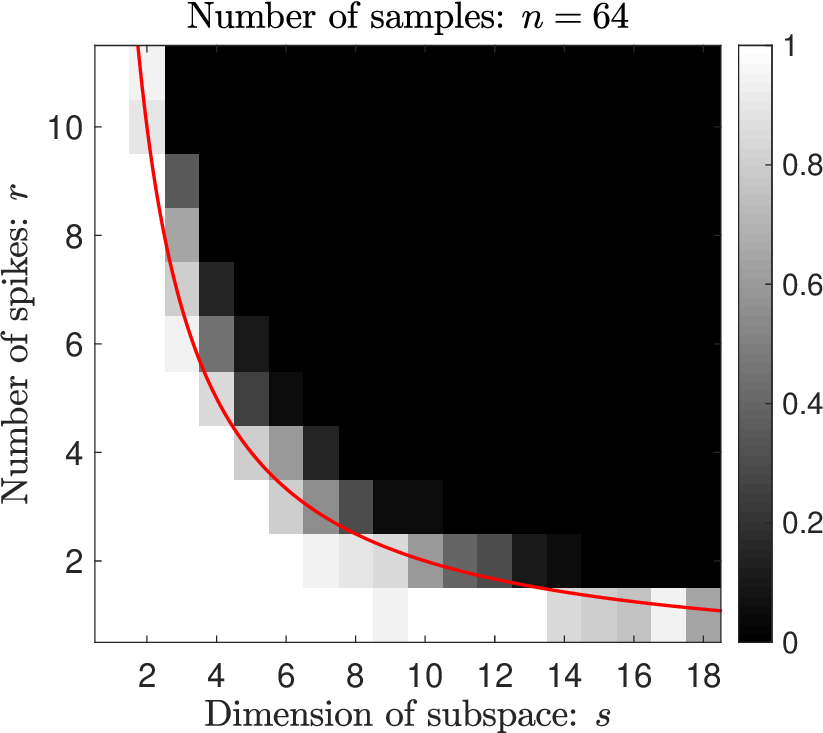}
		
	}
	
	\caption{\msh{}{The phase transitions of Vectorized Hankel Lift and the atomic norm minimization method when the subspace matrix $\mB$ is randomly sampled from the DFT matrix. (a) Vectorized Hankel Lift for randomly generated frequencies,  (b) atomic norm minimization for randomly generated frequencies,  and (c) atomic norm minimization  for frequencies obeying the separation condition $\Delta:= \min_{k\neq j}\lab\tau_k-\tau_j\rab\geq\frac{1}{n}$. The number of measurements is fixed to be $n=64$. The red curve plots the hyperbola curve $rs=20$.}}
	\label{fig: phase transition for subspace vs spikes 1}
\end{figure}
\begin{figure}[ht!]
	\centering
	\subfigure[]{
		\includegraphics[width= .31\textwidth]{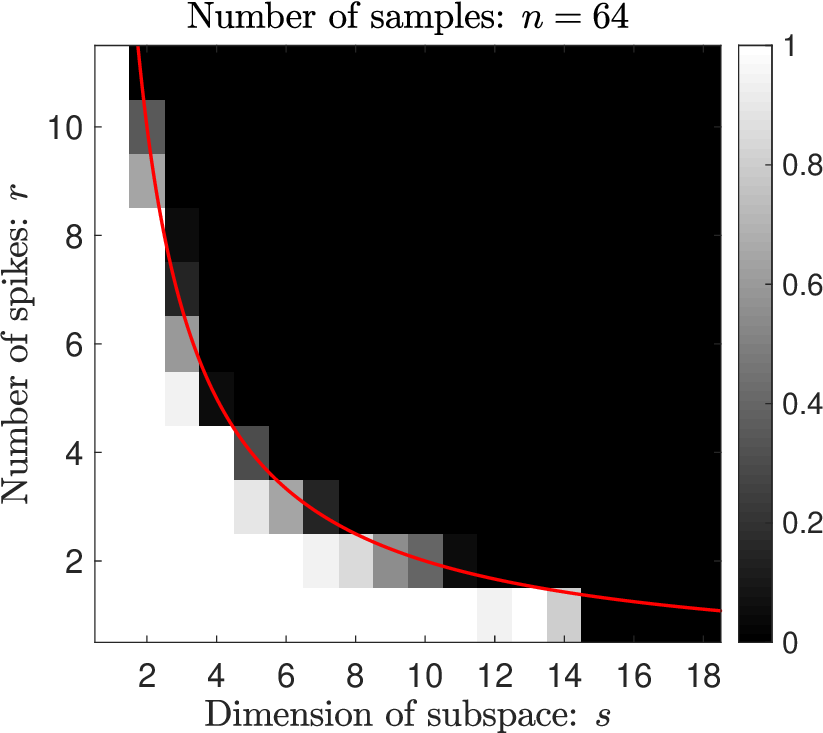}
	}
	\subfigure[]{
		\includegraphics[width= .31\textwidth]{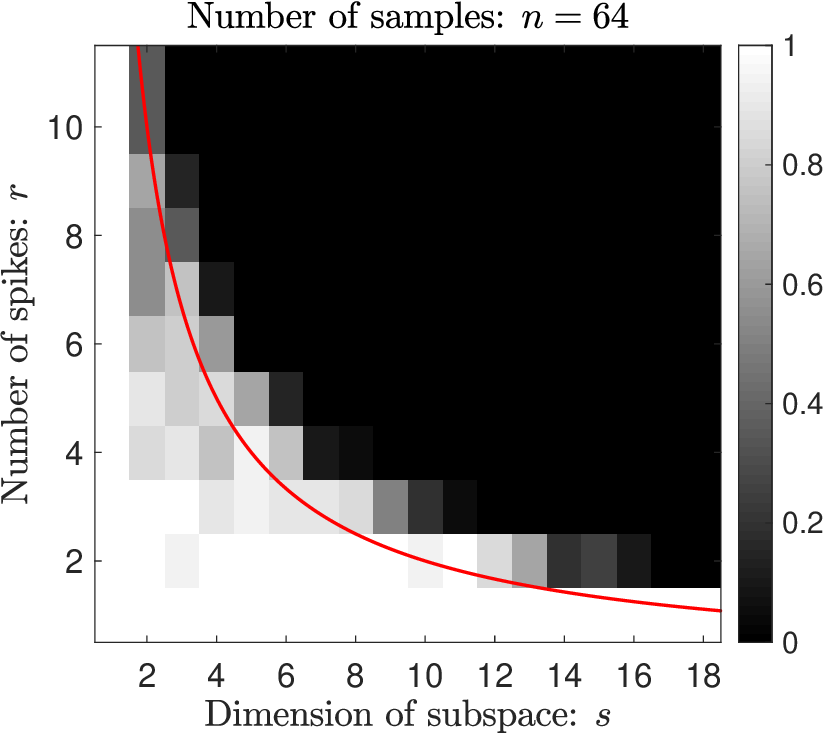}
	}
	\subfigure[]{
		\includegraphics[width= .31\textwidth]{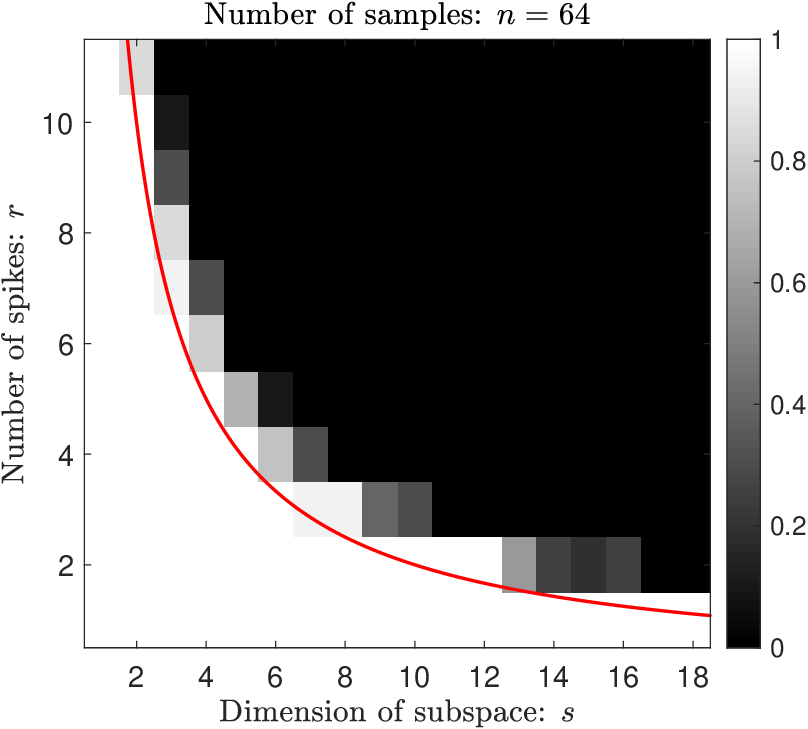}
		
	}
	
	\caption{\msh{}{The phase transitions of Vectorized Hankel Lift and the atomic norm minimization method when the entries of $\mB$ are i.i.d. sampled from the uniform distribution over $[ -\sqrt{3},\sqrt{3}]$. (a) Vectorized Hankel Lift for randomly generated frequencies,  (b) atomic norm minimization for randomly generated frequencies,  and (c) atomic norm minimization  for frequencies obeying the separation condition $\Delta:= \min_{k\neq j}\lab\tau_k-\tau_j\rab\geq\frac{1}{n}$. The number of measurements is fixed to be $n=64$. The red curve plots the hyperbola curve $rs=20$.}}
	\label{fig: phase transition for subspace vs spikes 2}
\end{figure}

\msh{}{In the above phase transition tests, the  coefficients $\{\vh_k\}_{k=1}^r$ are sampled from random Gaussian with normalization. In order to test whether the choice of $\{\vh_k\}_{k=1}^r$ matters, we also test another two cases for the coefficients. One is the Identical Gaussian, where $\{\vh_k\}_{k=1}^r$ are the same across $r$ (sampled from random Gaussian with normalization). The other one is  QR where $\{\vh_k\}_{k=1}^r$ are obtained from the Q matrix in the QR decomposition of an $s\times r$ random Gaussian matrix. Tests are conducted for fixed $s=4$ and $n=64$, and the plots of successful recovery probability against the number of spikes $r$ are presented in Figure~\ref{fig: diff h}. It can be clearly seen that no significant differences over different types of $\{\vh_k\}_{k=1}^r$ are observed from the plots. Therefore, the numerical results validate  that our main result can hold without any conditions of $\{\vh_k\}_{k=1}^r$.     

\begin{figure}[ht!]
	\centering
	\subfigure[]{
		\includegraphics[width= .4\textwidth]{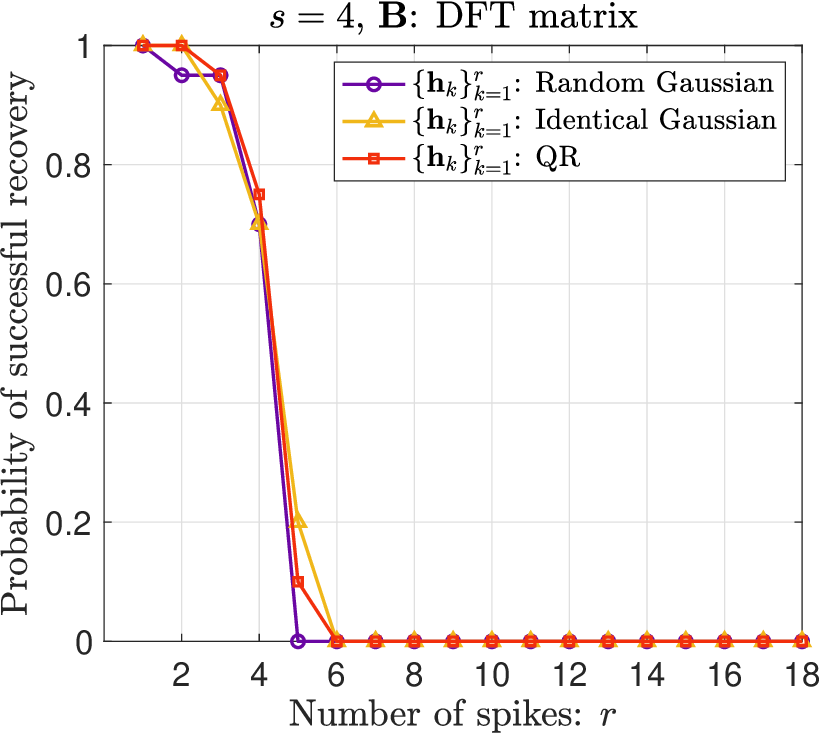}
	}\hspace{0.4cm}
	\subfigure[]{
		\includegraphics[width= .4\textwidth]{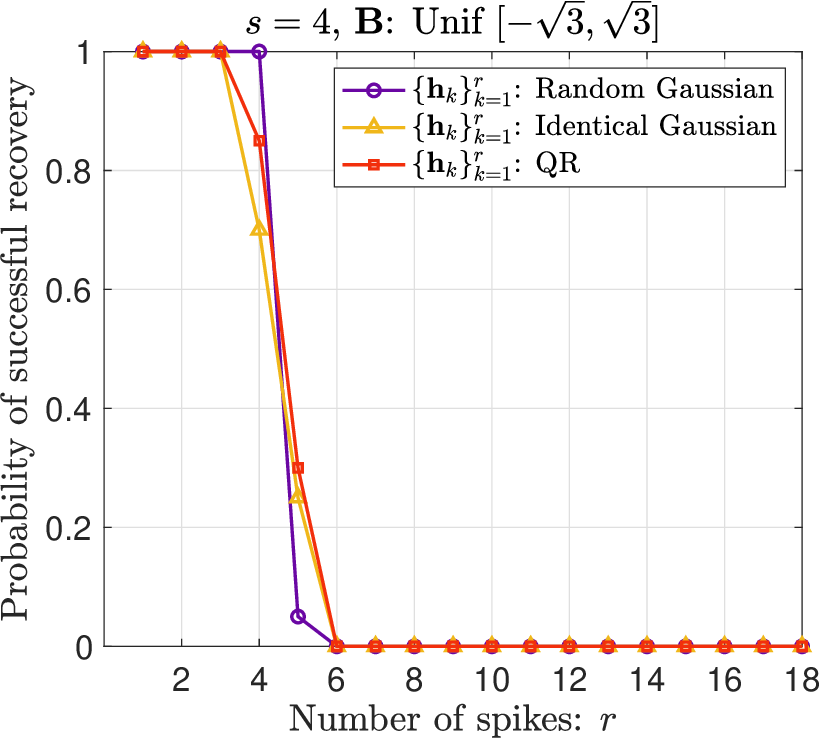}
	}
	
	\caption{\msh{}{The probability of successful recovery of Vectorized Hankel Lift against $r$ with three different subspace coefficients $\{\vh_k\}_{k=1}^r$ ($s=4$, $n = 64$). (a): The subspace matrix $\mB$ are randomly sampled from the DFT matrix. (b): The entries of $\mB$ are i.i.d. sampled from the uniform distribution over $[-\sqrt{3},\sqrt{3}]$.}}
	\label{fig: diff h}
\end{figure}
}

\msh{}{
In order to examine the effect of the separation condition more carefully, we further conduct tests for fixed $s=3$, $r=3$, and vary the number of samples $n$. In the tests, we impose that there are at least two spikes with separation equal to $1.0/n$ and $0.5/n$, respectively. For each problem instance, we repeat 50 Monte Carlo trails and report the probability of successful recovery out of those trials. The numerical results are presented in Figure~\ref{fig: bad sep}. It is evident that Vectorized Hankel Lift presents a better performance when the minimum separation is $0.5/n$. When the spikes are well separated (i.e., the minimum separation is $\bDelta
= 1.0/n$), the atomic norm minimization method performs better. In addition, the results confirm that Vectorized Hankel Lift is overall not affected by the separation condition. 
\begin{figure}[ht!]
	\centering
	\subfigure[]{
		\includegraphics[width= .4\textwidth]{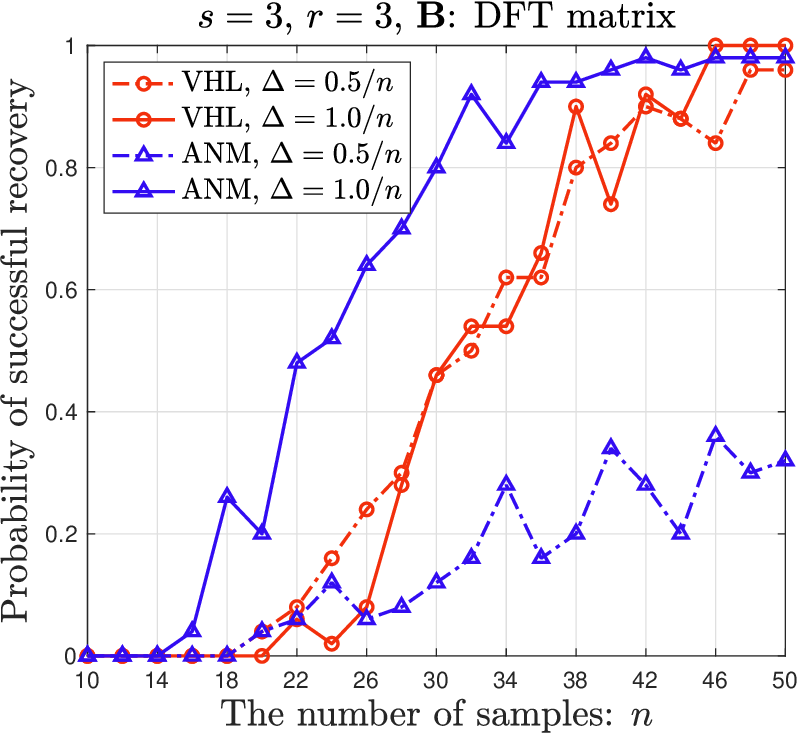}
	}\hspace{0.4cm}
	\subfigure[]{
		\includegraphics[width= .4\textwidth]{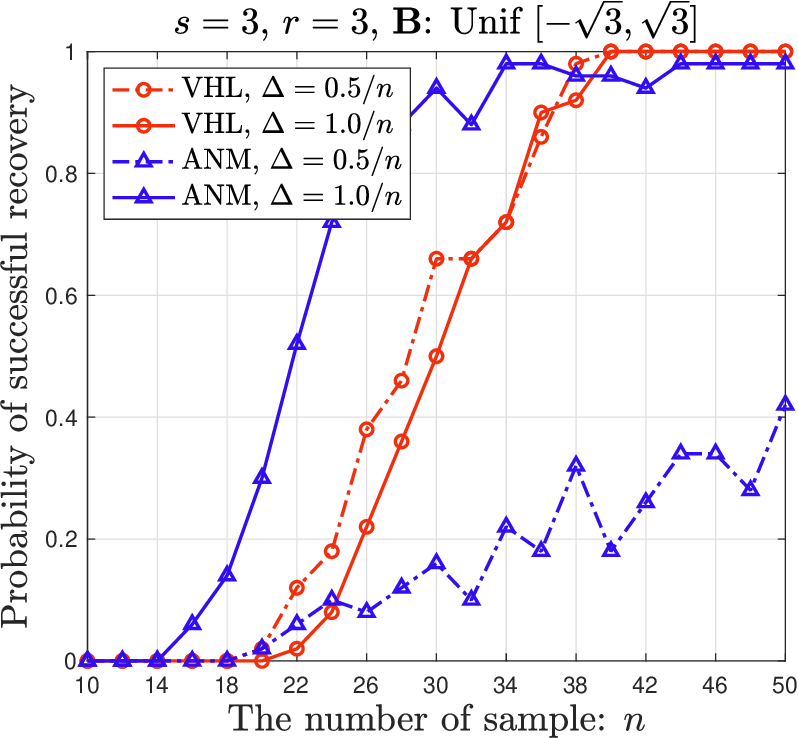}
	}
	
	\caption{\msh{}{The probability of successful recovery of Vectorized Hankel Lift and the atomic norm minimization method under two separation conditions, $\bDelta = \frac{0.5}{n}$ and $\bDelta = \frac{1.0}{n}$. The dimension of subspace and the number of spikes are both fixed to be $s=3$ and $r=3$. The number of samples $n$ is varied. (a): The subspace matrix $\mB$ are randomly sampled from the DFT matrix. (b): The entries of $\mB$ are i.i.d. sampled from the uniform distribution over $[-\sqrt{3},\sqrt{3}]$.}}
	\label{fig: bad sep}
\end{figure}
}

%In the second experiment, we study the phase transition with respect to $n$ and $s$ when the number of spikes $r=4$ is fixed. 
%Next we study the phase transition of Vectorized Hankel Lift when one of $r$ and $s$ is fixed.
%Figure \ref{fig: phase transition 2}(a) indicates a linear relationship between  $s$ and $n$ for the successful recovery  when the number of point sources is fixed to be $r=4$, which is predicted by our theoretical result.  The same linear relationship between $r$ and $n$ can be observed when the dimension of  the subspace  is fixed to be $s=4$, see Figure~\ref{fig: phase transition 2}(b).
%number of samples $n$ grows approximately linearly with respect to the dimension of subspace $s$. 
%Finally, we test the phase transition with respect to $n$ and $r$ when we fix the dimension of subspace $s=4$. It can be seen from Figure \ref{fig: phase transition 2}(b) that there is also a linear relationship between $n$ and $r$ when $s$ is fixed. 
%Furthermore, the slopes of the phase transition lines in two figures are roughly the same. 
%Therefore, the phase transition plots suggest that Vectorized Hankel Lift can succeed to recover $\mX^\natural$ when $n\gtrsim sr$, which is the prediction of our theory.

We also plot  the locations of the point sources  $\{\tau_k\}_{k=1}^r$ and the unknown point spread function samples $\{\vg_k\}_{k=1}^r$ computed from $\mX^\natural$ for a random instance corresponding to $n=64,s=3$ and $r=4$. %For the sake of illustration, let $n=64,s=3$ and $r=4$. 
We apply the MUSIC variant introduced in Section~\ref{freq retrieval} (i.e., the spatial smoothing MUSIC)  to localize the $\{\tau_k\}_{k=1}^r$. Figure \ref{fig: simple example}(a) shows the pseudospectrum $f(\tau)$ %in $[0,1]$ on a grid of width 
on a set of points on $[0,1]$ with equal distance $10^{-4}$. As can be seen from this figure, the function $f(\tau)$ peaks at the locations of true frequencies. After the $\{\tau_k\}_{k=1}^r$ are identified, the coefficients $\{\vh_k\}_{k=1}^r$ are computed by solving a least squares problem and $\{\vg_k\}_{k=1}^r$ are estimated  as $\mB\vh_k$.  Figure \ref{fig: simple example}(b) includes the plots of the estimates of $\{|\vg_k|\}_{k=1}^r$ against the true values which clearly show that $\{\vg_k\}_{k=1}^r$ can be recovered. 
\begin{figure}[ht!]
	\centering
	\subfigure[]{
		\includegraphics[width= .48\textwidth]{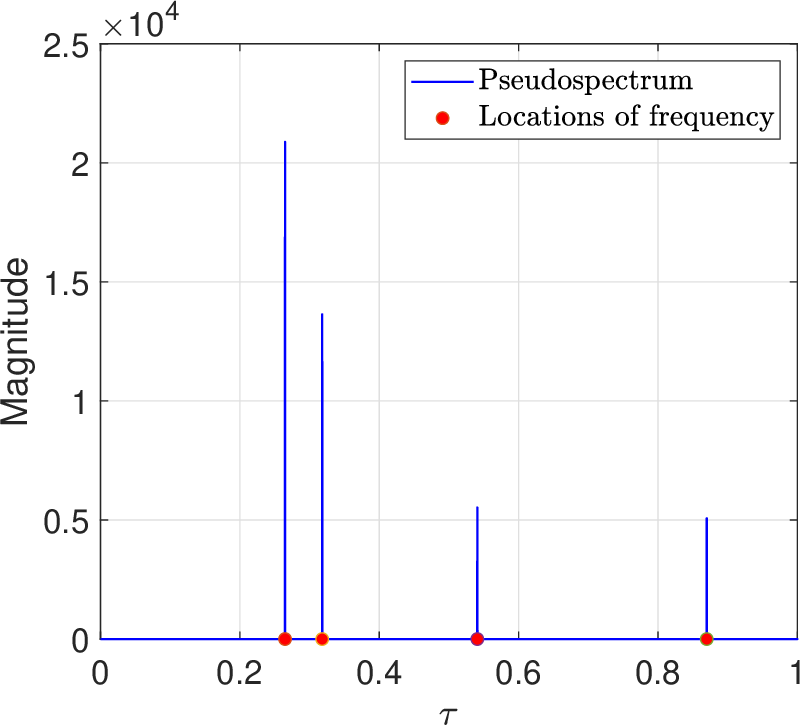}
	}
	\subfigure[]{
		\includegraphics[width= .48\textwidth]{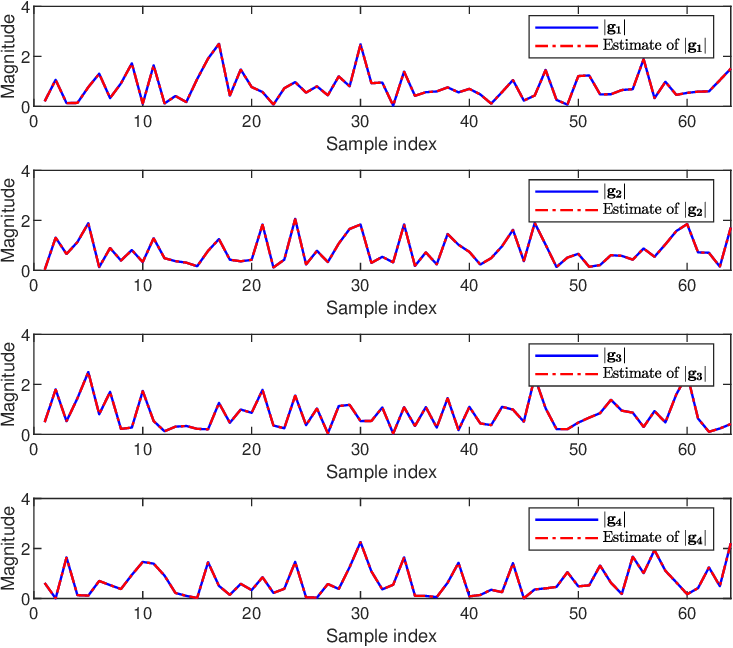}
	}
	
	\caption{(a) Plots of pseudospectrum $f(\tau)$ when $n=64,s=3,r=4$ and locations of the true frequencies when the subspace $\mB$ is generated randomly from the standard Gaussian distribution. (b) The magnitudes of Fourier samples of the point spread functions $\vg_1,\vg_2,\vg_3,\vg_4$ and their estimates from least squares.}
	\label{fig: simple example}
\end{figure}

\section{Proof Architecture of Main Result}\label{sec:proofs}
\subsection{\kw{}{Preliminaries}}

%We first list several useful lemmas. The isotropy and incoherence assumptions of $\vb$ directly yields the following lemma.
\msh{}{We first apply the bounded difference inequality to show that for the column vectors $\{\vb_j\}_{j=0}^{n-1}$ with independent entries, the condition \eqref{eq: lower bound b} in Assumption~\ref{assumption 1} holds with high probability given \eqref{eq: isotropy} and \eqref{eq: incoherence b}.
\begin{lemma}
    \label{lem: bounded difference ineqaulity}
    The column vectors $\{\vb_j\}_{j=0}^{n-1}$ of the subspace matrix $\mB^\tranH$ are independently and identically sampled from a distribution $F$ which obeys the conditions (\ref{eq: isotropy}) and (\ref{eq: incoherence b}) in Assumption~\ref{assumption 1}. Assume the components of $\vb$ are independent, the event \begin{align}
	\label{ineq: union lower bound of b}
	\min_{0\leq j \leq n-1} \vecnorm{\vb_j}{2}^2\geq 1
	\end{align}
	\kw{}{occurs} with probability at least $1-n\exp\lb-\frac{s}{16\mu_0^2}\rb$.  
\end{lemma}
\begin{proof}
    Since \kw{}{$\vb_j$ satisfies \eqref{eq: isotropy}}, we first have 
	\begin{align*}
	\E{\vecnorm{\vb_j}{2}^2} = \E{ \trace(\vb_j^\tranH\vb_j) }=\E{ \trace(\vb_j\vb_j^\tranH) } = s.
	\end{align*}
	%Moreover, given two vector $\vb_j,\vb_j^{'}\in\C^s$, we define a new vector $\vb_j^{''}\in\C^s$ via
	%\begin{align*}
	%   \vb_j^{''}[i] = \begin{cases}
	%    \vb_j[i] \text{ if }i\neq \ell\\
	%    \vb_j^{'}[i]\text{ if }i=\ell
	%    \end{cases}
	%\end{align*}
	%for any $l\in[s]$. The incoherence assumption \eqref{eq: incoherence b} implies that
	%\begin{align*}
	% \left| \|\vb_j\|_2^2 - \|\vb_j^{''}\|_2^2\right| = \left||\vb_j[i]|^2 - |\vb_j^{''}[i]|^2\right| \leq 2\mu_0.
	%\end{align*}
	%Since $\{\vb_j\}_{j=0}^{n-1}$ are independent random variables and observe that 
	%\begin{align*}
	%   \lab\lb x_0 + \cdots + x_{k-1} + x_{k} + x_{k+1} + \cdots + x_{s-1}\rb  - \lb x_0 + \cdots + x_{k-1} + x'_{k} + x_{k+1} + \cdots + x_{s-1}\rb\rab \leq 2\mu_b,
	%\end{align*}
	\kw{}{Define $f(x_1,\cdots,x_s)=\sum_{i=1}^s|x_i|^2$. It is evident that 
		\begin{align*}
		|f(x_1,\cdots,x_{i-1},x_i,x_{i+1},\cdots,x_s)-f(x_1,\cdots,x_{i-1},x_i',x_{i+1},\cdots,x_s)|\leq |x_i|^2+|x_{i}'|^2\leq 2\mu_0
		\end{align*}
		when $|x_i|^2\leq\mu_0$ and $|x_{i}'|^2\leq\mu_0$. Because $\vb_j$ also satisfies \eqref{eq: incoherence b},}
	the application of \kw{}{the} bounded difference inequality  yields that
	\begin{align*}
	\P\lsb\lab\vecnorm{\vb_j}{2}^2 - s\rab\geq t\rsb \leq 2\exp\lb-\frac{t^2}{4s\mu_0^2}\rb.
	\end{align*}
	\kw{}{Consequently,} \kw{}{we can take} $t = \frac{s}{2}$ \kw{}{to obtain}  
	\begin{align*}
	\P\lsb\vecnorm{\vb_j}{2}^2 \geq \frac{s}{2}\rsb \geq 1 - \exp\lb-\frac{s}{16\mu_0^2}\rb.
	\end{align*}
	Taking the uniform bound yields that for all $j\in[n]$, with probability at least $1-n\exp\lb-\frac{s}{16\mu_0^2}\rb$, $\vecnorm{\vb_j}{2}^2 \geq \frac{s}{2} \geq 1$ when $s\geq 2$.
\end{proof}

}

{Next, we present a lemma about the basic properties of the linear operator $\calA$.
\begin{lemma}
	\label{lem: prop of A}
	Under Assumption~\ref{assumption 1}, the following properties hold:
	\begin{align}
	\label{ineq: lower bound of ATA}
	\la\vy, \calA\calA^\tranH(\vy)\ra \geq \vecnorm{\vy}{2}^2\quad \mbox{for any fixed vector $\vy\in\C^{n}$},
	\end{align}
	\begin{align}\label{eq:more of ATA}
	\opnorm{\calA\calA^\ast - \calI} \leq s\mu_0\text{ and }\opnorm{\calA} \leq \sqrt{s\mu_0 }.
	\end{align}
	%Moreover, for any fixed vector $\vy\in\C^{n}$, the event
	%\begin{align}
	%\label{ineq: lower bound of ATA}
	%\la\vy, \calA\calA^\tranH(\vy)\ra \geq \frac{s}{2}\vecnorm{\vy}{2}^2
	%\end{align}
	%happens with probability at least $1-n\exp(-cs)$, where $c$ is an absolute constant.
\end{lemma} 

\begin{proof}
	%Moreover, given two vector $\vb_j,\vb_j^{'}\in\C^s$, we define a new vector $\vb_j^{''}\in\C^s$ via
	%\begin{align*}
	%   \vb_j^{''}[i] = \begin{cases}
	%    \vb_j[i] \text{ if }i\neq \ell\\
	%    \vb_j^{'}[i]\text{ if }i=\ell
	%    \end{cases}
	%\end{align*}
	%for any $l\in[s]$. The incoherence assumption \eqref{eq: incoherence b} implies that
	%\begin{align*}
	% \left| \|\vb_j\|_2^2 - \|\vb_j^{''}\|_2^2\right| = \left||\vb_j[i]|^2 - |\vb_j^{''}[i]|^2\right| \leq 2\mu_0.
	%\end{align*}
	%Since $\{\vb_j\}_{j=0}^{n-1}$ are independent random variables and observe that 
	%\begin{align*}
	%   \lab\lb x_0 + \cdots + x_{k-1} + x_{k} + x_{k+1} + \cdots + x_{s-1}\rb  - \lb x_0 + \cdots + x_{k-1} + x'_{k} + x_{k+1} + \cdots + x_{s-1}\rb\rab \leq 2\mu_b,
	%\end{align*}
	Since
	\begin{align*}
	\calA\calA^\ast(\vy) &= \calA\left( \sum_{i=0}^{n-1}\vy[i]\vb_i\ve_i^\tran\right)\\
	&=\begin{bmatrix}
	\la \vb_0\ve_0^\tran, \sum_{i=0}^{n-1}\vy[i]\vb_i\ve_i^\tran\ra\\
	\vdots\\
	\la \vb_{n-1}\ve_{n-1}^\tran, \sum_{i=0}^{n-1}\vy[i]\vb_i\ve_i^\tran\ra\\
	\end{bmatrix}= \begin{bmatrix}
	\vecnorm{\vb_0}{2}^2\cdot \vy[0]\\
	\vdots\\
	\vecnorm{\vb_{n-1}}{2}^2\cdot \vy[n-1]\\
	\end{bmatrix}\in\C^n,
	\end{align*}
	\kw{}{\eqref{ineq: lower bound of ATA} follows immediately from \eqref{ineq: union lower bound of b}}.
	%Hence for any fixed $\vy\in\C^n$, applying \eqref{ineq: union lower bound of b}, we obtain that the event \eqref{ineq: lower bound of ATA} happens with probability at least $1-n\exp(-cs)$.
	
	The properties in \eqref{eq:more of ATA} follows directly from the definition of $\calA$. For the left inequality, we have 
	\begin{align*}
	\opnorm{\calA\calA^\ast - \calI} &= \sup_{\vy\in\C^n: \vecnorm{\vy}{2}=1}\vecnorm{\calA\calA^\ast(\vy) - \vy}{2}\\
	&=\sup_{\vy\in\C^n: \vecnorm{\vy}{2}=1} \sqrt{ \sum_{i=0}^{n-1}\left( \vecnorm{\vb_i}{2}^2-1 \right)^2\cdot |\vy[i]|^2 }\\
	&\leq \max_{0\leq i \leq n-1}\left|  \vecnorm{\vb_i}{2}^2-1  \right|\\
	&\leq s\mu_0.
	\end{align*}
	The right one can be proved as follows
	\begin{align*}
	\opnorm{\calA} &= \sup_{\mX\in\C^{s\times n}: \fronorm{\mX} = 1} \vecnorm{\calA(\mX)}{2}\\
	&=\sup_{\mX\in\C^{s\times n}: \fronorm{\mX} = 1} \sqrt{\sum_{i=0}^{n-1}|\vb_i^\tranH\mX\ve_i|^2}\\
	&\leq \sup_{\mX\in\C^{s\times n}: \fronorm{\mX} = 1} \sqrt{\sum_{i=0}^{n-1} \vecnorm{\vb_i}{2}^2\cdot \vecnorm{\mX\ve_i}{2}^2}\\
	&\leq \max_{0\leq i\leq n-1}\vecnorm{\vb_i}{2}\cdot \sup_{\mX\in\C^{s\times n}: \fronorm{\mX} = 1} \sqrt{\sum_{i=0}^{n-1} \vecnorm{\mX\ve_i}{2}^2}\\
	&\leq \sqrt{s\mu_0}.
	\end{align*}
	The proof is now complete. 
	%Combining the above results, we can easily verify \eqref{ineq: lower bound of ATA}.
\end{proof}
}

%\kw{}{It is  worth noting  that if \msh{}{$\{\vb_j\}_{j=0}^{n-1}$} are sampled from certain random ensembles, for example the entries of $\vb_j$ are Rademacher random variables or \msh{}{$\{\vb_j\}_{j=0}^{n-1}$} are sampled from the rows of a DFT matrix, then  \eqref{ineq: lower bound of ATA} holds deterministically. }

%Additionally, with the incoherence assumption \eqref{basic incoherence} in mind, we have the following results.
\kw{ }{The following lemma suggests that the smallest singular value of $\mE_{\vh,L}$ can be lower bounded by the smallest singular value of $\mE_L$.}
\begin{lemma}
	\label{lem: incoherence 02}
	\msh{}{Recall that $\mH = \begin{bmatrix}\vh_1&\cdots &\vh_r
    \end{bmatrix}\in\C^{s\times r}$ and} suppose all columns of $\mH$ are \kw{}{of unit norm}. Under the incoherence condition \eqref{basic incoherence}, we have 
	\begin{align*}
	\sigma_{\min}(\mE_{\vh,L}^\tranH\mE_{\vh,L}) \geq \frac{n_1}{\mu_1},
	\end{align*}
	where $\mE_{\vh,L}$ is the matrix defined in \eqref{eq: left}.
\end{lemma}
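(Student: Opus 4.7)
The strategy is to express $\mE_{\vh,L}^{\ast}\mE_{\vh,L}$ as a Hadamard product of $\mE_L^{\ast}\mE_L$ and $\mH^{\ast}\mH$, and then invoke the refined Schur product inequality for the smallest eigenvalue of a Hadamard product of positive semidefinite matrices.

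First, recalling the identity $\mE_{\vh,L}=\mE_L\odot\mH$ already stated in the notation section, its $k$th column is the Kronecker product $\va_k\otimes\vh_k$, where $\va_k$ denotes the $k$th column of $\mE_L$ and $\vh_k$ is the $k$th column of $\mH$. Using the mixed-product property of the Kronecker product,
\[
(\va_j\otimes\vh_j)^{\ast}(\va_k\otimes\vh_k)=(\va_j^{\ast}\va_k)(\vh_j^{\ast}\vh_k),
\]
so that
\[
\mE_{\vh,L}^{\ast}\mE_{\vh,L}=(\mE_L^{\ast}\mE_L)\circ(\mH^{\ast}\mH).
\]

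Next, both $\mE_L^{\ast}\mE_L$ and $\mH^{\ast}\mH$ are Hermitian positive semidefinite. By the standard refinement of the Schur product theorem (see, e.g., Horn and Johnson, \emph{Matrix Analysis}), for any Hermitian PSD matrices $\mA$ and $\mC$ of the same size,
\[
\lambda_{\min}(\mA\circ\mC)\ \geq\ \lambda_{\min}(\mA)\cdot\min_{i}\mC_{ii}.
\]
Applying this with $\mA=\mE_L^{\ast}\mE_L$ and $\mC=\mH^{\ast}\mH$, and using the hypothesis that every column of $\mH$ has unit norm so that $(\mH^{\ast}\mH)_{ii}=\|\vh_i\|_2^2=1$ for all $i$, we conclude
\[
\sigma_{\min}\!\big(\mE_{\vh,L}^{\ast}\mE_{\vh,L}\big)=\lambda_{\min}\!\big((\mE_L^{\ast}\mE_L)\circ(\mH^{\ast}\mH)\big)\ \geq\ \sigma_{\min}(\mE_L^{\ast}\mE_L)\cdot 1\ \geq\ \frac{n_1}{\mu_1},
\]
where the final inequality is exactly the assumption \eqref{basic incoherence}.

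There is no real obstacle; the only step requiring care is citing the correct version of the Schur product bound (the crude version $\lambda_{\min}(A\circ B)\geq\lambda_{\min}(A)\lambda_{\min}(B)$ would be too weak because $\mH^{\ast}\mH$ is rank-deficient when $r>s$). The refined version above, which only needs the diagonal entries of $\mH^{\ast}\mH$, is precisely what the unit-norm assumption on the columns of $\mH$ provides.
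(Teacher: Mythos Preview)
Your proof is correct. Both you and the paper begin with the same Hadamard-product identity $\mE_{\vh,L}^{\ast}\mE_{\vh,L}=(\mE_L^{\ast}\mE_L)\circ(\mH^{\ast}\mH)$, but you diverge in how the eigenvalue bound is extracted. You invoke the refined Schur product inequality $\lambda_{\min}(\mA\circ\mC)\geq\lambda_{\min}(\mA)\min_i\mC_{ii}$ directly, which is the clean way to finish. The paper instead unpacks the Hadamard product via the selection matrix identity $\mP^{\tran}(\mA\otimes\mB)\mP=\mA\circ\mB$, rewrites the Rayleigh quotient as $\fronorm{\mH\diag(\vx)\mE_L^{\tran}}^2$, and then bounds it below by $\sigma_{\min}^2(\mE_L)\sum_k|\vx[k]|^2\vecnorm{\vh_k}{2}^2=\sigma_{\min}^2(\mE_L)$. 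Your route is shorter and relies on a standard matrix-analysis fact; the paper's route is self-contained but essentially reproves the Schur refinement in this special case. Your observation that the crude bound involving $\lambda_{\min}(\mH^{\ast}\mH)$ would fail when $r>s$ is exactly the reason the unit-norm hypothesis on the columns of $\mH$ is needed, and both arguments use it at the same place.
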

\begin{proof}
	Let $\va_{\tau_\ell} = \begin{bmatrix}1 & e^{-2\pi i\tau_{\ell}} &\cdots &e^{-2\pi i\tau_{\ell}\cdot(n_1-1)}
	\end{bmatrix}^\tran\in\C^{n_1}$ be the $\ell$th column of $\mE_L$. Since $\mE_{\vh,L} = \mE_L\odot \mH$, it can be easily seen that
	\begin{align*}
	\mE_{\vh,L}^\tranH\mE_{\vh,L} &= \begin{bmatrix}
	\va_{\tau_1}^\tranH\otimes \vh_1^\tranH\\
	\vdots\\
	\va_{\tau_{r}}^\tranH\otimes \vh_{r}^\tranH\\
	\end{bmatrix}\begin{bmatrix}
	\va_{\tau_1}\otimes \vh_1 &\cdots & \va_{\tau_{r}}\otimes\vh_{r}
	\end{bmatrix}\\
	&=\begin{bmatrix}
	(\va_{\tau_1}^\tranH\otimes \vh_1^\tranH)(\va_{\tau_1}\otimes \vh_1)&\cdots &(\va_{\tau_1}^\tranH\otimes \vh_1^\tranH)(\va_{\tau_{r}}\otimes \vh_{r})\\
	\vdots &\ddots&\vdots\\
	(\va_{\tau_{r}}^\tranH\otimes \vh_{r}^\tranH)(\va_{\tau_1}\otimes \vh_1)&\cdots &(\va_{\tau_{r}}^\tranH\otimes \vh_{r}^\tranH)(\va_{\tau_{r}}\otimes \vh_{r})\\
	\end{bmatrix}\\
	&=\begin{bmatrix}
	(\va_{\tau_1}^\tranH\va_{\tau_1})\cdot (\vh_1^\tranH\vh_1) &\cdots&(\va_{\tau_1}^\tranH\va_{\tau_{r}})\cdot (\vh_1^\tranH\vh_{r})\\
	\vdots&\ddots&\vdots\\
	(\va_{\tau_{r}}^\tranH\va_{\tau_1})\cdot (\vh_{r}^\tranH\vh_1)&\cdots&(\va_{\tau_{r}}^\tranH\va_{\tau_{r}})\cdot (\vh_{r}^\tranH\vh_{r})\\
	\end{bmatrix}\\
	&= 	\begin{bmatrix}
	\va_{\tau_1}^\tranH\va_{\tau_1}  	&\cdots&\va_{\tau_1}^\tranH\va_{\tau_{r}} \\
	\vdots&\ddots&\vdots\\
	\va_{\tau_{r}}^\tranH\va_{\tau_1} 	&\cdots& \va_{\tau_{r}}^\tranH\va_{\tau_{r}} \\
	\end{bmatrix} 
	\circ\begin{bmatrix}
	\vh_1^\tranH\vh_1 	&\cdots 	&\vh_1^\tranH\vh_{r}\\
	\vdots 				&\ddots 	&\vdots\\
	\vh_{r}^\tranH\vh_1	& \cdots 	&\vh_{r}^\tranH\vh_{r}\\
	\end{bmatrix}\\
	&=(\mE_L^\tranH\mE_L)\circ(\mH^\tranH\mH),
	\end{align*}
	\kw{}{Recall that} {a selection matrix} $\mP\in\R^{n^2\times n}$ is the unique matrix such that
	\begin{align*}
	\mP\vz = \vect\lb \diag(\vz)\rb\text{ for all } \vz\in\C^n,
	\end{align*} 
	and it has the remarkable property that $\mP^\tran(\mA\otimes\mB)\mP = \mA\circ\mB$ \cite[Corollary 2]{visick2000quantitative}. Thus we have
	\begin{align*}
	\sigma_{\min}(\mE_{\vh,L}^\tranH\mE_{\vh,L}) &= \inf_{\|\msh{}{\bbeta}\|_2=1}\left| \msh{}{\bbeta}^\tranH\lb (\mE_L^\tranH\mE_L)\circ(\mH^\tranH\mH)\rb\msh{}{\bbeta} \right|\\
	&=\inf_{\|\msh{}{\bbeta}\|_2=1}\left| \msh{}{\bbeta}^\tranH \mP^\tran\lb (\mE_L^\tranH\mE_L)\otimes(\mH^\tranH\mH)\rb\mP\msh{}{\bbeta} \right|\\
	&=\inf_{\|\msh{}{\bbeta}\|_2=1}\left| \msh{}{\bbeta}^\tranH \mP^\tran  (\mE_L^\tranH\otimes \mH^\tranH)(\mE_L\otimes\mH) \mP\msh{}{\bbeta} \right|\\
	&=\inf_{\|\msh{}{\bbeta}\|_2=1}\|(\mE_L\otimes\mH) \mP\msh{}{\bbeta} \|_2^2\\
	&=\inf_{\|\msh{}{\bbeta}\|_2=1}\|(\mE_L\otimes\mH) \vect\lb\diag(\msh{}{\bbeta})\rb \|_2^2\\
	&=\inf_{\|\msh{}{\bbeta}\|_2=1}\|\vect\lb\mH\diag(\msh{}{\bbeta})\mE_L^\tran \rb\|_2^2\\
	&=\inf_{\|\msh{}{\bbeta}\|_2=1}\fronorm{ \mH\diag(\msh{}{\bbeta})\mE_L^\tran }^2\\
	&\geq \sigma_{\min}^2(\mE_L)\cdot \inf_{\|\msh{}{\bbeta}\|_2=1}\fronorm{ \mH\diag(\msh{}{\bbeta})}^2\\
	&=\sigma_{\min}^2(\mE_L)\cdot\inf_{\|\msh{}{\bbeta}\|_2=1}\sum_{k=1}^{r}\vecnorm{\msh{}{\bbeta}[k]\cdot \vh_k}{2}^2\\
	&=\sigma_{\min}^2(\mE_L)\cdot\inf_{\|\msh{}{\bbeta}\|_2=1}\sum_{k=1}^{r}|\msh{}{\bbeta}[k]|^2\\
	&\geq \frac{n_1}{\mu_1},
	\end{align*}
	which completes the proof.
\end{proof}

%The following lemma is a variant of \cite[Lemma 1]{cai2019fast}, and \kw{}{the proof is overall similar}. %Here, we omit the proof and directly list the result. 
\kw{}{A straightforward application of Lemma~\ref{lem: incoherence 02} yields the following result,} which can be regarded as a variant of \cite[Lemma 1]{cai2019fast}.
\begin{lemma}
	\label{incoherence 2}
	Suppose $\calH{(\mX^\natural})$ obeys the incoherence condition \eqref{basic incoherence} with parameter $\mu_1$. Let $\calH(\mX^\natural)= \mU\mS\mV^\tranH$ be the singular value decomposition of $\calH(\mX^\natural)$, where $\mU\in\C^{sn_1\times r}, \mS\in\R^{r\times r}$ and $\mV\in\C^{n_2\times r}$.  \kw{}{If we rewrite $\mU$ as }
	\begin{align*}
	\mU = \begin{bmatrix}
	\mU_{0}\\
	\vdots\\
	\mU_{n_1-1}\\
	\end{bmatrix},
	\end{align*}
	where the $\ell$th block is $\mU_\ell = \mU(\ell s:(\ell+1)s-1,:)$ for $\ell=0,\cdots, n_1-1$, then
	\begin{align}
	\label{eq: incoherence}
	\max_{0\leq \ell \leq n_1-1} \fronorm{\mU_\ell}^2\leq \frac{\mu_1 r}{n} \text{ and } \max_{0\leq j \leq n_2-1} \vecnorm{\ve_j^\tran\mV}{2}^2 \leq \frac{\mu_1 r}{n},
	\end{align}
\end{lemma}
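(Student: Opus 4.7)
My plan is to exploit the Vandermonde decomposition \eqref{eq:vandermonde} of $\calH(\mX^\natural) = \mE_{\vh,L}\,\diag(d_1,\dots,d_r)\,\mE_R^\tran$, which forces the column space of $\calH(\mX^\natural)$ to coincide with $\Range(\mE_{\vh,L})$ and its row space with $\Range(\bar{\mE_R})$. Consequently the left and right singular projectors admit the closed forms
\begin{align*}
\mU\mU^\tranH = \mE_{\vh,L}\bigl(\mE_{\vh,L}^\tranH\mE_{\vh,L}\bigr)^{-1}\mE_{\vh,L}^\tranH,
\qquad
\mV\mV^\tranH = \bar{\mE_R}\bigl(\mE_R^\tran \bar{\mE_R}\bigr)^{-1}\bar{\mE_R}^\tranH.
\end{align*}
All the work is then reduced to extracting an $s\times r$ block from the first projector and an entry from the second, and controlling the result by the smallest singular values provided by Assumption~\ref{assumption 2} and Lemma~\ref{lem: incoherence 02}.

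For the left factor, let $\mP_i\in\R^{s\times sn_1}$ be the selector such that $\mU_i = \mP_i\mU$, and note that $\mP_i\mE_{\vh,L}$ is precisely the $i$th row block of $\mE_{\vh,L}$, which by \eqref{eq: left} factors as $\mH\mD_i$ with the unitary diagonal $\mD_i=\diag(e^{-2\pi i\tau_1 i},\dots,e^{-2\pi i\tau_r i})$. Writing
\begin{align*}
\fronorm{\mU_i}^2
= \trace\bigl(\mP_i\mU\mU^\tranH\mP_i^\tranH\bigr)
= \trace\bigl((\mE_{\vh,L}^\tranH\mE_{\vh,L})^{-1}\,\mD_i^\tranH\mH^\tranH\mH\mD_i\bigr),
\end{align*}
both factors inside the trace are positive semidefinite, so the standard PSD inequality $\trace(AB)\le\|A\|_2\,\trace(B)$ yields
\begin{align*}
\fronorm{\mU_i}^2 \leq \frac{1}{\sigma_{\min}(\mE_{\vh,L}^\tranH\mE_{\vh,L})}\cdot\trace(\mH^\tranH\mH) \leq \frac{\mu_1}{n_1}\cdot r,
\end{align*}
where the final step uses Lemma~\ref{lem: incoherence 02} for the singular-value bound and $\trace(\mH^\tranH\mH)=\sum_k\vecnorm{\vh_k}{2}^2=r$. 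Since $n_1=(n+1)/2$, this gives the desired estimate $\mu_1 r/n$ up to an absolute constant that can be absorbed into $\mu_1$.

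For the right factor, since $\Range(\mV)=\overline{\Range(\mE_R)}$, there exists an invertible $\mR\in\C^{r\times r}$ with $\mV=\bar{\mE_R}\mR$ and $\mR\mR^\tranH = (\mE_R^\tran\bar{\mE_R})^{-1}=\overline{(\mE_R^\tranH\mE_R)^{-1}}$. The $j$th row of $\bar{\mE_R}$ has squared norm exactly $r$, hence
\begin{align*}
\vecnorm{\ve_j^\tran\mV}{2}^2
= \vecnorm{\ve_j^\tran\bar{\mE_R}\mR}{2}^2
\leq \vecnorm{\ve_j^\tran\bar{\mE_R}}{2}^2\cdot\opnorm{\mR}^2
= r\cdot\frac{1}{\sigma_{\min}(\mE_R^\tranH\mE_R)} \leq \frac{\mu_1 r}{n_2},
\end{align*}
which again gives $\mu_1 r/n$ up to a factor of two. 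The main subtlety of the whole argument lies in the Frobenius-norm step: a naive bound $\fronorm{\mU_i}^2\leq\opnorm{\mH}^2\fronorm{\mT}^2$ introduces a spurious factor of $\opnorm{\mH}^2$ that can scale like $r$; avoiding this requires pairing the singular-value bound of $\mE_{\vh,L}^\tranH\mE_{\vh,L}$ with the nuclear/trace bound on $\mH^\tranH\mH$ via the PSD inequality above.
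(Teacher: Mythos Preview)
Your proposal is correct and takes essentially the same approach as the paper: both use the Vandermonde decomposition to write $\mU=\mE_{\vh,L}(\mE_{\vh,L}^\tranH\mE_{\vh,L})^{-1/2}\mQ$ (equivalently, $\mU\mU^\tranH=\mE_{\vh,L}(\mE_{\vh,L}^\tranH\mE_{\vh,L})^{-1}\mE_{\vh,L}^\tranH$), bound the $i$th block by $\opnorm{(\mE_{\vh,L}^\tranH\mE_{\vh,L})^{-1}}\cdot\sum_k\vecnorm{\vh_k}{2}^2=(\mu_1/n_1)\cdot r$ via Lemma~\ref{lem: incoherence 02}, and argue symmetrically for $\mV$ using $\mE_R$. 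Your remark about the $n_1$-versus-$n$ constant is accurate; the paper's own proof writes $\mu_1/n$ where Lemma~\ref{lem: incoherence 02} literally gives $\mu_1/n_1$, so the same factor-of-two slack is present there as well.
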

\begin{proof}
	We only need to prove the left inequality in \eqref{eq: incoherence} as the right one can be similarly established. Recall that $\calH(\mX^\natural) = \mE_{\vh,L}\diag(d_1,\cdots, d_r)\mE_R^\tran$. Since $\mU\in\C^{s n_1\times r}$ and $\mE_{\vh,L}$ span the same subspace and $\mU$ is orthogonal, there exists an orthonormal matrix $\mQ\in\C^{r\times r}$ such that $\mU = \mE_{\vh, L}(\mE_{\vh, L}^\tranH\mE_{\vh, L})^{-1/2}\mQ$. Therefore,  \begin{align*}
	\fronorm{\mU_\ell}^2 &= \sum_{j=\ell s}^{(\ell+1)s-1}\vecnorm{\ve_j^\tran\mE_{\vh, L}(\mE_{\vh, L}^\tranH\mE_{\vh, L})^{-1/2}}{2}^2\\
	&\leq \sum_{j=\ell s}^{(\ell+1)s-1}\vecnorm{\ve_j^\tran\mE_{\vh, L}}{2}^2\cdot\opnorm{(\mE_{\vh, L}^\tranH\mE_{\vh, L})^{-1/2}}^2\\
	&\leq \frac{\mu_1}{n}\cdot \sum_{j=\ell s}^{(\ell+1)s-1}\vecnorm{\ve_j^\tran\mE_{\vh, L}}{2}^2\\
	&= \frac{\mu_1}{n}\cdot\sum_{k=1}^r\vecnorm{e^{-2\pi i \tau_k\cdot \ell} \vh_k}{2}^2\\
	&= \frac{\mu_1 r}{n},
	\end{align*}
	where the second inequality is due to Lemma \ref{lem: incoherence 02}.
\end{proof}

%Roughly speaking, the above lemma implies the right singular vectors are sufficiently uncorrelated with the standard basis. The following corollary will be frequently used.
The following corollary is a direct consequence of Lemma \ref{incoherence 2} and will be frequently used \kw{}{in the sequel}. 

\begin{corollary}
	Suppose $\calH{(\mX^\natural})$ obeys the incoherence condition \eqref{basic incoherence} with parameter $\mu_1$. Then, 
	\begin{align}
	\label{incoherence condition}
	\max_{0\leq i\leq n-1}\frac{1}{w_i}\sum_{\substack{\ell+j=i\\ 0\leq \ell\leq n_1-1\\ 0\leq j\leq n_2-1}}\fronorm{\mU_\ell}^2 \leq \frac{\mu_1 r}{n}\text{ and } \max_{0\leq i\leq n-1}\frac{1}{w_i}\sum_{\substack{\ell+j=i\\ 0\leq \ell\leq n_1-1\\ 0\leq j\leq n_2-1}}\vecnorm{\ve_j^\tran\mV}{2}^2\leq \frac{\mu_1 r}{n}.
	\end{align}
\end{corollary}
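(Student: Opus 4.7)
The plan is to obtain this corollary directly from Lemma \ref{incoherence 2} by the elementary observation that the average of a finite collection of nonnegative quantities is bounded by their maximum. Specifically, Lemma \ref{incoherence 2} provides the uniform bounds $\fronorm{\mU_j}^2 \leq \mu_1 r/n$ for every $0 \leq j \leq n_1-1$ and $\vecnorm{\ve_\ell^\tran \mV}{2}^2 \leq \mu_1 r/n$ for every $0 \leq \ell \leq n_2-1$, so the only thing that really needs to be checked is that the normalization by $w_a$ in front of each sum exactly matches the number of terms being summed.

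For the first inequality, I would fix any $a \in \{0,\ldots,n-1\}$ and recall from \eqref{def wi} that $w_a$ is precisely the cardinality of the index set $\{(j,\ell) : j+\ell=a,\; 0 \leq j \leq n_1-1,\; 0 \leq \ell \leq n_2-1\}$. Applying the uniform bound from Lemma \ref{incoherence 2} termwise to the sum over this index set gives
\[
\sum_{\substack{j+\ell=a\\0\leq j\leq n_1-1\\0\leq\ell\leq n_2-1}} \fronorm{\mU_j}^2 \;\leq\; w_a \cdot \frac{\mu_1 r}{n},
\]
and dividing through by $w_a$ and taking the maximum over $a$ yields the claim. The second inequality is established by repeating the argument verbatim with $\fronorm{\mU_j}^2$ replaced by $\vecnorm{\ve_\ell^\tran\mV}{2}^2$ and with the bound on row norms of $\mV$ from Lemma \ref{incoherence 2}. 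There is no real obstacle here; the corollary is merely a convenient restatement of Lemma \ref{incoherence 2} in the form in which it will actually be invoked later in the proof of Theorem \ref{main result}.
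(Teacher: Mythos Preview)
Your proposal is correct and matches the paper's approach: the paper itself does not spell out a proof but simply declares the corollary ``a direct consequence of Lemma \ref{incoherence 2},'' which is exactly the argument you give. The only content is that each sum has $w_a$ terms, each bounded by $\mu_1 r/n$, so the normalized average inherits the same bound.
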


The matrix Bernstein inequality, stated below,  will be used  frequently in our analysis.
\begin{lemma}[\cite{tropp2012user,chen2014robust}]
		\label{lem: bernstein}
		Let $\{\mX_\ell\}_{\ell=1}^n$ be a set independent random matrices of dimension $n_1\times n_2$, which satisfy $\E{\mX_\ell} = 0$ and $\opnorm{\mX_\ell}\leq B$. Define 
		\begin{align*}
		\sigma^2:=\max\lcb\opnorm{\E{\sum_{\ell=1}^n\mX_\ell\mX_\ell^*}},\opnorm{\E{\sum_{\ell=1}^n\mX_\ell^*\mX_\ell}}\rcb.
		\end{align*}
		Then the event
		\begin{align}
		\label{ineq: bernstein}
		\opnorm{\sum_{\ell=1}^n \mX_\ell} \leq c\lb\sqrt{\sigma^2\log(n_1+n_2)}+B\log(n_1+n_2)\rb
		\end{align}
		holds with probability at least $1-(n_1+n_2)^{-c_1}$, where $c,c_1>0$ are absolute constants.
\end{lemma}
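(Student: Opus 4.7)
The plan is to follow the user-friendly matrix concentration framework of Tropp, since this lemma is essentially a restatement of the rectangular matrix Bernstein inequality. First I would reduce to the Hermitian case by Hermitian dilation: define
\begin{align*}
\widetilde{\mX}_l = \begin{bmatrix} \bzero & \mX_l \\ \mX_l^\ast & \bzero \end{bmatrix} \in \C^{(n_1+n_2)\times(n_1+n_2)},
\end{align*}
which is Hermitian with $\opnorm{\widetilde{\mX}_l}=\opnorm{\mX_l}$, $\E{\widetilde{\mX}_l}=\bzero$, and satisfies $\lambda_{\max}(\sum_l\widetilde{\mX}_l)=\opnorm{\sum_l \mX_l}$. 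Moreover, $\widetilde{\mX}_l^2$ is block-diagonal with $\mX_l\mX_l^\ast$ and $\mX_l^\ast\mX_l$ on its blocks, so $\opnorm{\sum_l \E{\widetilde{\mX}_l^2}}=\sigma^2$.

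Next, for any $\theta>0$, I would apply the matrix Laplace transform (Markov) bound
\begin{align*}
\P\lcb \lambda_{\max}\!\left(\textstyle\sum_l \widetilde{\mX}_l\right) \geq t \rcb \;\leq\; e^{-\theta t}\,\E{\trace\exp\!\left(\theta\textstyle\sum_l \widetilde{\mX}_l\right)},
\end{align*}
and invoke Lieb's concavity theorem together with the subadditivity of the matrix cumulant generating function to obtain
\begin{align*}
\E{\trace\exp\!\left(\theta\textstyle\sum_l \widetilde{\mX}_l\right)} \;\leq\; \trace\exp\!\left(\textstyle\sum_l \log \E{e^{\theta \widetilde{\mX}_l}}\right).
\end{align*}

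Then I would bound the individual matrix moment generating functions. Since $\widetilde{\mX}_l$ is zero-mean and $\opnorm{\widetilde{\mX}_l}\leq B$, a Taylor expansion argument gives the semidefinite ordering $\E{e^{\theta\widetilde{\mX}_l}} \preceq \exp\!\left(g(\theta)\,\E{\widetilde{\mX}_l^2}\right)$ with $g(\theta)=(e^{\theta B}-1-\theta B)/B^2$. Substituting this back, using the monotonicity of $\trace\exp$, and taking spectral norms yields
\begin{align*}
\P\lcb \opnorm{\textstyle\sum_l \mX_l}\geq t\rcb \;\leq\; (n_1+n_2)\exp\!\left(-\theta t + g(\theta)\,\sigma^2\right).
\end{align*}

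Finally I would optimize over $\theta$. The standard two-regime analysis (depending on whether the $\sigma^2$ term or the $B$ term dominates) yields that choosing $t = c_1\sqrt{\sigma^2\log(n_1+n_2)}+c_2 B\log(n_1+n_2)$ makes the right-hand side at most $(n_1+n_2)^{-c}$, which is exactly \eqref{ineq: bernstein}. The main obstacle here is purely the calibration of constants in this last step: all of the genuinely nontrivial work (Lieb's concavity, subadditivity of the matrix cgf, and the Taylor-series matrix mgf bound) is black-boxed from Tropp, so the only real task is to verify that the split between the subgaussian and subexponential regimes gives the stated form of the bound with the desired polynomial probability. Since this lemma is cited verbatim from \cite{tropp2012user,chen2014robust}, I expect the write-up to simply quote those theorems rather than reprove them.
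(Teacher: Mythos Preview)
Your proposal is correct and your expectation at the end is exactly right: the paper does not prove this lemma at all but simply states it with citations to \cite{tropp2012user,chen2014robust}. Your sketch via Hermitian dilation, the matrix Laplace transform, Lieb's concavity, and the Bernstein mgf bound is the standard Tropp argument and would constitute a valid proof if one were required.
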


%%%%%%%%%%%%%%%
%%%%%%%%%%%%%%%

\subsection{Deterministic optimality condition}
\kw{}{As is typical in the analysis of low rank matrix recovery, in order to show that $\mZ^\natural$ is the unique optimal solution to the convex program \eqref{opti},  we need to construct a dual certificate which satisfies a set of sufficient conditions}. These conditions can be viewed as a variant of the KKT condition \kw{}{for the optimality of $\mZ^\natural$.} \msh{}{Recall that the singular value decomposition (SVD) of $\calH(\mX^\natural)$ is $\calH(\mX^\natural) = \mU\mS\mV^\tranH$. The tangent space $T$ of the nuclear norm at $\calH(\mX^\natural)$ can be defined as
\begin{align*}
T = \left\{ \mU\mA^\tranH + \mB\mV^\tranH~:~ \mA\in\C^{n_2\times r}, \mB\in\C^{sn_1\times r} \right\}.
\end{align*}
The projections $\calP_T(\mZ)$ onto the tangent space can be defined as
\begin{align}
\label{eq: PT}
	\calP_T(\mZ) := \mU\mU^*\mZ + \mZ\mV\mV^* - \mU\mU^*\mZ\mV\mV^\tranH.
\end{align}
and the corresponding projector onto the orthogonal complement of $T$ is given by $\calP_{T^\perp}(\mZ)= \mZ - \calP_T(\mZ)$.
}
\begin{theorem}
	\label{thr:optimality}
	Suppose %\kw{need to explain this holds somewhere later }{$\|\calA\calA^*\|\geq 1$} 
	$\opnorm{\calA\calA^\ast}\geq 1$
	and 
	\begin{align}
	\label{ineq: RIP}
	\opnorm{\calP_{T}\calG\calAT \calA\calGT\calP_{T} -\calP_{T}\calG\calGT\calP_{T} } \leq\frac{1}{2}.
	\end{align}
	If there exists a dual certificate $\bLambda \in \C^{sn_1\times n_2}$ such that 
	\begin{align}
	\label{ineq: F norm}
	&\fronorm{\calP_{T}(\mU\mVT - \bLambda)}\leq \frac{1}{16s\mu_0}, \\
	\label{ineq: perp op norm}
	&\opnorm{\calP_{T^\perp}(\bLambda)} \leq \frac{1}{2},\\
	\label{belong}
	&\calGT(\bLambda) \in \rm{Range}(\calAT),
	\end{align}
	then $\mZ^\natural$ is the unique solution to \eqref{opti}.
\end{theorem}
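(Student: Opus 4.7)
The plan is to follow the standard dual-certificate route for nuclear norm recovery, adapted to handle the two linear constraints in \eqref{opti}. Let $\mW$ be any nonzero perturbation such that $\mZ^\natural+\mW$ is feasible. Subtracting off the constraints satisfied by $\mZ^\natural$ forces
\begin{equation*}
\calA\calGT(\mW) = \bzero \quad\text{and}\quad \mW = \GGT(\mW),
\end{equation*}
i.e.\ $\mW\in\Range(\calG)$. Let $T$ be the tangent space at $\mZ^\natural = \mU\mS\mVT$ and decompose $\mW = \calP_T(\mW)+\calP_{T^\perp}(\mW)$. The nuclear-norm subgradient inequality, together with a dual pairing that selects $\mW_0\in T^\perp$ with $\opnorm{\mW_0}\leq 1$ and $\la\mW_0,\calP_{T^\perp}(\mW)\ra = \nucnorm{\calP_{T^\perp}(\mW)}$, gives
\begin{equation*}
\nucnorm{\mZ^\natural+\mW} - \nucnorm{\mZ^\natural} \geq \la\mU\mVT,\mW\ra + \nucnorm{\calP_{T^\perp}(\mW)}.
\end{equation*}

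I would then rewrite $\la\mU\mVT,\mW\ra = \la\mU\mVT-\bLambda,\mW\ra + \la\bLambda,\mW\ra$ and split each piece according to $T$ and $T^\perp$. The key observation is that \eqref{belong} combined with the two constraints on $\mW$ kills the second term: writing $\calGT(\bLambda) = \calAT(\vv)$ and using $\mW = \GGT(\mW)$ and $\calA\calGT(\mW)=\bzero$,
\begin{equation*}
\la\bLambda,\mW\ra = \la\calGT(\bLambda),\calGT(\mW)\ra = \la\vv,\calA\calGT(\mW)\ra = 0.
\end{equation*}
The first term splits as $\la\calP_T(\mU\mVT-\bLambda),\calP_T(\mW)\ra - \la\calP_{T^\perp}(\bLambda),\calP_{T^\perp}(\mW)\ra$ because $\calP_{T^\perp}(\mU\mVT)=\bzero$. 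Applying Cauchy--Schwarz with \eqref{ineq: F norm} and nuclear/operator norm duality with \eqref{ineq: perp op norm} yields
\begin{equation*}
\nucnorm{\mZ^\natural+\mW} - \nucnorm{\mZ^\natural} \geq -\frac{1}{16s\mu_0}\fronorm{\calP_T(\mW)} + \frac{1}{2}\nucnorm{\calP_{T^\perp}(\mW)}.
\end{equation*}

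The last step is to use \eqref{ineq: RIP} to bound $\fronorm{\calP_T(\mW)}$ by $\fronorm{\calP_{T^\perp}(\mW)}$. From $\calA\calGT\calP_T(\mW) = -\calA\calGT\calP_{T^\perp}(\mW)$ and the operator bound $\opnorm{\calA}^2\leq s\mu_0$ provided by Lemma~\ref{lem: prop of A},
\begin{equation*}
\la\calP_T(\mW),\calP_T\calG\calAT\calA\calGT\calP_T(\mW)\ra \leq s\mu_0\,\fronorm{\calP_{T^\perp}(\mW)}^2.
\end{equation*}
The restricted isometry \eqref{ineq: RIP} replaces the left-hand side by $\la\calP_T(\mW),\calP_T\GGT\calP_T(\mW)\ra$ up to $\tfrac12\fronorm{\calP_T(\mW)}^2$, and the constraint $\mW=\GGT(\mW)$ yields the identity $\calP_T\GGT\calP_T(\mW) = \calP_T(\mW) - \calP_T\GGT\calP_{T^\perp}(\mW)$, so $\opnorm{\GGT}\leq 1$ gives
\begin{equation*}
\la\calP_T(\mW),\calP_T\GGT\calP_T(\mW)\ra \geq \fronorm{\calP_T(\mW)}^2 - \fronorm{\calP_T(\mW)}\fronorm{\calP_{T^\perp}(\mW)}.
\end{equation*}
Rearranging produces a quadratic inequality whose solution, combined with $s\mu_0\geq 1$ (which is the hypothesis $\opnorm{\calA\calAT}\geq 1$ together with Lemma~\ref{lem: prop of A}), gives $\fronorm{\calP_T(\mW)} \lesssim \sqrt{s\mu_0}\,\fronorm{\calP_{T^\perp}(\mW)}$. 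Plugging this bound into the previous display and using $\fronorm{\cdot}\leq\nucnorm{\cdot}$ yields a strict estimate of the form $\nucnorm{\mZ^\natural+\mW}-\nucnorm{\mZ^\natural}\geq c\,\nucnorm{\calP_{T^\perp}(\mW)}$ for an absolute constant $c>0$, which forces $\calP_{T^\perp}(\mW)=\bzero$ at any minimizer, and then the quadratic bound forces $\calP_T(\mW)=\bzero$ too.

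The main obstacle, and the place where this proof departs from the textbook argument for matrix completion, is that \eqref{ineq: RIP} compares $\calP_T\calG\calAT\calA\calGT\calP_T$ to $\calP_T\GGT\calP_T$ rather than to $\calP_T$ itself, because the tangent space $T$ is not contained in $\Range(\calG)$. Translating this into a usable estimate requires leveraging the second feasibility constraint $\mW=\GGT(\mW)$ to relate $\GGT\calP_T(\mW)$ back to $\calP_T(\mW)$ modulo a term controlled by $\calP_{T^\perp}(\mW)$, which is what produces the mixed quadratic inequality above.
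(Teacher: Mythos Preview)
Your argument is correct and tracks the paper's proof through the subgradient inequality, the vanishing of $\la\bLambda,\mW\ra$, and the split of $\la\mU\mVT-\bLambda,\mW\ra$ over $T$ and $T^\perp$. The only substantive divergence is in the injectivity step relating $\fronorm{\calP_T(\mW)}$ to $\fronorm{\calP_{T^\perp}(\mW)}$.

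The paper packages that step as a separate lemma (Lemma~\ref{lem:upper bound}): it applies the operator $\calG\calAT\calA\calGT+(\calI-\GGT)$ to $\mW$, which vanishes by feasibility, and then lower-bounds the $\calP_T$ piece using the hypothesis $\calA\calA^*\succeq\calI$ (this is what $\|\calA\calA^*\|\geq 1$ is really intended to mean, cf.\ \eqref{ineq: lower bound of ATA}) together with \eqref{ineq: RIP}, and upper-bounds the $\calP_{T^\perp}$ piece using $\opnorm{\calAT\calA}\leq s\mu_0$. This yields $\fronorm{\calP_T(\mW)}\leq 4s\mu_0\fronorm{\calP_{T^\perp}(\mW)}$, which is exactly calibrated to the constant $1/(16s\mu_0)$ in \eqref{ineq: F norm}. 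Your route instead reads $\la\calP_T(\mW),\calP_T\calG\calAT\calA\calGT\calP_T(\mW)\ra=\|\calA\calGT\calP_T(\mW)\|_2^2$ directly, flips it to the $T^\perp$ side via $\calA\calGT(\mW)=0$, and combines \eqref{ineq: RIP} with the identity $\calP_T\GGT\calP_T(\mW)=\calP_T(\mW)-\calP_T\GGT\calP_{T^\perp}(\mW)$ to obtain the quadratic inequality. This is more elementary in that it never invokes $\calA\calA^*\succeq\calI$, and it gives the sharper bound $\fronorm{\calP_T(\mW)}\lesssim\sqrt{s\mu_0}\,\fronorm{\calP_{T^\perp}(\mW)}$; the paper's lemma buys a cleaner explicit constant and a more modular presentation. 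One small remark: your appeal to $\|\calA\calA^*\|\geq 1$ to conclude $s\mu_0\geq 1$ is unnecessary, since $\mu_0\geq 1$ already follows from the isotropy and incoherence properties in Assumption~\ref{assumption 1}.
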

%%%%%
\begin{proof} \kw{}{The structure of the proof is overall similar to those in} \cite{chen2014robust,chen2015incoherence,chen2019exact}.
	Consider any feasible solution $\mZ^\natural + \msh{}{\mM}$, where the perturbation $\msh{}{\mM} \in \C^{sn_1\times n_2}$ satisfies
	\begin{align}
	\label{eq: w cond 1}
	&\calA\calGT(\msh{}{\mM})=0,\\
	\label{eq: w cond 2}
	&(\calI-\calG\calGT)(\msh{}{\mM})=0.
	\end{align}
	The first condition \eqref{eq: w cond 1} implies that $\calGT(\msh{}{\mM})$ is in the null space of $\calA$, while the second condition \eqref{eq: w cond 2} guarantees that $\msh{}{\mM}$ has the vectorized Hankel structure. Note that for any matrix $\msh{}{\mM}$, there exists \msh{}{an $sn_1\times n_2$} matrix $\mS \in T^\perp$ such that
	\begin{align*}
	\la \msh{}{\mM},\mS \ra = \nucnorm{\calP_{T^\perp}(\msh{}{\mM})}\quad\mbox{and}\quad\opnorm{\mS}\leq 1.
	\end{align*}
	\kw{}{In the meantime}, we have $\mU\mV^{\ast}+\mS\in\partial\nucnorm{\mZ^\natural}$. Thus, 
	\begin{align*}
	\Delta:&=\nucnorm{\mZ^\natural+\msh{}{\mM}}-\nucnorm{\mZ^\natural}\\
	&\geq \la\mU\mVT + \mS,\msh{}{\mM}\ra\\
	&= \la\mU\mVT,\msh{}{\mM}\ra + \nucnorm{\calP_{T^\perp}(\msh{}{\mM})}\\
	&\geq\nucnorm{\calP_{T^\perp}(\msh{}{\mM})}-\lab\la\mU\mVT-\bLambda,\msh{}{\mM}\ra\rab-\lab\la\bLambda,\msh{}{\mM}\ra\rab.\numberthis\label{eq: three parts}
	\end{align*}
	The condition \eqref{belong} directly implies that there exists a vector $\vp\in\C^{n}$ such that 
	\begin{align*}
	\calGT(\bLambda) = \calAT(\vp).
	\end{align*}
	Therefore, combining \eqref{belong} and \eqref{eq: w cond 2}, we obtain 
	\begin{align*}
	\lab\la \bLambda,\msh{}{\mM} \ra\rab=\lab\la\bLambda,\calG\calGT(\msh{}{\mM})\ra\rab=\lab\la\calGT(\bLambda),\calGT(\msh{}{\mM})\ra\rab=\lab\la\calAT(\vp),\calGT(\msh{}{\mM})\ra\rab=\la\vp,\calA\calGT(\msh{}{\mM})\ra=0.
	\end{align*}
	Moreover, \kw{}{the second term of \eqref{eq: three parts} can be upper bounded as follows:}
	\begin{align*}
	\lab\la\mU\mVT-\bLambda,\msh{}{\mM}\ra\rab &\leq\lab\la\calP_T(\mU\mVT-\bLambda),\msh{}{\mM}\ra\rab+\lab\la\calP_{T^\perp}(\mU\mVT-\bLambda),\msh{}{\mM}\ra\rab\\
	&\leq\fronorm{\calP_T(\mU\mVT-\bLambda)}\cdot\fronorm{\calP_T(\msh{}{\mM})}+\opnorm{\calP_{T^\perp}(\bLambda)}\cdot\nucnorm{\calP_{T^\perp}(\msh{}{\mM})}\\
	&\leq\frac{1}{16s\mu_0}\cdot\fronorm{\calP_T(\msh{}{\mM})}+\frac{1}{2}\cdot\nucnorm{\calP_{T^\perp}(\msh{}{\mM})},
	\end{align*}
	where the last step is due to \eqref{ineq: F norm} and \eqref{ineq: perp op norm}. Consequently,
	\begin{align*}
	\Delta &\geq\nucnorm{\calP_{T^\perp}(\msh{}{\mM})}-\lab\la\mU\mVT-\bLambda,\msh{}{\mM}\ra\rab-\lab\la\bLambda,\msh{}{\mM}\ra\rab\\
	&\geq\frac{1}{2}\cdot\nucnorm{\calP_{T^\perp}(\msh{}{\mM})}-\frac{1}{16s\mu_0}\cdot\fronorm{\calP_T(\msh{}{\mM})}\\
	&\geq\frac{1}{2}\cdot\fronorm{\calP_{T^\perp}(\msh{}{\mM})}-\frac{1}{16s\mu_0}\cdot\fronorm{\calP_T(\msh{}{\mM})}\\
	&\geq\lb\frac{1}{2}-\frac{1}{16s\mu_0}\cdot4s\mu_0\rb\fronorm{\calP_{T^\perp}(\msh{}{\mM})}\\
	&=\frac{1}{4}\fronorm{\calP_{T^\perp}(\msh{}{\mM})},
	\end{align*}
	where the fourth line is due to Lemma \ref{lem:upper bound} in Section \ref{aux}. It follows that $\Delta>0$ unless $\fronorm{\calP_{T^\perp}(\msh{}{\mM})}=0$.

	\kw{}{Note that $\Delta=0$ requires $\calP_{T^\perp}(\msh{}{\mM})=0$, which in turn requires $\msh{}{\mM}=\calP_T(\msh{}{\mM})$.}  In this case, we have 
	\begin{align*}
	\fronorm{\calP_T(\msh{}{\mM})}^2 &=\la\calP_T(\msh{}{\mM}),\msh{}{\mM}\ra\\
	&=\la\calP_T(\msh{}{\mM}),\calG\calGT(\msh{}{\mM})\ra\\
	&=\la\msh{}{\mM},\calP_T\calG\calGT\calP_T(\msh{}{\mM})-\calP_T\calG\calAT\calA\calGT\calP_T(\msh{}{\mM})\ra+\la\msh{}{\mM},\calP_T\calG\calAT\calA\calGT\calP_T(\msh{}{\mM})\ra\\
	&=\la\msh{}{\mM},\calP_T\calG\calGT\calP_T(\msh{}{\mM})-\calP_T\calG\calAT\calA\calGT\calP_T(\msh{}{\mM})\ra+\la\msh{}{\mM},\calP_T\calG\calAT\calA\calGT(\msh{}{\mM})\ra\\
	&=\la\msh{}{\mM},\calP_T\calG\calGT\calP_T(\msh{}{\mM})-\calP_T\calG\calAT\calA\calGT\calP_T(\msh{}{\mM})\ra\\
	&\leq \opnorm{\calP_{T}\calG\calAT \calA\calGT\calP_{T} -\calP_{T}\calG\calGT\calP_{T} }\cdot \fronorm{\calP_T(\msh{}{\mM})}^2\\
	&\leq\frac{1}{2}\fronorm{\calP_T(\msh{}{\mM})}^2,
	\end{align*}
	which implies that $\calP_T(\msh{}{\mM})=\bzero$. Thus $\mZ^\natural$ is the unique minimizer.
\end{proof}
\subsection{Constructing the dual certificate}
%We will apply the golfing scheme \cite{gross2011recovering,candes2011robust} to construct a dual certificate $\bLambda \in \C^{sn_1\times n_2}$ obeying the conditions \eqref{ineq: F norm}, \eqref{ineq: perp op norm} and \eqref{belong}. This scheme gradually becomes a vital tool to make the analysis of the convex recovery problems involving low-rank structure data.

%At first glance, in order to construct $\bLambda$, we accomplish it by solving the following constrained least squares problem:
%\begin{align*}
%    \min_{\bLambda}~\fronorm{\calP_{T}(\mU\mVT-\bLambda)}^2~\mbox{s.t.}~\calGT(\bLambda)\in\rm{Range}(\calAT).
%\end{align*}
%Here we only take the conditions \eqref{ineq: F norm} and \eqref{belong} into account. Because if $\fronorm{\calP_{T}(\mU\mVT-\bLambda)}$ is minor, which means $\bLambda$ is sufficiently close to $\mU\mVT$, the projection of $\bLambda$ onto $T^{\perp}$ will be negligible and \eqref{ineq: perp op norm} holds consequently. 

\kw{}{It is intuitively clear that} we may construct a dual certificate $\bLambda \in \C^{sn_1\times n_2}$ obeying the conditions \eqref{ineq: F norm}, \eqref{ineq: perp op norm} and \eqref{belong} 
by solving the following constrained least squares problem:
\begin{align*}
\min_{\bLambda}~\fronorm{\calP_{T}(\mU\mVT-\bLambda)}^2~\mbox{s.t.}~\calGT(\bLambda)\in\rm{Range}(\calAT).
\end{align*}
Here only the conditions  \eqref{ineq: F norm} and \eqref{belong} are taken into account because once $\fronorm{\calP_{T}(\mU\mVT-\bLambda)}$ is \kw{}{small}, the projection of $\bLambda$ onto $T^{\perp}$ can be simultaneously small.

\kw{}{Applying} the projected gradient method to solve the above optimization problem, we obtain \kw{}{the following update rule}:
\begin{align*}
\mY^k = \mY^{k-1}+\lb\calG\calAT\calA\calGT+\calI-\calG\calGT\rb\calP_T(\mU\mVT-\mY^{k-1}).
\end{align*}
However, %there exists the statistical dependence among the iterations since in each step, the update takes advantage of the whole information of linear measurements $\calA$, thus the convergence analysis will be difficult. 
\kw{}{due to} the statistical dependence among the iterations, the convergence analysis of the vanilla gradient iteration is difficult. Therefore, \kw{}{the golfing scheme \cite{gross2011recovering} proposes to break the statistical independence by dividing all the linear measurements into a few disjoint partitions and use a fresh partition in each iteration.}

Assume we divide the linear measurements in \eqref{eq: measurements} into $k_0$ partitions, \kw{no by here }{denoted $\{\Omega_k\}_{k=1}^{k_0}$}, and let $m=\frac{n}{k_0}$. Define
\begin{align*}
&\calA_k(\mX)=\lcb\la \vb_{i}\ve_{i}^\tran,\mX\ra\rcb_{i\in\Omega_k} \in \C^{\lab\Omega_k\rab}\numberthis\label{eq:partA}
\end{align*}
\msh{}{and
\begin{align*}
    \calA_k^\ast\calA_k(\mX) = \sum_{i\in\Omega_k}\la\vb_i\ve_i^\tran,\mX\ra\vb_i\ve_i^\tran
    = \sum_{i\in\Omega_k}\vb_i\vb_i^\tran\mX\ve_i\ve_i^\tran \in\C^{s\times n}.\numberthis\label{eq:AktAk}
\end{align*}
}
Then the golfing scheme for the construction of $\bLambda$ satisfying the conditions in Theorem~\ref{thr:optimality} can be formally expressed as
\begin{align*}
&\mY^0=\bzero \in \C^{sn_1\times n_2},\\
&\mY^k=\mY^{k-1}+\lb\frac{n}{m}\calG\calAT_k\calA_k\calGT+\calI-\calG\calGT\rb\calP_T(\mU\mVT-\mY^{k-1}),\quad\text{ for }k=1,\cdots, k_0,\numberthis\label{eq: construct dual}\\
&\bLambda:=\mY^{k_0}.
\end{align*}

\kw{}{Evidently the property of $\bLambda$ relies on the partitions $\{\Omega_k\}_{k=1}^{k_0}$. In order to construct the desirable $\bLambda$, we require   $\{\Omega_k\}_{k=1}^{k_0}$ to satisfy a set of conditions list in the following lemma, in which we have }
\begin{align}
\label{eq: norm definition}
\left\|\mZ\right\|_{\calG,\mathsf{F}}=\sqrt{\sum_{i=0}^{n-1}\frac{\vecnorm{\calGT(\mZ)e_{i}}{2}^2}{w_i}}\quad\mbox{and}\quad
\left\|\mZ\right\|_{\calG,\infty}=\max_{0\leq i\leq n-1}\frac{\vecnorm{\calGT(\mZ)e_{i}}{2}}{\sqrt{w_i}}\quad \mbox{for any }\mZ\in\C^{sn_1\times n_2}.
\end{align}
The proof of this lemma will be presented in Section~\ref{sec: proof of partition}.
%Therefore, we apply the golfing scheme to break the statistical dependency by dividing the $n$ linear measurements of $\mX_0$ into $k_0$ disjoint subsets and indexing these subsets by $\{\Omega_k\}_{k=1}^{k_0}$. Define
%\begin{align*}
%    &\calA_k(\mX)=\lcb\la \vb_{i}\ve_{i}^\tran,\mX\ra\rcb_{i\in\Omega_k} \in \C^{\lab\Omega_k\rab}
%\end{align*}
%and set $m=\frac{n}{k_0}$.  After such rational partitions, the dual certificate $\bLambda$ can be constructed by the following iterations:
%\begin{align*}
%    &\mY^0=\bzero \in \C^{sn_1\times n_2},\\
%    &\mY^k=\mY^{k-1}+\lb\frac{n}{m}\calG\calAT_k\calA_k\calGT+\calI-\calG\calGT\rb\calP_T(\mU\mVT-\mY^{k-1}),\quad\text{ for }k=1,\cdots, k_0,\numberthis\label{eq: construct dual}\\
%    &\bLambda:=\mY^{k_0}.
%\end{align*}
%Since all the subsets are split disjointly, each iteration behaves upon different $\calAT_k\calA_k$ and obviously becomes independent.

%The following lemma states that there exists a partition $\{\Omega_k\}_{k=1}^{k_0}$ possessing several significant properties for the validation of dual certificate and the proof will be presented in the section \ref{sec: proof of part lemma}. Here, we introduce the following two norms as the preliminary work. For a matrix $\mZ\in\C^{sn_1\times n_2}$, define
%\begin{align*}
%    \left\|\mZ\right\|_{\calG,\mathsf{F}}=\sqrt{\sum_{i=0}^{n-1}\frac{\vecnorm{\calGT(\mZ)e_{i}}{2}^2}{w_i}}\quad\mbox{and}\quad
%   \left\|\mZ\right\|_{\calG,\infty}=\max_{0\leq i\leq n-1}\frac{\vecnorm{\calGT(\mZ)e_{i}}{2}}{\sqrt{w_i}}.
%\end{align*}

\begin{lemma}
	\label{lem: partition}
	Let $k_0\in \{1,\cdots,n\}$ and set $m=\frac{n}{k_0}$. If $n\gtrsim k_0\cdot\max\{\mu_1r\log(sn),\log(k_0)\}$, then there exists a partition $\{\Omega_k\}_{k=1}^{k_0}$ such that the following properties hold :
	\begin{align}
	&\quad\frac{m}{2}\leq\lab\Omega_k\rab\leq\frac{3m}{2},\quad k=1,\cdots,k_0,\numberthis\label{ineq: range of omegak}\\
	&\max_{1\leq k\leq k_0}\opnorm{\calP_T\calG\lb\calI-\frac{n}{m}\mathbb E\lsb\calAT_k\calA_k\rsb\rb\calGT\calP_T}\leq \frac{1}{4},\numberthis\label{ineq: part_prop1}\\
	&\max_{1\leq k\leq k_0}\opnorm{\calG\lb\calI-\frac{n}{m}\mathbb E\lsb\calAT_k\calA_k\rsb\rb\calGT(\mZ)}\lesssim \lb \sqrt{\frac{n\log(sn)}{m}}\left\|\mZ\right\|_{\calG,\mathsf{F}}+\frac{n\log(sn)}{m}\left\|\mZ\right\|_{\calG,\infty}\rb,\numberthis\label{ineq: part_prop2}\\
	&\max_{1\leq k\leq k_0}\left\|\calP_T\calG\lb\calI-\frac{n}{m}\mathbb E\lsb\calAT_k\calA_k\rsb\rb\calGT(\mZ)\right\|_{\calG,\mathsf{F}} \lesssim \sqrt{\frac{\mu_1r\log(sn)}{n}}\lb\sqrt{\frac{n\log(sn)}{m}}\left\|\mZ\right\|_{\calG,\mathsf{F}}+\frac{n\log(sn)}{m}\left\|\mZ\right\|_{\calG,\infty}\rb,\numberthis\label{ineq: part_prop3}\\
	&\max_{1\leq k\leq k_0}\left\|\calP_T\calG\lb\calI-\frac{n}{m}\mathbb E\lsb\calAT_k\calA_k\rsb\rb\calGT(\mZ)\right\|_{\calG,\infty} \lesssim \frac{\mu_1r}{n}\lb\sqrt{\frac{n\log(sn)}{m}}\left\|\mZ\right\|_{\calG,\mathsf{F}}+\frac{n\log(sn)}{m}\left\|\mZ\right\|_{\calG,\infty}\rb\numberthis\label{ineq: part_prop4}.
	\end{align}
	Here \kw{}{$\mZ\in\C^{sn_1\times n_2}$} is fixed. \msh{}{Recalling the definition of the operator $\calA_k^\ast\calA_k$ in \eqref{eq:AktAk}
	}, the expectation is taken with respect to $\{\vb_{i}\}_{i\in\Omega_k}$.
\end{lemma}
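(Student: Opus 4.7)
My plan is to construct the partition $\{\Omega_k\}_{k=1}^{k_0}$ by the probabilistic method: assign each index $i\in[n]$ independently and uniformly at random to one of the $k_0$ groups, and show that with positive probability all five properties hold simultaneously for the random partition. Set $\delta_{i,k}=\mathbf{1}_{\{i\in\Omega_k\}}$ and $\xi_{i,k}:=1-\frac{n}{m}\delta_{i,k}$, so that, for each fixed $k$, $\{\delta_{i,k}\}_{i=0}^{n-1}$ are i.i.d. Bernoulli$(m/n)$ and hence $\mathbb{E}[\xi_{i,k}]=0$, $\mathbb{E}[\xi_{i,k}^2]\le n/m$, and $|\xi_{i,k}|\le n/m$. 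Under the isotropy part of Assumption~\ref{assumption 1}, $\mathbb{E}[\calA_k^\ast\calA_k](\mX)=\mX\mP_{\Omega_k}$ with $\mP_{\Omega_k}=\sum_{i\in\Omega_k}\ve_i\ve_i^\ast$, so writing $\calR_i(\mX):=\mX\ve_i\ve_i^\ast$ one obtains the decomposition
\begin{equation*}
\calI-\frac{n}{m}\mathbb{E}[\calA_k^\ast\calA_k]\;=\;\sum_{i=0}^{n-1}\xi_{i,k}\,\calR_i,
\end{equation*}
and the key identity $\calG\calR_i\calG^\ast(\mZ)=\mG_i\otimes(\calG^\ast(\mZ)\ve_i)$ puts every quantity in \eqref{ineq: part_prop1}--\eqref{ineq: part_prop4} into the form of a sum of $n$ independent mean-zero random matrices/operators, which is precisely the setup for (operator) matrix Bernstein (Lemma~\ref{lem: bernstein}).

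The size bound \eqref{ineq: range of omegak} is the easiest step: $|\Omega_k|\sim\mathrm{Binomial}(n,1/k_0)$ has mean $m$, so a Chernoff bound gives $\mathbb{P}(|\Omega_k|\notin[m/2,3m/2])\le 2e^{-cm}$, and a union bound over $k$ yields $o(1)$ failure probability under $n\gtrsim k_0\log k_0$. For \eqref{ineq: part_prop1} I apply operator Bernstein to $\sum_i\xi_{i,k}\calS_i$ with $\calS_i:=\calP_T\calG\calR_i\calG^\ast\calP_T$. Combining $\calG\calR_i\calG^\ast(\mZ)=\mG_i\otimes\calG^\ast(\mZ)\ve_i$, the facts $\|\mG_i\|=1/\sqrt{w_i}$ and $\|\mG_i\|_{\mathsf F}=1$, and the row/column incoherence bounds $\max_i\fronorm{\mU_i}^2,\max_j\vecnorm{\ve_j^\ast\mV}{2}^2\le \mu_1 r/n$ from Lemma~\ref{incoherence 2}, one obtains $\|\calS_i\|\lesssim \mu_1 r/n$; moreover, since $\sum_i\calR_i=\calI$ on $\C^{s\times n}$, we have $\sum_i\calS_i^2\preceq\max_i\|\calS_i\|\cdot\calP_T\calG\calG^\ast\calP_T\preceq(\mu_1 r/n)\calP_T$, so the variance parameter is $\le (n/m)(\mu_1 r/n)=\mu_1 r/m$. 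Operator Bernstein then yields a bound $\lesssim\sqrt{\mu_1 r\log(sn)/m}+(\mu_1 r/n)\log(sn)\le 1/4$ exactly when $n\gtrsim k_0\mu_1 r\log(sn)$.

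For \eqref{ineq: part_prop2} I apply matrix Bernstein to the $sn_1\times n_2$-valued sum $\sum_i\xi_{i,k}(\mG_i\otimes\vv_i)$ with $\vv_i:=\calG^\ast(\mZ)\ve_i$. The per-summand spectral norm is $|\xi_{i,k}|\cdot\|\vv_i\|_2/\sqrt{w_i}\le(n/m)\|\mZ\|_{\calG,\infty}$; for the variance, since $\mG_i^\ast\mG_i$ and $\mG_i\mG_i^\ast$ are both diagonal matrices with entries bounded by $1/w_i$, both $\sum_i(\mG_i\otimes\vv_i)^\ast(\mG_i\otimes\vv_i)=\sum_i\|\vv_i\|_2^2\,\mG_i^\ast\mG_i$ and its dual are spectrally bounded by $\max_a\sum_{i:\,a+j=i\text{ valid }j}\|\vv_i\|_2^2/w_i\le\|\mZ\|_{\calG,\mathsf{F}}^2$. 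Matrix Bernstein then produces exactly the stated inequality. For \eqref{ineq: part_prop3} and \eqref{ineq: part_prop4} I use the coordinatewise decompositions $\|\mW\|_{\calG,\mathsf{F}}^2=\sum_a\|\calG^\ast(\mW)\ve_a\|_2^2/w_a$ and $\|\mW\|_{\calG,\infty}=\max_a\|\calG^\ast(\mW)\ve_a\|_2/\sqrt{w_a}$: with $\mW=\calP_T\calG(\calI-\frac{n}{m}\mathbb{E}[\calA_k^\ast\calA_k])\calG^\ast(\mZ)$, each column $\calG^\ast(\mW)\ve_a=\sum_i\xi_{i,k}\,\calG^\ast\calP_T\calG\calR_i\calG^\ast(\mZ)\ve_a$ is a sum of independent mean-zero $s$-vectors to which vector Bernstein applies. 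The extra $\calP_T$ contributes the improvement factor $\mu_1 r/n$ in the per-summand bound (and its square root in the variance) via \eqref{incoherence condition}; summing (for \eqref{ineq: part_prop3}) or taking the maximum with a union bound over $a\in[n]$ (for \eqref{ineq: part_prop4}) yields exactly the two stated inequalities.

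Finally, a union bound over $k=1,\ldots,k_0$ and the four Bernstein applications within each $k$ gives total failure probability $\lesssim k_0[(sn)^{-c}+e^{-cm}]$, which is $o(1)$ under $n\gtrsim k_0\max\{\mu_1 r\log(sn),\log k_0\}$, so by the probabilistic method a deterministic partition satisfying all five properties exists. I expect the main technical obstacle to be the variance bookkeeping for \eqref{ineq: part_prop3} and \eqref{ineq: part_prop4}: the Kronecker factor $\mG_i\otimes\calG^\ast(\mZ)\ve_i$ must be composed with $\calP_T$ and then re-expressed in the non-standard $\|\cdot\|_{\calG,\mathsf{F}}$ and $\|\cdot\|_{\calG,\infty}$ norms without losing either the $1/\sqrt{w_i}$ weights coming from the $\mG_i$ or the incoherence gain $\mu_1 r/n$ from $\calP_T$; this interplay is what ultimately produces the matching $\log(sn)$ factor in the final sample-complexity bound.
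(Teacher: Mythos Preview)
Your proposal is correct and takes essentially the same approach as the paper: a uniformly random partition, the decomposition $\calI-\tfrac{n}{m}\mathbb{E}[\calA_k^\ast\calA_k]=\sum_i\xi_{i,k}\calR_i$, and then matrix/operator Bernstein (Lemma~\ref{lem: bernstein}) for each of \eqref{ineq: part_prop1}--\eqref{ineq: part_prop4} followed by a union bound over $k$. The only execution detail worth flagging is that for \eqref{ineq: part_prop3} the paper stacks all the $s$-vectors $w_a^{-1/2}\calG^\ast\calP_T\calG(\calG^\ast(\mZ)\ve_i\ve_i^\tran)\ve_a$ over $a\in[n]$ into a single $sn$-vector and applies Bernstein once, bounding the per-summand norm via Lemma~\ref{lem: supp5} (which is where the extra $\sqrt{\log(sn)}$ in the prefactor $\sqrt{\mu_1 r\log(sn)/n}$ originates), rather than your ``per-column Bernstein then sum'' sketch.
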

%%%%
%%%%%%%%%%%%%%
%%%%%%%%%%%%%%

\subsection{Validating the dual certificate and completing the proof}
In this section we show that the dual certificate $\bLambda$ constructed from the iteration \eqref{eq: construct dual} satisfies the conditions in Theorem \ref{thr:optimality}. %Several useful lemmas will be listed below, which play a crucial role in the validation. Their proofs will be deferred in the Section 6.  
The result follows from several lemmas that will be proved in Section \ref{sec: proof of part lemma}. \kw{}{In these lemmas, $\{\Omega_k\}_{k=1}^{k_0}$ is a partition of $\{1,\cdots,n\}$ satisfying the conditions  in Lemma \ref{lem: partition}, and $\{\calA_k\}_{k=1}^{k_0}$ are the associated linear operators defined in \eqref{eq:partA}. Note that we  assume {\eqref{eq: incoherence}} holds in the remainder of this paper, which follows from Assumption~\ref{assumption 2} and Lemma~\ref{incoherence 2}.} 
\begin{lemma}
	\label{lem: key lemma1}
	Assume  $n\gtrsim k_0 s\mu_0\cdot \mu_1r\log(sn)$. \msh{}{Under the condition (III.18) of Lemma \ref{lem: partition},} the event
	\begin{align}
	\max_{1\leq k \leq k_0}\opnorm{\calP_T\calG\lb\calI-\frac{n}{m}\calAT_k\calA_k\rb\calGT\calP_T} \leq \frac{1}{2}
	\end{align}
	occurs with \msh{}{probability at least $1-(sn)^{-c_1}$ for a universal constant $c_1>0$.}
\end{lemma}
%%%

The following corollary is the special case of Lemma \ref{lem: key lemma1} when  $k_0=1$ and $n=m$.
\begin{corollary}
	\label{cor: key corollary}
	Assume  $n\gtrsim s\mu_0\cdot \mu_1r\log(sn)$. The event
	\begin{align}
	\opnorm{\calP_T\calG\calAT\calA\calGT\calP_T-\calP_T\calG\calGT\calP_T}\leq \frac{1}{2}
	\end{align}
	occurs with \msh{}{probability at least $1-(sn)^{-c_1}$ for a universal constant $c_1>0$.}
\end{corollary}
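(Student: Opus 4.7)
My plan is immediate: Corollary \ref{cor: key corollary} is the specialization of Lemma \ref{lem: key lemma1} to the choice $k_0 = 1$, so no new probabilistic argument is needed. When $k_0 = 1$ the partition collapses to the single block $\Omega_1 = \{0,\ldots,n-1\}$, the operator $\calA_1$ coincides with $\calA$, and $m = n$, so the prefactor $n/m$ equals $1$. The operator controlled by Lemma \ref{lem: key lemma1} therefore becomes
\begin{equation*}
\calP_T\calG(\calI - \calAT\calA)\calGT\calP_T = \calP_T\calG\calGT\calP_T - \calP_T\calG\calAT\calA\calGT\calP_T,
\end{equation*}
whose operator norm equals that of $\calP_T\calG\calAT\calA\calGT\calP_T - \calP_T\calG\calGT\calP_T$. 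The sample-complexity condition $n \gtrsim k_0 s\mu_0 \cdot \mu_1 r \log(sn)$ of the lemma reduces to exactly the hypothesis of the Corollary at $k_0=1$.

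Before invoking the lemma I would check that the degenerate partition $\Omega_1 = [n]$ satisfies the preparatory conditions of Lemma \ref{lem: partition}. This is automatic: by the isotropy property in Assumption~\ref{assumption 1}, for any fixed $\mX$ one has
\begin{equation*}
\E{\calAT\calA(\mX)} = \sum_{j=0}^{n-1}\E{\vb_j\vb_j^\ast}\,\mX\,\ve_j\ve_j^\tran = \sum_{j=0}^{n-1}\mX\ve_j\ve_j^\tran = \mX,
\end{equation*}
so $\E{\calAT\calA} = \calI$ exactly. The ``centered'' operators inside the expectations in \eqref{ineq: part_prop1}--\eqref{ineq: part_prop4} therefore vanish identically at $k_0 = 1$, and those four conditions hold trivially.

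If instead one wanted a self-contained proof that does not cite the parent lemma, the natural approach would be to expand
\begin{equation*}
\calP_T\calG\calAT\calA\calGT\calP_T - \calP_T\calG\calGT\calP_T = \sum_{j=0}^{n-1}\calX_j,
\end{equation*}
as a sum of $n$ independent zero-mean operator-valued summands (each $\calX_j$ depending only on $\vb_j$) and apply the matrix Bernstein inequality of Lemma~\ref{lem: bernstein}. The main obstacle in that direct route would be controlling the per-summand spectral norm and the variance proxy; both rely on the incoherence bound \eqref{eq: incoherence} for the singular vectors of $\calH(\mX^\natural)$ combined with the coordinate bound $|\vb_j[\ell]|^2 \leq \mu_0$ from Assumption~\ref{assumption 1}, and together these produce the $s\mu_0 \cdot \mu_1 r$ factor and the $\log(sn)$ tail price appearing in the sample-complexity. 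Since this Bernstein machinery is needed anyway to establish the general Lemma \ref{lem: key lemma1}, the efficient route is the one I take first: simply specialize.
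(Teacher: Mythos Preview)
Your proposal is correct and matches the paper's approach exactly: the paper states the corollary as ``the special case of Lemma \ref{lem: key lemma1} when $k_0=1$ and $n=m$'' without further argument. Your extra verification that the trivial partition $\Omega_1=[n]$ satisfies the hypotheses of Lemma~\ref{lem: partition} (because $\E{\calAT\calA}=\calI$ by isotropy, making the centered operators in \eqref{ineq: part_prop1}--\eqref{ineq: part_prop4} vanish) is a useful sanity check that the paper leaves implicit.
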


%%%%
\begin{lemma}
	\label{lem: key lemma2} 
	%\kw{add sampling complexity }{Assume} 
	\msh{}{Under the condition (III.19) of Lemma \ref{lem: partition},} for any $1\leq k \leq k_0$ and \kw{}{fixed }$\mZ\in\C^{sn_1\times n_2}$, the event
	\begin{align}
	\label{ineq: part 2}
	\opnorm{\lb\frac{n}{m}\calG\calAT_k\calA_k\calGT-\calG\calGT\rb(\mZ)} \lesssim \sqrt{\frac{4nk_0s\mu_0\log(sn)}{m}}\gfronorm{\mZ}+\frac{2ns\mu_0\log(sn)}{m}\ginfnorm{\mZ}
	\end{align}
	occurs with \msh{}{probability at least $1-(sn)^{-c_1}$ for a universal constant $c_1>0$.}
\end{lemma}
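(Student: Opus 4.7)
The plan is to separate the operator into a deterministic bias plus a stochastic fluctuation about its mean. Writing
\begin{align*}
\frac{n}{m}\calG\calAT_k\calA_k\calGT - \calG\calGT
= \calG\lb\tfrac{n}{m}\E{\calAT_k\calA_k} - \calI\rb\calGT
+ \frac{n}{m}\calG\lb\calAT_k\calA_k - \E{\calAT_k\calA_k}\rb\calGT,
\end{align*}
the first term applied to $\mZ$ is already bounded by \eqref{ineq: part_prop2} of Lemma~\ref{lem: partition}, which is of the desired form (and dominated by the claimed right-hand side since $k_0,s,\mu_0\geq 1$). It remains to control the second (stochastic) term via the matrix Bernstein inequality (Lemma~\ref{lem: bernstein}).

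Setting $\mW := \calGT(\mZ)\in\C^{s\times n}$ with $i$th column $\mW_{\cdot,i}$, and using $\calAT_k\calA_k(\mW) = \sum_{i\in\Omega_k}(\vb_i^\tranH\mW_{\cdot,i})\,\vb_i\ve_i^\tran$ together with $\calG(\vb_i\ve_i^\tran) = \mG_i\otimes\vb_i$, the stochastic term equals $\frac{n}{m}\sum_{i\in\Omega_k}\mX_i$, where
\begin{align*}
\mX_i := \mG_i\otimes\bigl((\vb_i\vb_i^\tranH-\mI_s)\mW_{\cdot,i}\bigr)
\end{align*}
are independent mean-zero random matrices. Using $\opnorm{\mG_i}=1/\sqrt{w_i}$ and $\vecnorm{\vb_i}{2}^2\leq s\mu_0$ from \eqref{eq: incoherence b}, one gets $\opnorm{\mX_i}\lesssim s\mu_0\,\vecnorm{\mW_{\cdot,i}}{2}/\sqrt{w_i}\leq s\mu_0\,\ginfnorm{\mZ}$, giving the Bernstein constant $B\lesssim \frac{ns\mu_0}{m}\ginfnorm{\mZ}$. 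For the matrix variance, the Kronecker identity $(\mG_i\otimes\vu)(\mG_i\otimes\vu)^\tranH = \mG_i\mG_i^\tran\otimes\vu\vu^\tranH$, combined with the scalar bound $|\vb_i^\tranH\mW_{\cdot,i}|^2\leq s\mu_0\vecnorm{\mW_{\cdot,i}}{2}^2$ and isotropy \eqref{eq: isotropy}, yields $\E{\mX_i\mX_i^\tranH}\preceq s\mu_0\vecnorm{\mW_{\cdot,i}}{2}^2\,\mG_i\mG_i^\tran\otimes\mI_s$. Since each $\mG_i\mG_i^\tran$ is diagonal with entries $1/w_i$ on a shifted support, $\opnorm{\sum_{i\in\Omega_k}\vecnorm{\mW_{\cdot,i}}{2}^2\,\mG_i\mG_i^\tran}\leq\gfronorm{\mZ}^2$; an analogous calculation controls $\E{\mX_i^\tranH\mX_i}$. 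Hence $\sigma^2\lesssim \frac{n^2 s\mu_0}{m^2}\gfronorm{\mZ}^2$, and plugging $B$, $\sigma^2$ into \eqref{ineq: bernstein} together with $n/m=k_0$ reproduces the claimed inequality.

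The main obstacle is the variance estimate. Expanding $\E{\mX_i\mX_i^\tranH}$ naively produces the fourth-order moment $\E{|\vb_i^\tranH\mW_{\cdot,i}|^2\vb_i\vb_i^\tranH}$, which is not pinned down by Assumption~\ref{assumption 1} alone. The remedy is to apply the deterministic Cauchy--Schwarz bound $|\vb_i^\tranH\mW_{\cdot,i}|^2\leq s\mu_0\vecnorm{\mW_{\cdot,i}}{2}^2$ \emph{before} taking expectation, reducing the fourth-order expression back to the isotropic second-moment identity $\E{\vb_i\vb_i^\tranH}=\mI_s$. A secondary subtlety is verifying $\opnorm{\sum_i \vecnorm{\mW_{\cdot,i}}{2}^2\,\mG_i\mG_i^\tran}\leq\gfronorm{\mZ}^2$: the block-anti-diagonal structure of each $\mG_i$ ensures that every index $i\in[n]$ contributes to at most one entry per diagonal position, so the spectral norm of the sum collapses to a maximum over the diagonal that is bounded by the total sum defining $\gfronorm{\cdot}^2$.
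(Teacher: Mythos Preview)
Your proposal is correct and follows essentially the same route as the paper: the same bias/fluctuation split, the same summands $\mX_i=\mG_i\otimes\bigl((\vb_i\vb_i^{\tranH}-\mI_s)\calGT(\mZ)\ve_i\bigr)$, the same $B$ bound via $\opnorm{\mG_i}=1/\sqrt{w_i}$ and $\vecnorm{\vb_i}{2}^2\le s\mu_0$, and the same variance estimate leading to $\sigma^2\lesssim s\mu_0\gfronorm{\mZ}^2$ before scaling by $(n/m)^2$. The only cosmetic difference is that the paper controls the variance by computing $\E{\vecnorm{(\vb_i\vb_i^{\tranH}-\mI)\calGT(\mZ)\ve_i}{2}^2}$ via $\E{(\vb_i\vb_i^{\tranH}-\mI)^2}\preceq s\mu_0\mI$ and then uses the triangle inequality $\opnorm{\sum_i c_i\mG_i^\tran\mG_i}\le\sum_i c_i/w_i$, whereas you reach the same bound by Cauchy--Schwarz on the fourth-order term plus the diagonal-support observation for $\sum_i c_i\mG_i\mG_i^\tran$; both arguments are equivalent.
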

%%%%

\begin{lemma}
	\label{lem: key lemma3} 
	%\kw{add sampling complexity }{Assume ...}
	\msh{}{Under the condition (III.20) of Lemma \ref{lem: partition},} for any $1\leq k \leq k_0$ and \kw{}{fixed} $\mZ\in\C^{sn_1\times n_2}$, 
	%under the incoherence condition \eqref{eq: incoherence}, 
	the event
	\begin{align}
	\label{ineq: part3}
	\gfronorm{\calP_T\calG\lb\calI-\frac{n}{m}\calAT_k\calA_k\rb\calGT(\mZ)}\lesssim\sqrt{\frac{\mu_1r\log(sn)}{n}}\lb\sqrt{\frac{4nk_0s\mu_0\log(sn)}{m}}\gfronorm{\mZ}+\frac{2ns\mu_0\log(sn)}{m}\ginfnorm{\mZ}\rb
	\end{align}
	occurs with \msh{}{probability at least $1-(sn)^{-c_1}$ for a universal constant $c_1>0$.}    
\end{lemma}
%%%%

\begin{lemma}
	\label{lem: key lemma4}
	%\kw{add sampling complexity }{Assume ...} 
	\msh{}{Under the condition (III.21) of Lemma \ref{lem: partition},} for any $1\leq k \leq k_0$ and \kw{}{fixed} $\mZ\in\C^{sn_1\times n_2}$, the event
	\begin{align}
	\ginfnorm{\calP_T\calG\lb\calI-\frac{n}{m}\calAT_k\calA_k\rb\calGT(\mZ)}\lesssim \frac{\mu_1 r}{n}\lb\sqrt{\frac{4nk_0s\mu_0\log(sn)}{m}}\gfronorm{\mZ}+\frac{2ns\mu_0 \log(sn)}{m}\ginfnorm{\mZ}\rb
	\end{align}
	occurs with \msh{}{probability at least $1-ns^{-c_2}$ for a numerical constant $c_2>2$.} 
\end{lemma}
%%%%

\begin{lemma}
	\label{lem: key lemma5}
	\kw{}{Recalling that $\mU$ and $\mV$ satisfy {\eqref{eq: incoherence}}}, we have
	\begin{align}
	\gfronorm{\mU\mVT}^2\lesssim \frac{\mu_1r\log(sn)}{n}\quad\mbox{and}\quad\ginfnorm{\mU\mVT}\leq \frac{\mu_1r}{n}.
	\end{align}
\end{lemma}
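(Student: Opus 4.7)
The plan is to evaluate $\calGT(\mU\mVT)\ve_i$ explicitly and then apply the incoherence bounds \eqref{incoherence condition}; since $\mU$ and $\mV$ are deterministic once $\mX^\natural$ is fixed, no probabilistic concentration is needed. Let $\mU_j\in\C^{s\times r}$ denote the $j$-th row-block of $\mU$, i.e.\ $\mU_j=\mU(js:(j+1)s-1,:)$, and write $\vv_k:=\mVT\ve_k\in\C^{r}$, noting that $\vecnorm{\vv_k}{2}=\vecnorm{\ve_k^\tran\mV}{2}$. Because the $(j,k)$-th $s\times 1$ block of $\mU\mVT$ equals $\mU_j\vv_k$, unfolding $\calGT=\calD^{-1}\calH^{\ast}$ yields
\begin{align*}
\calGT(\mU\mVT)\ve_i \;=\; \frac{1}{\sqrt{w_i}}\sum_{\substack{j+k=i\\0\leq j\leq n_1-1\\0\leq k\leq n_2-1}}\mU_j\vv_k.
\end{align*}

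I would then combine the submultiplicative bound $\vecnorm{\mU_j\vv_k}{2}\leq\fronorm{\mU_j}\vecnorm{\vv_k}{2}$ with the triangle and Cauchy--Schwarz inequalities to obtain
\begin{align*}
\Bigg\|\sum_{j+k=i}\mU_j\vv_k\Bigg\|_{2} \;\leq\; \sqrt{\sum_{j+k=i}\fronorm{\mU_j}^2}\;\sqrt{\sum_{j+k=i}\vecnorm{\vv_k}{2}^2}\;\leq\;\frac{w_i\mu_1r}{n},
\end{align*}
where the last inequality applies \eqref{incoherence condition} once to each of the two factors. Dividing by $\sqrt{w_i}$ gives the uniform pointwise estimate $\vecnorm{\calGT(\mU\mVT)\ve_i}{2}\leq\sqrt{w_i}\,\mu_1r/n$.

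The $(\calG,\infty)$ bound then follows at once: $\ginfnorm{\mU\mVT}=\max_i\vecnorm{\calGT(\mU\mVT)\ve_i}{2}/\sqrt{w_i}\leq\mu_1r/n$. For the $(\calG,\mathsf{F})$ bound, I would square the pointwise estimate, divide by $w_i$, and sum over $i\in\{0,\dots,n-1\}$:
\begin{align*}
\gfronorm{\mU\mVT}^2\;\leq\;\sum_{i=0}^{n-1}\frac{w_i\mu_1^2r^2/n^2}{w_i}\;=\;\frac{\mu_1^2r^2}{n},
\end{align*}
which is dominated by $\mu_1r\log(sn)/n$ in the sampling regime $n\gtrsim\mu_0\mu_1sr\log^4(sn)$ of Theorem~\ref{main result}.

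The only delicacy is bookkeeping the weights $w_i$, which appear simultaneously in the definition of $\calGT$ and in the weighted norms $\gfronorm{\cdot}$ and $\ginfnorm{\cdot}$; beyond this there is no deeper obstacle, because $\mU$ and $\mV$ are deterministic given $\mX^\natural$ so no concentration inequalities enter.
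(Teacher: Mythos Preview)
Your treatment of the $(\calG,\infty)$ bound is correct and is essentially the paper's argument, just written directly rather than via the dual formulation $\sup_{\|\vx\|_2=1}|\langle\mU\mVT,\mG_i\otimes\vx\rangle|$.

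The $(\calG,\mathsf F)$ part, however, has a genuine gap. Your pointwise bound gives the valid estimate $\gfronorm{\mU\mVT}^2\leq \mu_1^2r^2/n$, but your claim that this is dominated by $\mu_1 r\log(sn)/n$ under the sampling regime $n\gtrsim\mu_0\mu_1 sr\log^4(sn)$ is false. That implication would require $\mu_1 r\lesssim\log(sn)$, which the sampling condition does not enforce: for instance, with $\mu_0=\mu_1=s=1$ and $r=n^{1/3}$, one has $n^{2/3}\gtrsim\log^4 n$ (so the regime is satisfied) while $\mu_1 r=n^{1/3}\gg\log(sn)$. Moreover, the lemma is stated as a deterministic consequence of the incoherence condition~\eqref{eq: incoherence} alone, with no sampling assumption, so appealing to the regime of Theorem~\ref{main result} is not legitimate here. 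The discrepancy matters downstream: feeding $\mu_1^2r^2/n$ into the validation of~\eqref{ineq: perp op norm} would produce a sample requirement scaling like $\mu_1^2 r^2$ rather than $\mu_1 r$.

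The paper avoids this loss by not summing the pointwise $(\calG,\infty)$ bound. Instead it observes that the $i$-th row block of $\mU\mVT$ satisfies $\fronorm{\mU_i\mVT}^2\leq\fronorm{\mU_i}^2\leq\mu_1 r/n$, and then invokes Corollary~\ref{cor: useful cor2} (built on Lemma~\ref{lem: supp4}), which converts a uniform row-block bound $B$ into $\gfronorm{\mZ}^2\lesssim B\log n$. This yields $\gfronorm{\mU\mVT}^2\lesssim(\mu_1 r/n)\log(sn)$ directly. The key point is that Lemma~\ref{lem: supp4} exploits the Hankel anti-diagonal structure to trade the naive factor of $n$ (from summing $n$ terms) for a $\log n$, whereas your approach pays $\mu_1 r$ instead.
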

%%%%

Equipped with these lemmas, \kw{}{we are in position to} validate the conditions in Theorem \ref{thr:optimality}. 
\msh{}{Note that $\opnorm{\calA\calA^\ast}\geq 1$ holds due to \eqref{ineq: lower bound of ATA} in Lemma~\ref{lem: prop of A},} and  \eqref{ineq: RIP}
	is proved in Corollary~\ref{cor: key corollary}.
	As for \eqref{belong}, it follows immediately from the construction of $\bLambda$. Thus, it remains to validate 
	\eqref{ineq: F norm} and \eqref{ineq: perp op norm}.

\paragraph{Validating \eqref{ineq: F norm}}
A simple calculation yields that 
\begin{align*}
\mE^k :&= \calP_T\lb\mU\mVT - \mY^k\rb\\
&= \calP_T\lb\mU\mVT-\mY^{k-1}-\lb\frac{n}{m}\calG\calAT_k\calA_k\calGT+\calI-\calG\calGT\rb\calP_T(\mE^{k-1})\rb\\
&= \calP_T(\mE^{k-1}) - \calP_T\lb\frac{n}{m}\calG\calAT_k\calA_k\calGT+\calI-\calG\calGT\rb\calP_T(\mE^{k-1})\\
&= \calP_T\lb\calG\calGT-\frac{n}{m}\calG\calAT_k\calA_k\calGT\rb\calP_T(\mE^{k-1}),\numberthis\label{eq:iteration of Ek}
\end{align*}
\msh{}{where the second line is due to \eqref{eq: construct dual}}.
By the construction of $\bLambda$, we can obtain 
\begin{align*}
\fronorm{\calP_T\lb\mU\mVT-\bLambda\rb} &= \fronorm{\mE^{k_0}}\\
&= \fronorm{\calP_T\lb\calG\calGT-\frac{n}{m}\calG\calAT_{k_0}\calA_{k_0}\calGT\rb\calP_T(\mE^{k_0-1})}\\
&\leq \opnorm{\calP_T\lb\calG\calGT-\frac{n}{m}\calG\calAT_{k_0}\calA_{k_0}\calGT\rb\calP_T}\cdot\fronorm{\mE^{k_0-1}}\\
&\overset{(a)}{\leq} \frac{1}{2}\fronorm{\mE^{k_0-1}} \leq \frac{1}{2^{k_0}}\fronorm{\mE^0}\\
&= \frac{1}{2^{k_0}}\fronorm{\mU\mVT} \leq \frac{r}{2^{k_0}}\\
&\leq \frac{1}{16s\mu_0},
\end{align*}
where step $(a)$ is due to Lemma \ref{lem: key lemma1} and the last inequality holds when $k_0=\lceil \log_2(16rs\mu_0)\rceil$.

\paragraph{Validating \eqref{ineq: perp op norm}}
\msh{}{First recall that $\mE^k := \calP_T\lb\mU\mVT - \mY^k\rb$. According to \eqref{eq: construct dual}, we have 
\begin{align*}
	\bLambda = \sum_{k=1}^{k_0}\lb\frac{n}{m}\calG\calAT_k\calA_k\calGT + \calI - \calG\calGT\rb(\mE^{k-1}).
\end{align*}
Then it follows that
} 
\begin{align*}
\opnorm{\calP_{T^{\perp}}(\bLambda)} &= \opnorm{\calP_{T^{\perp}}\lb\sum_{k=1}^{k_0}\lb\frac{n}{m}\calG\calAT_k\calA_k\calGT + \calI - \calG\calGT\rb(\mE^{k-1})\rb}\\
&= \opnorm{\calP_{T^{\perp}}\lb\sum_{k=1}^{k_0}\lb\frac{n}{m}\calG\calAT_k\calA_k\calGT - \calG\calGT\rb(\mE^{k-1})\rb}\\
&\leq \sum_{k=1}^{k_0}\opnorm{\lb\frac{n}{m}\calG\calAT_k\calA_k\calGT-\calG\calGT\rb(\mE^{k-1})},\numberthis\label{ineq: PTLambda}
\end{align*}
where the second line follows from the fact that $\mE^{k-1}\in T$.

For any $1\leq k \leq k_0$, Lemma \ref{lem: key lemma2} implies that 
\begin{align}
\label{ineq: vali4.3_1}
\opnorm{\lb\frac{n}{m}\calG\calAT_k\calA_k\calGT-\calG\calGT\rb(\mE^{k-1})} \lesssim \sqrt{\frac{4n k_0s\mu_0\log(s n)}{m}}\gfronorm{\mE^{k-1}} + \frac{2n s\mu_0\log(s n)}{m}\ginfnorm{\mE^{k-1}}.
\end{align}
\msh{}{Recalling from the equality \eqref{eq:iteration of Ek}, we have} 
\begin{align*}
\mE^{k-1} = \calP_T\lb\calG\calGT-\frac{n}{m}\calG\calAT_{k-1}\calA_{k-1}\calGT\rb\calP_T(\mE^{k-2}).
\end{align*}
Applying Lemma \ref{lem: key lemma3} and Lemma \ref{lem: key lemma4} yields that
\begin{align*}
\gfronorm{\mE^{k-1}} &= \msh{}{\gfronorm{\calP_T\calG\lb\calI-\frac{n}{m}\calAT_{k-1}\calA_{k-1}\rb\calGT\calP_T(\mE^{k-2})}}\\ &=\gfronorm{\calP_T\calG\lb\calI-\frac{n}{m}\calAT_{k-1}\calA_{k-1}\rb\calGT(\mE^{k-2})}\\
&\lesssim \sqrt{\frac{\mu_1r\log(s n)}{n}}\lb\sqrt{\frac{4n k_0s\mu_0\log(s n)}{m}}\gfronorm{\mE^{k-2}}+\frac{2n s\mu_0\log(s n)}{m}\ginfnorm{\mE^{k-2}}\rb\numberthis\label{ineq: recursive1}
\end{align*}
and
\begin{align*}
\ginfnorm{\mE^{k-1}} &= \msh{}{\ginfnorm{\calP_T\calG\lb\calI - \frac{n}{m}\calAT_{k-1}\calA_{k-1}\rb\calGT\calP_T(\mE^{k-2})}}\\
&= \ginfnorm{\calP_T\calG\lb\calI - \frac{n}{m}\calAT_{k-1}\calA_{k-1}\rb\calGT(\mE^{k-2})}\\
&\lesssim \frac{\mu_1 r}{n}\lb\sqrt{\frac{4n k_0s\mu_0\log(s n)}{m}}\gfronorm{\mE^{k-2}} + \frac{2n s\mu_0 \log(s n)}{m}\ginfnorm{\mE^{k-2}}\rb.\numberthis\label{ineq: recursive2}
\end{align*}
After substituting \eqref{ineq: recursive1} and \eqref{ineq: recursive2} into \eqref{ineq: vali4.3_1}, we have 
\begin{align*}
\opnorm{\lb\frac{n}{m}\calG\calAT_k\calA_k\calGT-\calG\calGT\rb(\mE^{k-1})} &\lesssim \sqrt{\frac{4n k_0s\mu_0\log(s n)}{m}}\gfronorm{\mE^{k-1}} + \frac{2n s\mu_0\log(s n)}{m}\ginfnorm{\mE^{k-1}}\\
&\lesssim \lb\sqrt{\frac{4n k_0s\mu_0\log(s n)}{m}}\cdot\sqrt{\frac{\mu_1r\log(s n)}{n}} + \frac{2n s\mu_0\log(s n)}{m}\cdot\frac{\mu_1r}{n}\rb\\
&\quad \cdot\lb\sqrt{\frac{4n k_0s\mu_0\log(s n)}{m}}\gfronorm{\mE^{k-2}} + \frac{2n s\mu_0\log(s n)}{m}\ginfnorm{\mE^{k-2}}\rb\\
&= \lb\sqrt{\frac{4k_0s\mu_0\mu_1r\log^2(s n)}{m}} + \frac{2s\mu_0\mu_1r\log(s n)}{m}\rb\\
&\quad \cdot\lb\sqrt{\frac{4n k_0s\mu_0\log(s n)}{m}}\gfronorm{\mE^{k-2}} + \frac{2n s\mu_0\log(s n)}{m}\ginfnorm{\mE^{k-2}}\rb\\
&\overset{(a)}{\leq} \frac{1}{2}\lb\sqrt{\frac{4n k_0s\mu_0\log(s n)}{m}}\gfronorm{\mE^{k-2}} + \frac{2n s\mu_0\log(s n)}{m}\ginfnorm{\mE^{k-2}}\rb\\
&\leq \lb\frac{1}{2}\rb^{k-1}\cdot\lb\sqrt{\frac{4n k_0s\mu_0\log(s n)}{m}}\gfronorm{\mE^{0}} + \frac{2n s\mu_0\log(s n)}{m}\ginfnorm{\mE^{0}}\rb,
\end{align*}
where step $(a)$ holds provided $m \gtrsim k_0s\mu_0\mu_1r\log^2(s n)$.

Finally, noting that $\mE^0 = \mU\mVT$, the application of Lemma \ref{lem: key lemma5} gives
\begin{align*}
\opnorm{\calP_{T^{\perp}}(\bLambda)} &\msh{}{\leq \sum_{k=1}^{k_0}\opnorm{\lb\frac{n}{m}\calG\calAT_k\calA_k\calGT-\calG\calGT\rb(\mE^{k-1})}}\\
&\leq \sum_{k=1}^{k_0} \lb\frac{1}{2}\rb^{k-1}\cdot\lb\sqrt{\frac{4n k_0s\mu_0\log(s n)}{m}}\gfronorm{\mE^{0}} + \frac{2n s\mu_0\log(s n)}{m}\ginfnorm{\mE^{0}}\rb\\
&\lesssim \frac{1}{2}\lb\sqrt{\frac{4 k_0s\mu_0\mu_1r\log^2(s n)}{m}} + \frac{2 s\mu_0\mu_1r\log(s n)}{m}\rb\\
&\leq \frac{1}{2}
\end{align*}
\msh{}{when $m\gtrsim k_0s\mu_0\mu_1r\log^2(s n)$, where the first inequality follows from \eqref{ineq: PTLambda}.
}

\cjc{
Thus we have shown that the dual certificate $\bLambda$ constructed from the iteration \eqref{eq: construct dual} satisfies the conditions in Theorem \ref{thr:optimality} with probability at least $1-c_0(sn)^{-c_1} - ns^{-c_2}$ provided that $n=m k_0\gtrsim \mu_0 \mu_1\cdot sr\log^4(sn)$.   
Corollary~\ref{cor: key corollary} implies \eqref{ineq: RIP} holds with probability at least $1-(sn)^{-c_3}$ if $n\gtrsim \mu_0 \mu_1\cdot sr\log(sn)$. Taking an upper bound on the number of measurements completes the proof of  Theorem \ref{main result}.
}
\section{Proof of Lemma \ref{lem: partition}}
\label{sec: proof of partition}
In this section, we will use probabilistic \kw{}{argument} to show that the events \eqref{ineq: range of omegak} - \eqref{ineq: part_prop4} occur with high probability if we construct $\{\Omega_k\}_{k=1}^{k_0}$ \kw{}{in a random manner} and thus conclude that there at least exists a partition  satisfying \eqref{ineq: range of omegak} - \eqref{ineq: part_prop4}.
%these properties.

Let $\{\epsilon_i\}_{i=0}^{n-1}$ be $n$ independent random variables, each of which takes value in $\{1,\cdots,k_0\}$ \kw{}{uniformly at random}. For any $k\in\{1,\cdots,k_0\}$, we construct $\{\Omega_k\}_{k=1}^{k_0}$ as follows: 
\begin{align*}
\Omega_k=\{i\in[n]:\epsilon_i=k\}.
\end{align*}
Clearly, $\{\Omega_k\}_{k=1}^{k_0}$ form a partition of $[n]$. For any fixed $k\in\{1,\cdots,k_0\}$, we also have
%It is easy to see that $\cup_k \Omega_k=[n]$ and for any $t\neq s$, $\Omega_t\cap\Omega_s=\emptyset$. The construction of $\Omega_k$ implies that all the elements in $\Omega_k$ satisfy independent Bernoulli distribution with the parameter $p=\frac{1}{k_0}$:
\begin{align*}
\P\lcb i\in\Omega_k\rcb = \P\lcb\epsilon_i=k\rcb = \frac{1}{k_0}\quad\text{for all}~ i=0,\cdots,n-1.
\end{align*}
Therefore $\lab\Omega_k\rab$ can be viewed as the sum of Bernoulli random variables, i.e.,
\begin{align}
\label{eq:delta}
\lab\Omega_k\rab = \sum_{i=0}^{n-1}\boldsymbol{1}\{i \in \Omega_k\} =: \sum_{i=0}^{n-1}\delta_i,
\end{align}
where $\{\delta_i\}_{i=0}^{n-1}$ are i.i.d. Bernoulli random variables with parameter $p=\frac{1}{k_0}=\frac{m}{n}$. The application of the Hoeffding inequality yields that $\frac{m}{2}\leq\lab\Omega_k\rab\leq\frac{3m}{2}$ holds with  probability at least $1-2\exp(-c m)$ for a universal constant $c > 0$. Then we can take the uniform bound to obtain 
\begin{align*}
\P\lcb\frac{m}{2}\leq\lab\Omega_k\rab\leq\frac{3m}{2}~\mbox{for all}~ k\rcb \geq 1-2k_0\exp(-c m)\geq\frac{1}{2},
\end{align*}
where the last inequality is due to $m=\frac{n}{k_0}\gtrsim\log(k_0)$.

\kw{}{Our next goal is to show} that the events \eqref{ineq: part_prop1} - \eqref{ineq: part_prop4} occur with high probability. We will  first apply the matrix Bernstein inequality \eqref{ineq: bernstein} to obtain the desired upper bounds for fixed $k$, and then take the uniform bound analysis to complete the proof. 
%For the notation simplicity, let $\Omega:=\Omega_k$ 
%Then each index $i\in[n]$ independently belongs to $\Omega$ with probability $p=\frac{1}{k_0}=\frac{m}{n}$. 
%and define the operator $\calA_{\Omega}(\mX) := \calA_k(\mX)=\{\la\vb_{i}\ve_{i}^\tran,\mX\ra\}_{i\in\Omega_k}$.
%%%%

\subsection{Proof of \eqref{ineq: part_prop1}}

\kw{}{For any $\mZ\in \C^{s n_1\times n_2}$, by the definition of \msh{}{$\calA_k^\ast\calA_k$ in \eqref{eq:AktAk}}, we have}
\begin{align*}
\E{\calAT_{k}\calA_{k}}\calGT\calP_T(\mZ) &=\E{\sum_{i\in\Omega_k}\la\vb_{i}\ve_{i}^\tran,\calGT\calP_T(\mZ)\ra\vb_{i}\ve_{i}^\tran}\\
&=\sum_{i\in\Omega_k}\E{\vb_{i}\vb_{i}^{\ast}}\calGT\calP_T(\mZ)\ve_{i}\ve_{i}^\tran\\
&=\calGT\calP_T(\mZ)\sum_{i\in\Omega_k}\ve_{i}\ve_{i}^\tran,
\end{align*}
where the third line follows from the isotropy property \eqref{eq: isotropy} of $\{\vb_i\}$. 

As a result, one has the following equality
\begin{align*}
\opnorm{\calP_T\calG\lb\calI-\frac{1}{p}\E{\calAT_{k}\calA_{k}}\rb\calGT\calP_T} &=\sup_{\fronorm{\mW}=1}\fronorm{\calP_T\calG\lb\calI-\frac{1}{p}\E{\calAT_{k}\calA_{k}}\rb\calGT\calP_T(\mW)}\\
&=\sup_{\fronorm{\mW}=1}\fronorm{\frac{1}{p}\calP_T\calG\calGT\calP_T(\mW)\sum_{i\in\Omega_k}\ve_i\ve_i^\tran-\calP_T\calG\calGT\calP_T(\mW)}\\
&= \sup_{\fronorm{\mW}=1}\fronorm{\sum_{i=0}^{n-1}\lb\frac{\delta_i}{p}-1\rb\calP_T\calG\lb\calGT\calP_T(\mW)\ve_i\ve_i^\tran\rb}\\
&=: \opnorm{\sum_{i=0}^{n-1}\lb\frac{\delta_i}{p}-1\rb\calX_i},
\end{align*}
where \msh{}{$\delta_i$ is the Bernoulli random variable defined in \eqref{eq:delta} and}
%$\{\delta_i\}_{i=0}^{n-1}$ are independent Bernoulli random variables with the parameter $p$ and 
$\calX_i$ is the operator defined as
\begin{align*}
\calX_i(\mW)=\calP_T\calG\lb\calGT\calP_T(\mW)\ve_i\ve_i^\tran\rb
\end{align*}
\msh{}{for any $\mW\in\C^{sn_1\times n_2}$.}
It is easy to verify that $\calX_i$ is self-adjoint and positive semi-definite. 

In order to apply the matrix Bernstein inequality \eqref{ineq: bernstein} to bound $\opnorm{\sum_{i=0}^{n-1}\lb\frac{\delta_i}{p}-1\rb\calX_i}$, one needs to bound $\opnorm{\lb\frac{\delta_i}{p}-1\rb\calX_i}$ and $\opnorm{\E{\sum_{i=0}^{n-1}\lb\frac{\delta_i}{p}-1\rb^2\calX_i^2}}$.  

For the upper bound of $\opnorm{\lb\frac{\delta_i}{p}-1\rb\calX_i}$, a simple calculation yields that
\begin{align*}
\opnorm{\lb\frac{\delta_i}{p}-1\rb\calX_i} &\leq \frac{1}{p}\opnorm{\calX_i}\\
&=\frac{1}{p}\sup_{\fronorm{\mW}=1}\fronorm{\calP_T\calG\lb\calGT\calP_T(\mW)\ve_i\ve_i^\tran\rb}\\
&\leq \frac{1}{p}\sup_{\fronorm{\mW}=1}\fronorm{\mW}\cdot\frac{2\mu_1r}{n}\\
&=\frac{2\mu_1r}{n p},\numberthis\label{ineq: calX_i}
\end{align*}
where 
%the first line follows from the fact that $\delta_i$ is a Bernoulli random variable and $p\leq 1$. 
the third line follows from Corollary \ref{cor: useful cor1}. 

To bound $\opnorm{\E{\sum_{i=0}^{n-1}\lb\frac{\delta_i}{p}-1\rb^2\calX_i^2}}$, we have
\begin{align*}
\opnorm{\sum_{i=0}^{n-1}\E{\lb\frac{\delta_i}{p}-1\rb^2\calX_i^2}} &\leq\frac{1}{p}\opnorm{\sum_{i=0}^{n-1}\calX_i^2}\\
&\leq \frac{1}{p}\max_{0\leq i\leq n-1}\opnorm{\calX_i}\cdot\opnorm{\sum_{i=0}^{n-1}\calX_i}\\
&\leq \frac{2\mu_1r}{n p}\sup_{\fronorm{\mW}=1}\opnorm{\sum_{i=0}^{n-1}\calX_i(\mW)}\\
&= \frac{2\mu_1r}{n p}\sup_{\fronorm{\mW}=1}\fronorm{\sum_{i=0}^{n-1}\calP_T\calG\lb\calGT\calP_T(\mW)\ve_i\ve_i^\tran\rb}\\
&= \frac{2\mu_1r}{n p}\sup_{\fronorm{\mW}=1}\fronorm{\calP_T\calG\calGT\calP_T(\mW)}\\
&= \frac{2\mu_1r}{n p}\opnorm{\calP_T\calG\calGT\calP_T}\\
%&\leq \frac{2\mu_1r}{n p} \opnorm{\calP_T}\cdot\opnorm{\calG}\cdot\opnorm{\calGT}\cdot\opnorm{\calP_T}\\
&\leq \frac{2\mu_1r}{n p},
\end{align*}
where the second line is due to the positive semi-definite property of $\calX_i$, \msh{}{the third line follows from \eqref{ineq: calX_i},} and the last line follows from the fact that $\opnorm{\calG}=1$, $\opnorm{\calGT}\leq 1$ \msh{}{and $\calP_T$ is the projection operator}.

\kw{}{The application of} the matrix Bernstein inequality implies that
\begin{align*}
\opnorm{\calP_T\calG\lb\calI-\frac{1}{p}\E{\calAT_{k}\calA_{k}}\rb\calGT\calP_T} &\lesssim \sqrt{\frac{\mu_1r\log(s n)}{n p}}+\frac{\mu_1r\log(s n)}{n p}\\
&\lesssim \sqrt{\frac{\mu_1r\log(s n)}{n p}}\\
&\leq \frac{1}{4}
\end{align*}
holds with  probability at least \msh{}{$1-(sn)^{-c}$ for a universal constant $c>0$}, where the second and third lines are due to $p\gtrsim\frac{\mu_1r\log(s n)}{n}$. Finally, we take the uniform bound to obtain that 
\begin{align*}
\P\lcb\max_{1\leq k\leq k_0}\opnorm{\calP_T\calG\lb\calI-\frac{n}{m}\mathbb E\lsb\calAT_k\calA_k\rsb\rb\calGT\calP_T}\leq\frac{1}{4}\rcb\geq \msh{}{1-k_0(sn)^\msh{}{{-c}}\geq 1-(sn)^\msh{}{{-(c-1)}}},
\end{align*}
where the last inequality follows from the fact that $k_0\ll sn$.
%%%%

\subsection{Proof of \eqref{ineq: part_prop2}}
\msh{}{Following the definition of $\calA_k^\ast\calA_k$ in \eqref{eq:AktAk} and the isotropy property of $\{\vb_i\}$ in \eqref{eq: isotropy}, we have}
\begin{align*}
\opnorm{\calG\lb\calI-\frac{1}{p}\E{\calAT_{k}\calA_{k}}\rb\calGT(\mZ)} &= \opnorm{\frac{1}{p}\calG\calGT(\mZ)\sum_{i\in\Omega_k}\ve_i\ve_i^\tran-\calG\calGT(\mZ)}\\
&=\opnorm{\sum_{i=0}^{n-1}\lb\frac{\delta_i}{p}-1\rb\calG\lb\calGT(\mZ)\ve_i\ve_i^\tran\rb}\\
&=: \opnorm{\sum_{i=0}^{n-1}\mX_i},
\end{align*}
where \msh{}{$\delta_i$ is defined in \eqref{eq:delta} and $\mX_i:= \lb\frac{\delta_i}{p}-1\rb\calG\lb\calGT(\mZ)\ve_i\ve_i^\tran\rb \in\C^{sn_1\times n_2}$} are independent random matrices with zero mean.

Firstly, $\opnorm{\mX_i}$ can be bounded as follows:
\begin{align*}
\opnorm{\mX_i} &\leq\frac{1}{p}\opnorm{\calG\lb\calGT(\mZ)\ve_i\ve_i^\tran\rb}\\
&= \frac{1}{p}\opnorm{\mG_i\otimes\lb\calGT(\mZ)\ve_i\rb}\\
&\leq \frac{1}{p}\opnorm{\mG_i}\cdot\vecnorm{\calGT(\mZ)\ve_i}{2}\\
&\leq \frac{1}{p}\frac{1}{\sqrt{w_i}}\vecnorm{\calGT(\mZ)\ve_i}{2}\\
&\leq \frac{1}{p}\left\|\mZ\right\|_{\calG,\infty},
\end{align*}
where \msh{}{the second line is due to \eqref{eq: calG},} the third line follows from the fact that $\opnorm{\mA\otimes\mB}\leq \opnorm{\mA}\cdot\opnorm{\mB}$, \msh{}{and the last line directly follows from the definition of $\ginfnorm{\cdot}$ in \eqref{eq: norm definition}}.  

Secondly, we have 
\begin{align*}
\opnorm{\E{\sum_{i=0}^{n-1}\mX_i\mX_i^{\ast}}} &\msh{}{= \opnorm{\sum_{i=0}^{n-1}\E{\lb\frac{\delta_i}{p}-1\rb^2}\lb\calG\lb\calGT(\mZ)\ve_i\ve_i^\tran\rb\rb\lb\calG\lb\calGT(\mZ)\ve_i\ve_i^\tran\rb\rb^{\ast}}}\\
&= \opnorm{\sum_{i=0}^{n-1}\E{\lb\frac{\delta_i}{p}-1\rb^2}\lb\mG_i\otimes\lb\calGT(\mZ)\ve_i\rb\rb\lb\mG_i\otimes\lb\calGT(\mZ)\ve_i\rb\rb^{\ast}}\\
&\leq \frac{1}{p}\sum_{i=0}^{n-1}\opnorm{\lb\mG_i\otimes\lb\calGT(\mZ)\ve_i\rb\rb\lb\mG_i\otimes\lb\calGT(\mZ)\ve_i\rb\rb^{\ast}}\\
&\leq \frac{1}{p}\sum_{i=0}^{n-1}\opnorm{(\mG_i\mG_i^{\ast})\otimes\lb\lb\calGT(\mZ)\ve_i\rb\lb\calGT(\mZ)\ve_i\rb^{\ast}\rb}\\
&\leq \frac{1}{p}\sum_{i=0}^{n-1}\opnorm{\mG_i\mG_i^{\ast}}\cdot\vecnorm{\calGT(\mZ)\ve_i}{2}^2\\
&\msh{}{\leq\frac{1}{p}\sum_{i=0}^{n-1} \frac{1}{w_i}\vecnorm{\calGT(\mZ)\ve_i}{2}^2 }\\
&= \frac{1}{p}\left\|\mZ\right\|_{\calG,\mathsf{F}}^2.
\end{align*}
Since \opnorm{\E{\sum_{i=0}^{n-1}\mX_i^*\mX_i}} can be bounded by the same quantity, the application of the matrix Bernstein inequality \eqref{ineq: bernstein} implies that 
\begin{align*}
\opnorm{\calG\lb\calI-\frac{1}{p}\E{\calAT_{k}\calA_{k}}\rb\calGT(\mZ)}=\opnorm{\sum_{i=0}^{n-1}\mX_i} \lesssim \lb \sqrt{\frac{\log(s n)}{p}}\left\|\mZ\right\|_{\calG,\mathsf{F}}+\frac{\log(s n)}{p}\left\|\mZ\right\|_{\calG,\infty}\rb
\end{align*}
holds with  probability at least \msh{}{$1-(sn)^{-c}$ for a numerical constant $c > 0$.} 

\kw{}{By the uniform bound we conclude} that the event \eqref{ineq: part_prop2} occurs with probability at least \msh{}{$1-(sn)^{\msh{}{-(c-1)}}$.} 
%%%%

\subsection{Proof of \eqref{ineq: part_prop3}}
By the definition of $\left\|\cdot\right\|_{\calG,\mathsf{F}}$ in \eqref{eq: norm definition} \msh{}{and the isotropy property of $\{\vb_i\}$ in \eqref{eq: isotropy},} it follows that  
\begin{align*}
\left\|\calP_T\calG\lb\calI-\frac{1}{p}\mathbb E\lsb\calAT_{k}\calA_{k}\rsb\rb\calGT(\mZ)\right\|_{\calG,\mathsf{F}}^2 &=\left\|\sum_{i=0}^{n-1}\lb\frac{\delta_i}{p}-1\rb\calP_T\calG\lb\calGT(\mZ)\ve_{i}\ve_{i}^\tran\rb\right\|_{\calG,\mathsf{F}}^2\\
&\msh{}{= \sum_{j=0}^{n-1}\frac{1}{w_j}\vecnorm{\calGT\lb\sum_{i=0}^{n-1}\lb\frac{\delta_i}{p}-1\rb\calP_T\calG\lb\calGT(\mZ)\ve_i\ve_i^\tran\rb\rb\ve_j}{2}^2}\\
&=\sum_{j=0}^{n-1}\frac{1}{w_j}\vecnorm{\lb\sum_{i=0}^{n-1}\lb\frac{\delta_i}{p}-1\rb\calGT\calP_T\calG\lb\calGT(\mZ)\ve_i\ve_i^\tran\rb\rb\ve_j}{2}^2.
\end{align*}
If we construct a new vector $\vz_i\in \C^{sn\times 1}$ as
\begin{align*}
\vz_i:=\lb\frac{\delta_i}{p}-1\rb\begin{bmatrix}
\frac{1}{\sqrt{w_0}}\calGT\calP_T\calG\lb\calGT(\mZ)\ve_{i}\ve_{i}^\tran\rb\ve_0\\
\vdots\\
\frac{1}{\sqrt{w_{\ell}}}\calGT\calP_T\calG\lb\calGT(\mZ)\ve_{i}\ve_{i}^\tran\rb\ve_{\ell}\\
\vdots\\
\frac{1}{\sqrt{w_{n-1}}}\calGT\calP_T\calG\lb\calGT(\mZ)\ve_{i}\ve_{i}^\tran\rb\ve_{n-1}
\end{bmatrix},
\end{align*}
then it can be easily seen that 
\begin{align*}
\left\|\calP_T\calG\lb\calI-\frac{1}{p}\mathbb E\lsb\calAT_{k}\calA_{k}\rsb\rb\calGT(\mZ)\right\|_{\calG,\mathsf{F}}^2=:\vecnorm{\sum_{i=0}^{n-1}\vz_i}{2}^2.
\end{align*}
%In order to apply the matrix Bernstein inequality, it suffices to upper bound $\vecnorm{\vz_i}{2}$ and $\opnorm{\E{\sum_{i=0}^{n-1}\vz_i\vz_i^{\ast}}}$. 
For the upper bound of $\vecnorm{\vz_i}{2}$, a direct calculation yields that 
\begin{align*}
\vecnorm{\vz_i}{2} &\leq \frac{1}{p}\left\|\calP_T\calG\lb\calGT(\mZ)\ve_i\ve_i^\tran\rb\right\|_{\calG,\mathsf{F}}\\
&= \frac{1}{p}\frac{1}{\sqrt{w_i}}\left\|\calP_T\calG\lb\sqrt{w_i}\calGT(\mZ)\ve_i\ve_i^\tran\rb\right\|_{\calG,\mathsf{F}}\\
&\lesssim \frac{1}{p}\sqrt{\frac{\mu_1r\log(s n)}{n}}\cdot\frac{\vecnorm{\calGT(\mZ)\ve_i}{2}}{\sqrt{w_i}}\\
&\lesssim \frac{1}{p}\sqrt{\frac{\mu_1r\log(s n)}{n}}\left\|\mZ\right\|_{\calG,\infty},
\end{align*}
where the third line follows from Lemma~\ref{lem: supp5} \msh{}{and the last line is due to the definition of $\ginfnorm{\cdot}$ in \eqref{eq: norm definition}.}

In addition,
\begin{align*}
\opnorm{\E{\sum_{i=0}^{n-1}\vz_i\vz_i^{\ast}}} &\leq  \sum_{i=0}^{n-1}\E{\vecnorm{\vz_i}{2}^2}\\
&\leq \frac{1}{p}\sum_{i=0}^{n-1}\left\|\calP_T\calG\lb\calGT(\mZ)\ve_{i}\ve_{i}^\tran\rb\right\|_{\calG,\mathsf{F}}^2\\
&\lesssim \frac{1}{p}\frac{\mu_1r\log(s n)}{n}\sum_{i=0}^{n-1}\frac{\vecnorm{\calGT(\mZ)\ve_{i}}{2}^2}{w_i}\\
&=\frac{1}{p}\frac{\mu_1r\log(s n)}{n}\left\|\mZ\right\|_{\calG,\mathsf{F}}^2,
\end{align*}
\msh{}{where the third inequality is due to Lemma~\ref{lem: supp5},}
and the same bound can be obtained for $\opnorm{\E{\sum_{i=0}^{n-1}\vz_i^*\vz_i}}$. 

\kw{}{Therefore, by} the matrix Bernstein inequality \eqref{ineq: bernstein}, we can show that
\begin{align*}
\vecnorm{\sum_{i=0}^{n-1}\vz_i}{2} \lesssim \sqrt{\frac{\mu_1r\log(s n)}{n}}\lb\sqrt{\frac{\log(s n)}{p}}\left\|\mZ\right\|_{\calG,\mathsf{F}}+\frac{\log(s n)}{p}\left\|\mZ\right\|_{\calG,\infty}\rb
\end{align*}
holds with probability at least \msh{}{$1-(sn)^{-c}$} for a universal constant $\msh{}{c>0}$. 
Taking the uniform bound completes the proof.
%obtain that \eqref{ineq: part_prop3} occurs with the probability at least $1-(sn)^{-(c-1)}$. 
%%%%

\subsection{Proof of \eqref{ineq: part_prop4}}
%According to the definition of $\left\|\cdot\right\|_{\calG,\infty}$ in \eqref{eq: norm definition}, there holds
The definition of $\left\|\cdot\right\|_{\calG,\infty}$ in \eqref{eq: norm definition} allows us to express \kw{}{$\left\|\calP_T\calG\lb\calI-\frac{1}{p}\E{\calAT_{k}\calA_{k}}\calGT(\mZ)\rb\right\|_{\calG,\infty}$ as}
\begin{align*}
\left\|\calP_T\calG\lb\calI-\frac{1}{p}\E{\calAT_{k}\calA_{k}}\calGT(\mZ)\rb\right\|_{\calG,\infty} &= \left\|\sum_{i=0}^{n-1}\lb\frac{\delta_i}{p}-1\rb\calP_T\calG\lb\calGT(\mZ)\ve_i\ve_i^\tran\rb\right\|_{\calG,\infty}\\
&\msh{}{= \max_{0\leq j\leq n-1}\frac{1}{\sqrt{w_j}}\vecnorm{\calGT\lb\sum_{i=0}^{n-1}\lb\frac{\delta_i}{p}-1\rb\calP_T\calG\lb\calGT(\mZ)\ve_i\ve_i^\tran\rb\rb\ve_j}{2}}\\
&=\max_{0\leq j\leq n-1}\vecnorm{\lb\sum_{i=0}^{n-1}\lb\frac{\delta_i}{p}-1\rb\frac{1}{\sqrt{w_j}}\calGT\calP_T\calG\lb\calGT(\mZ)\ve_i\ve_i^\tran\rb\rb\ve_j}{2}.
\end{align*}
Define $\vz_i^j$ to be \kw{}{the $s$-dimensional vector}
\begin{align*}
\vz_i^j := \lb\frac{\delta_i}{p}-1\rb\frac{1}{\sqrt{w_j}}\calGT\calP_T\calG\lb\calGT(\mZ)\ve_i\ve_i^\tran\rb\ve_j,\quad (i,j)\in[n]\times[n].
\end{align*}
Then one can easily see that
\begin{align*}
\left\|\calP_T\calG\lb\calI-\frac{1}{p}\E{\calAT_{k}\calA_{k}}\calGT(\mZ)\rb\right\|_{\calG,\infty} =: \max_{0\leq j\leq n-1}\vecnorm{\sum_{i=0}^{n-1}\vz_i^j}{2}.
\end{align*}
For any fixed $j\in[n]$, $\vecnorm{\vz_i^j}{2}$ can be bounded as follows:
\begin{align*}
\vecnorm{\vz_i^j}{2}&\leq \frac{1}{p}\frac{1}{\sqrt{w_i}}\frac{\sqrt{w_i}}{\sqrt{w_j}}\vecnorm{\calGT\calP_T\calG\lb\calGT(\mZ)\ve_i\ve_i^\tran\rb\ve_j}{2}\\
&=\frac{1}{p}\frac{1}{\sqrt{w_i}}\frac{\sqrt{w_i}}{\sqrt{w_j}}\sup_{\msh{}{\vecnorm{\bbeta}{2}=1}}\lab\la\calGT\calP_T\calG\lb\calGT(\mZ)\ve_i\ve_i^\tran\rb\ve_j,\msh{}{\bbeta}\ra\rab\\
&=\frac{1}{p}\frac{1}{\sqrt{w_i}}\sup_{\msh{}{\vecnorm{\bbeta}{2}=1}}\frac{\sqrt{w_i}}{\sqrt{w_j}}\lab\la\calP_T\calG\lb\calGT(\mZ)\ve_i\ve_i^\tran\rb,\calG(\msh{}{\bbeta}\ve_j^\tran)\ra\rab\\
&\leq \frac{1}{p}\frac{1}{\sqrt{w_i}}\frac{3\mu_1r}{n}\vecnorm{\calGT(\mZ)\ve_i}{2}\sup_{\msh{}{\vecnorm{\bbeta}{2}=1}}\vecnorm{\msh{}{\bbeta}}{2}\\
&= \frac{3\mu_1r}{n p}\cdot\frac{\vecnorm{\calGT(\mZ)\ve_i}{2}}{\sqrt{w_i}}\\
&\leq \frac{3\mu_1r}{n p}\left\|\mZ\right\|_{\calG,\infty},\numberthis\label{ineq: zij}
\end{align*}
where the fourth line follows from Lemma \ref{lem: supp3} \msh{}{and the last line is due to the definition of $\ginfnorm{\cdot}$ in \eqref{eq: norm definition}.  } 

Moreover, we have
\begin{align*}
\opnorm{\sum_{i=0}^{n-1}\E{\vz_i^j(\vz_i^j)^{\ast}}} &\leq \sum_{i=0}^{n-1}\E{\vecnorm{\vz_i^j}{2}^2}\\
&\msh{}{\leq \frac{1}{p}\sum_{i=0}^{n-1}\vecnorm{\frac{1}{\sqrt{w_j}}\calGT\calP_T\calG\lb\calGT(\mZ)\ve_i\ve_i^\tran\rb\ve_j}{2}^2}\\
&\leq \frac{1}{p}\lb\frac{3\mu_1r}{n}\rb^2\cdot\sum_{i=0}^{n-1}\lb\frac{\vecnorm{\calGT(\mZ)\ve_{i}}{2}}{\sqrt{w_i}}\rb^2\\
&= \frac{1}{p}\lb\frac{3\mu_1r}{n}\rb^2\cdot\left\|\mZ\right\|_{\calG,\mathsf{F}}^2,
\end{align*}
\msh{}{where the third inequality follows from \eqref{ineq: zij}.}
The same bound can be obtained \kw{}{for $\opnorm{\sum_{i=0}^{n-1}\E{(\vz_i^j)^*\vz_i^j}}$} as well. 

The matrix Bernstein inequality \eqref{ineq: bernstein} taken collectively with the uniform bound yields that
%f\begin{align*}    \vecnorm{\sum_{i=0}^{n-1}\vz_i^j}{2}\lesssim\frac{\mu_1r}{n}\lb\sqrt{\frac{\log(sn)}{p}}\left\|\mZ\right\|_{\calG,\mathsf{F}}+\frac{\log(sn)}{p}\left\|\mZ\right\|_{\calG,\infty}\rb
%\end{align*}
%holds with the probability at least $1-(sn)^{-c}$. Then taking the uniform bound over all $j\in[n]$ gives that 
\begin{align*}
\left\|\calP_T\calG\lb\calI-\frac{1}{p}\E{\calAT_{k}\calA_{k}}\calGT(\mZ)\rb\right\|_{\calG,\infty} &= \max_{0\leq j\leq n-1}\vecnorm{\sum_{i=0}^{n-1}\vz_i^j}{2}\\
&\lesssim \frac{\mu_1r}{n}\lb\sqrt{\frac{\log(s n)}{p}}\left\|\mZ\right\|_{\calG,\mathsf{F}}+\frac{\log(s n)}{p}\left\|\mZ\right\|_{\calG,\infty}\rb
\end{align*}
holds with  probability at least $\msh{}{1-ns^{-c_2}}$ \msh{}{for a universal constant $\msh{}{c_2>2}$.} 

Finally, we take the uniform bound  over all $k\in\{1,\cdots,k_0\}$ again to complete the proof.
%derive that the event \eqref{ineq: part_prop4} occurs with the probability at least $1-n^{-(c-2)}s^{-(c-1)}$.
%%%%%%%%%%%%%%%%
%%%%%%%%%%%%%%%%

\section{Proofs of Lemmas \ref{lem: key lemma1} to \ref{lem: key lemma5}}
\label{sec: proof of part lemma}
This section presents the proofs of Lemmas \ref{lem: key lemma1} to \ref{lem: key lemma5}, which have been used to verify  \eqref{ineq: F norm} and \eqref{ineq: perp op norm}. 
\subsection{Proof of Lemma \ref{lem: key lemma1}}
Note that 
\begin{align*}
\opnorm{\calP_T\calG\lb\calI-\frac{n}{m}\calAT_k\calA_k\rb\calGT\calP_T} \leq \opnorm{\calP_T\calG\lb\calI-\frac{n}{m}\E{\calAT_k\calA_k}\rb\calGT\calP_T} + \frac{n}{m}\opnorm{\calP_T\calG\lb\calAT_k\calA_k - \E{\calAT_k\calA_k}\rb\calGT\calP_T}.
\end{align*}
According to \eqref{ineq: part_prop1} in Lemma \ref{lem: partition}, the first term is upper bounded by $\frac{1}{4}$. %with high probability. 
We will bound the second term via the matrix Bernstein inequality \eqref{ineq: bernstein}.

For any $\mZ\in\C^{s n_1\times n_2}$, \msh{}{by the definition of $\calA_k^\ast\calA_k$ in \eqref{eq:AktAk}
,} we have
\begin{align*}
\calP_T\calG\calAT_k\calA_k\calGT\calP_T(\mZ) &\msh{}{= \calP_T\calG\lb\sum_{i\in\Omega_k}\la\vb_i\ve_i^\tran, \calGT\calP_T(\mZ)\ra\vb_i\ve_i^\tran\rb}\\
&= \sum_{i\in\Omega_k}\la\vb_i\ve_i^\tran, \calGT\calP_T(\mZ)\ra\calP_T\calG\lb\vb_i\ve_i^\tran\rb\\
&= \sum_{i\in\Omega_k}\la\calP_T\calG\lb\vb_i\ve_i^\tran\rb,\mZ\ra\calP_T\calG\lb\vb_i\ve_i^\tran\rb.
\end{align*}
If we define $\msh{}{\vz_i} := \vect\lb\calP_T\calG\lb\vb_i\ve_i^\tran\rb\rb \in \C^{sn_1n_2\times 1}$, then it follows that
\begin{align*}
\opnorm{\calP_T\calG\calAT_k\calA_k\calGT\calP_T} &= \sup_{\msh{}{\fronorm{\mW}=1}}\fronorm{\calP_T\calG\calAT_k\calA_k\calGT\calP_T(\msh{}{\mW})}\\
&= \sup_{\msh{}{\fronorm{\mW}=1}}\fronorm{\sum_{i\in\Omega_k}\la\calP_T\calG\lb\vb_i\ve_i^\tran\rb,\msh{}{\mW}\ra\calP_T\calG\lb\vb_i\ve_i^\tran\rb}\\
&= \sup_{\msh{}{\vecnorm{\vect(\msh{}{\mW})}{2}=1}}\vecnorm{\sum_{i\in\Omega_k}\msh{}{\vz_i}^*\vect(\msh{}{\mW})\msh{}{\vz_i}}{2}\\
&= \sup_{\msh{}{\vecnorm{\vect(\msh{}{\mW})}{2}=1}}\vecnorm{\sum_{i\in\Omega_k}\msh{}{\vz_i\vz_i}^*\vect(\msh{}{\mW})}{2}\\
&= \opnorm{\sum_{i\in\Omega_k}\msh{}{\vz_i\vz_i}^*},
\end{align*}
where \kw{}{it is obvious that} $\msh{}{\vz_i\vz_i^*}$  are independent and positive semi-definite random matrices. Hence,
\begin{align*}
\opnorm{\calP_T\calG\lb\calAT_k\calA_k - \E{\calAT_k\calA_k}\rb\calGT\calP_T} = \opnorm{\sum_{i\in\Omega_k}\lb\msh{}{\vz_i\vz_i^*-\E{\vz_i\vz
_i^*}}\rb}.
\end{align*}
%Obviously, $\vx_i\vx_i^*$ are independent and positive semi-definite random matrices. 

Firstly, $\opnorm{\msh{}{\vz_i\vz_i^*-\E{\vz_i\vz_i^*}}}$ can be bounded as follows:
\begin{align*}
\opnorm{\msh{}{\vz_i\vz_i^*}-\E{\msh{}{\vz_i\vz_i^*}}} &\leq \max\lcb\opnorm{\msh{}{\vz_i\vz_i^*}},\opnorm{\E{\msh{}{\vz_i\vz_i^*}}}\rcb\\
&\leq \max\lcb\opnorm{\msh{}{\vz_i\vz_i^*}},\E{\opnorm{\msh{}{\vz_i\vz_i^*}}}\rcb\\
&\leq \max\lcb\vecnorm{\msh{}{\vz_i}}{2}^2,\E{\vecnorm{\msh{}{\vz_i}}{2}^2}\rcb,
\end{align*}
where the second line is due to the Jensen inequality. By the definition of $\msh{}{\vz_i}$, we have $\vecnorm{\msh{}{\vz_i}}{2}^2 = \fronorm{\calP_T\calG\lb\vb_i\ve_i^T\rb}^2$. Then applying \eqref{ineq: supp13} in Corollary \ref{cor: useful cor} implies that 
\begin{align*}
\opnorm{\msh{}{\vz_i\vz_i^*}-\E{\msh{}{\vz_i\vz_i^*}}}\leq\msh{}{\max\lcb\vecnorm{\msh{}{\vz_i}}{2}^2,\E{\vecnorm{\msh{}{\vz_i}}{2}^2}\rcb}\leq \frac{2\mu_1rs\mu_0}{n}.
\end{align*}

Secondly,
\begin{align*}
\opnorm{\sum_{i\in\Omega_k}\E{\lb\msh{}{\vz_i\vz_i^*}-\E{\msh{}{\vz_i\vz_i^*}}\rb^2}} &= \opnorm{\sum_{i\in\Omega_k}\E{(\msh{}{\vz_i\vz_i^*})^2} - \lb\E{\msh{}{\vz_i\vz_i^*}}\rb^2}\\
&\leq \opnorm{\sum_{i\in\Omega_k}\E{(\msh{}{\vz_i\vz_i^*})^2}}\\
&\leq \max_{i\in\Omega_k}\opnorm{\msh{}{\vz_i\vz_i^*}}\cdot\opnorm{\sum_{i\in\Omega_k}\E{(\msh{}{\vz_i\vz_i^*})}}\\
&\leq \frac{2\mu_1rs\mu_0}{n}\cdot\frac{5m}{4n},
\end{align*}
Here  the last line follows from a direct calculation:
\begin{align*}
\opnorm{\sum_{i\in\Omega_k}\E{(\msh{}{\vz_i\vz_i^*})}} &= \sup_{\msh{}{\vecnorm{\vect(\msh{}{\mW})}{2}=1}}\vecnorm{\sum_{i\in\Omega_k}\E{\vect\lb\calP_T\calG(\vb_i\ve_i^\tran)\rb \vect\lb\calP_T\calG(\vb_i\ve_i^\tran)\rb^*\vect(\msh{}{\mW})}}{2}\\
&= \sup_{\msh{}{\fronorm{\mW}=1}}\vecnorm{\sum_{i\in\Omega_k}\E{\la\calP_T\calG(\vb_i\ve_i^\tran),\msh{}{\mW}\ra \vect\lb\calP_T\calG(\vb_i\ve_i^\tran)\rb}}{2}\\
&= \sup_{\msh{}{\fronorm{\mW}=1}}\fronorm{\sum_{i\in\Omega_k}\E{\la\calP_T\calG(\vb_i\ve_i^\tran),\msh{}{\mW}\ra \calP_T\calG(\vb_i\ve_i^\tran)}}\\
&= \sup_{\msh{}{\fronorm{\mW}=1}}\fronorm{\sum_{i\in\Omega_k}\E{\lb\vb_i^*\calGT\calP_T(\msh{}{\mW})\ve_i\rb\calP_T\calG(\vb_i\ve_i^\tran)}}\\
&=  \sup_{\msh{}{\fronorm{\mW}=1}}\fronorm{\sum_{i\in\Omega_k}\E{\calP_T\calG\lb\vb_i\vb_i^*\calGT\calP_T(\msh{}{\mW})\ve_i\ve_i^\tran\rb}}\\
&= \sup_{\msh{}{\fronorm{\mW}=1}}\fronorm{\sum_{i\in\Omega_k}\calP_T\calG\lb\calGT\calP_T(\msh{}{\mW})\ve_i\ve_i^\tran\rb}\\
&\leq \frac{5m}{4n},
\end{align*}
where \kw{}{in the last inequality we have utilized  \eqref{ineq: part_prop1} in the following  way},
\begin{align*}
\frac{1}{4} &\geq \opnorm{\calP_T\calG\lb\calI-\frac{n}{m}\E{\calAT_k\calA_k}\rb\calGT\calP_T}\\
&\geq \frac{n}{m}\opnorm{\calP_T\calG \E{\calAT_k\calA_k}\calGT\calP_T} - \opnorm{\calP_T\calG\calGT\calP_T}\\
&\geq \frac{n}{m}\sup_{\msh{}{\fronorm{\mW}=1}}\fronorm{\sum_{i\in\Omega_k}\calP_T\calG\lb\calGT\calP_T(\msh{}{\mW})\ve_i\ve_i^\tran\rb} - 1.
\end{align*}

Since we can obtain the same bound for $\opnorm{\sum_{i\in\Omega_k}\E{\lb \msh{}{\vz_i^*\vz_i-\E{\vz_i^*\vz_i}}\rb^2}}$,  applying the matrix Bernstein inequality \eqref{ineq: bernstein} implies that  with \msh{}{probability at least $1-(sn)^{-c}$},
\begin{align*}
\frac{n}{m}\opnorm{\calP_T\calG\lb\calAT_k\calA_k - \E{\calAT_k\calA_k}\rb\calGT\calP_T} &= \frac{n}{m}\opnorm{\sum_{i\in\Omega_k}\lb\msh{}{\vz_i\vz_i^* - \E{\vz_i\vz_i^*}}\rb}\\
&\lesssim \frac{n}{m}\cdot\lb\sqrt{\frac{5m}{4n}\cdot\frac{2\mu_1rs\mu_0}{n}\cdot\log(s n)} + \frac{2\mu_1rs\mu_0\log(s n)}{n}\rb\\
&= \frac{1}{m}\cdot\lb\sqrt{\frac{5m\mu_1rs\mu_0\log(s n)}{2}} + 2\mu_1rs\mu_0\log(s n)\rb\\
&\lesssim  \frac{1}{m}\cdot\sqrt{\frac{5m\mu_1rs\mu_0\log(s n)}{2}}\\
&\leq \frac{1}{4},
\end{align*}
where the fourth line \msh{}{and the last line} hold when $m \gtrsim  \mu_1r s\mu_0\log(s n)$.

Finally, combining the two terms together %and taking the uniform bound 
completes the proof.

\subsection{Proof of Lemma \ref{lem: key lemma2}}
Notice that
\begin{align*}
\opnorm{\calG\lb\calI-\frac{n}{m}\calAT_k\calA_k\rb\calGT(\mZ)} &\leq \opnorm{\calG\lb\calI - \frac{n}{m}\E{\calAT_k\calA_k}\rb\calGT(\mZ)} + \frac{n}{m}\opnorm{\calG\lb\calAT_k\calA_k - \E{\calAT_k\calA_k}\rb\calGT(\mZ)}\\
&\lesssim \sqrt{\frac{n\log(s n)}{m}}\gfronorm{\mZ} + \frac{n\log(s n)}{m}\ginfnorm{\mZ} \\
&\quad + \frac{n}{m}\opnorm{\calG\lb\calAT_k\calA_k - \E{\calAT_k\calA_k}\rb\calGT(\mZ)},\numberthis\label{eq:kwadd01}
\end{align*}
where the second line follows from \eqref{ineq: part_prop2}. In order to prove \eqref{ineq: part 2}, it suffices to bound the last term.

\msh{}{Recalling the definition of $\calA_k^\ast\calA_k$ in \eqref{eq:AktAk} and using the isotropy property of $\{\vb_i\}$ in \eqref{eq: isotropy}, we can rewrite the last term as} 
\begin{align*}
\frac{n}{m}\opnorm{\calG\lb\calAT_k\calA_k - \E{\calAT_k\calA_k}\rb\calGT(\mZ)} &= \frac{n}{m}\opnorm{\sum_{i\in\Omega_k}\calG\lb(\vb_i\vb_i^*-\mI)\calGT(\mZ)\ve_i\ve_i^\tran\rb}\\
&=: \frac{n}{m}\opnorm{\sum_{i\in\Omega_k}\mX_i},
\end{align*}
where $\mX_i = \calG\lb(\vb_i\vb_i^*-\mI)\calGT(\mZ)\ve_i\ve_i^\tran\rb\in\C^{s n_1 \times n_2}$. It can be easily seen that $\mX_i$ are independent random matrices with zero mean. 
%So we only need to bound $\opnorm{\mX_i}$.

The upper bound of $\opnorm{\mX_i}$ can be \kw{}{established} as follows:
\begin{align*}
\opnorm{\mX_i} &= \opnorm{\calG\lb(\vb_i\vb_i^*-\mI)\calGT(\mZ)\ve_i\ve_i^\tran\rb}\\
&= \opnorm{\mG_i\otimes \lb(\vb_i\vb_i^*-\mI)\calGT(\mZ)\ve_i\rb}\\
&\leq \opnorm{\mG_i}\cdot\opnorm{(\vb_i\vb_i^* - \mI)\calGT(\mZ)\ve_i}\\
&\leq \frac{1}{\sqrt{w_i}}\max\lcb\vecnorm{\vb_i}{2}^2,1\rcb\cdot\vecnorm{\calGT(\mZ)\ve_i}{2}\\
&\leq s\mu_0\ginfnorm{\mZ},
\end{align*}
\msh{}{where the second line follows from \eqref{eq: calG}, the third line is due to $\opnorm{\mA\otimes\mB}\leq \opnorm{\mA}\cdot\opnorm{\mB}$, and the last line follows from the definition of $\ginfnorm{\cdot}$ in \eqref{eq: norm definition}.}

To bound $\opnorm{\E{\sum_{i\in\Omega_k}\mX_i^*\mX_i}}$, we first define $\msh{}{\vz_i} = (\vb_i\vb_i^* - \mI)\calGT(\mZ)\ve_i\in\C^s$. Then a simple calculation yields that 
\begin{align*}
\E{\vecnorm{\msh{}{\vz_i}}{2}^2} &= \E{\ve_i^\tran\lb\calGT(\mZ)\rb^*(\vb_i\vb_i^*-\mI)^2\calGT(\mZ)\ve_i}\\
&= \ve_i^\tran\lb\calGT(\mZ)\rb^*\E{(\vb_i\vb_i^*-\mI)^2}\calGT(\mZ)\ve_i\\
&= \ve_i^\tran\lb\calGT(\mZ)\rb^*\lb\E{\vecnorm{\vb_i}{2}^2\vb_i\vb_i^* - 2\vb_i\vb_i^* + \mI}\rb\calGT(\mZ)\ve_i\\
&= \ve_i^\tran\lb\calGT(\mZ)\rb^*\lb\E{\vecnorm{\vb_i}{2}^2\vb_i\vb_i^* - \mI}\rb\calGT(\mZ)\ve_i\\
&\leq \ve_i^\tran\lb\calGT(\mZ)\rb^*\lb s\mu_0 \E{\vb_i\vb_i^*} - \mI\rb\calGT(\mZ)\ve_i\\
&\leq s\mu_0\cdot\vecnorm{\calGT(\mZ)\ve_i}{2}^2,\numberthis\label{ineq: E(zi^2)}
\end{align*}  
\msh{}{where the last two inequalities follow from the incoherence property \eqref{eq: incoherence b} and the isotropy property \eqref{eq: isotropy} of $\{\vb_i\}$.}
Furthermore, it follows that
\begin{align*}
\opnorm{\E{\sum_{i\in\Omega_k}\mX_i^*\mX_i}} &= \opnorm{\sum_{i\in\Omega_k}\E{\lb\mG_i\otimes\msh{}{\vz_i}\rb^*\lb\mG_i\otimes\msh{}{\vz_i}\rb}}\\
&= \opnorm{\sum_{i\in\Omega_k}\E{(\mG_i^\tran\mG_i)\otimes(\msh{}{\vz_i}^*\msh{}{\vz_i})}}\\
&= \opnorm{\sum_{i\in\Omega_k}(\mG_i^\tran\mG_i)\E{\vecnorm{\msh{}{\vz_i}}{2}^2}}\\
&\leq s\mu_0 \cdot \opnorm{\sum_{i\in\Omega_k}\vecnorm{\calGT(\mZ)\ve_i}{2}^2(\mG_i^\tran\mG_i)}\\
&\leq s\mu_0 \cdot\sum_{i\in\Omega_k}\opnorm{\mG_i^\tran\mG_i}\vecnorm{\calGT(\mZ)\ve_i}{2}^2\\
&\leq s\mu_0 \cdot\sum_{i\in\Omega_k}\frac{\vecnorm{\calGT(\mZ)\ve_i}{2}^2}{w_i}\\
&\leq s\mu_0 \cdot\sum_{i=1}^n\frac{\vecnorm{\calGT(\mZ)\ve_i}{2}^2}{w_i}\\
&= s\mu_0 \cdot\gfronorm{\mZ}^2,
\end{align*}
\msh{}{where the fourth line follows from \eqref{ineq: E(zi^2)}}, and $\opnorm{\E{\sum_{i\in\Omega_k}\mX_i\mX_i^*}}$ can  \kw{}{be similarly bounded}.

Therefore, by the matrix Bernstein inequality \eqref{ineq: bernstein}, 
\begin{align*}
\frac{n}{m}\opnorm{\calG\lb\calAT_k\calA_k - \E{\calAT_k\calA_k}\rb\calGT(\mZ)} &= \frac{n}{m}\opnorm{\sum_{i\in\Omega_k}\mX_i}\\
&\lesssim \frac{n}{m}\lb\sqrt{s\mu_0\log(s n)}\gfronorm{\mZ} + s\mu_0\log(s n)\ginfnorm{\mZ}\rb\\
&= \sqrt{\frac{n k_0s\mu_0\log(s n)}{m}}\gfronorm{\mZ} + \frac{n s\mu_0\log(s n)}{m}\ginfnorm{\mZ}
\end{align*}
holds with \msh{}{probability at least $1-(sn)^{-c}$ for a universal constant $c>0$.}
\kw{}{Inserting  this bound into \eqref{eq:kwadd01} we conclude that}
\begin{align*}
\opnorm{\calG\lb\calI-\frac{n}{m}\calAT_k\calA_k\rb\calGT(\mZ)} &\lesssim \lb\sqrt{\frac{n k_0s\mu_0\log(s n)}{m}} + \sqrt{\frac{n\log(s n)}{m}}\rb\gfronorm{\mZ} + \lb\frac{n s\mu_0\log(s n)}{m} + \frac{n\log(s n)}{m}\rb\ginfnorm{\mZ}\\
&\lesssim \sqrt{\frac{4n k_0s\mu_0\log(s n)}{m}}\gfronorm{\mZ} + \frac{2n s\mu_0\log(s n)}{m}\ginfnorm{\mZ}
\end{align*}
holds with \msh{}{probability exceeding $1-(sn)^{-c}$.}
%%%%%%%%

\subsection{Proof of Lemma \ref{lem: key lemma3}}
Notice that 
\begin{align*}
\gfronorm{\calP_T\calG\lb\calI - \frac{n}{m}\calAT_k\calA_k\rb\calGT(\mZ)} &\leq \gfronorm{\calP_T\calG\lb\calI - \frac{n}{m}\E{\calAT_k\calA_k}\rb\calGT(\mZ)} + \frac{n}{m}\gfronorm{\calP_T\calG\lb\calAT_k\calA_k - \E{\calAT_k\calA_k}\rb\calGT(\mZ)}\\
&\lesssim \sqrt{\frac{\mu_1r\log(s n)}{n}}\lb\sqrt{\frac{n\log(s n)}{m}}\gfronorm{\mZ} + \frac{n\log(s n)}{m}\ginfnorm{\mZ}\rb\\
&\quad + \frac{n}{m}\gfronorm{\calP_T\calG\lb\calAT_k\calA_k - \E{\calAT_k\calA_k}\rb\calGT(\mZ)},\numberthis\label{eq:kwadd02}
\end{align*}
where the second line follows from \eqref{ineq: part_prop3}. We will adopt the matrix Bernstein inequality \eqref{ineq: bernstein} to bound the second term. %$\frac{n}{m}\gfronorm{\calP_T\calG\lb\calAT_k\calA_k - \E{\calAT_k\calA_k}\rb\calGT(\mZ)}$.

\msh{}{Recalling the definition of $\calA_k^\ast\calA_k$ in \eqref{eq:AktAk} and letting $\msh{}{\vz_i} := \lb\vb_i\vb_i^* - \mI\rb\calGT(\mZ)\ve_i\in\C^s$, we have}
\begin{align*}
\frac{n}{m}\gfronorm{\calP_T\calG\lb\calAT_k\calA_k - \E{\calAT_k\calA_k}\rb\calGT(\mZ)} 
&\msh{}{= \frac{n}{m}\gfronorm{\calP_T\calG\lb\sum_{i\in\Omega_k}\lb(\vb_i\vb_i^\ast - \E{\vb_i\vb_i^\ast})\calGT(\mZ)\ve_i\ve_i^\tran\rb\rb}}\\
&= \frac{n}{m}\gfronorm{\sum_{i\in\Omega_k}\calP_T\calG\lb(\vb_i\vb_i^* - \mI)\calGT(\mZ)\ve_i\ve_i^\tran\rb}\\
&= \frac{n}{m}\gfronorm{\sum_{i\in\Omega_k}\calP_T\calG(\msh{}{\vz_i}\ve_i^\tran)}\\
&= \frac{n}{m}\sqrt{\sum_{j=0}^{n-1}\frac{1}{w_j}\vecnorm{\calGT\lb\sum_{i\in\Omega_k}\calP_T\calG(\msh{}{\vz_i}\ve_i^\tran)\rb\ve_j}{2}^2}\\
&= \frac{n}{m}\sqrt{\sum_{j=0}^{n-1}\frac{1}{w_j}\vecnorm{\sum_{i\in\Omega_k}\calGT\calP_T\calG(\msh{}{\vz_i}\ve_i^\tran)\ve_j}{2}^2},
\end{align*}
\msh{}{where the second equality is due to the isotropy property of $\{\vb_i\}$ in \eqref{eq: isotropy}.}
Furthermore, \kw{}{denoting by $\vy_i\in\C^{s n\times 1}$ the vector}
\begin{align*}
\vy_i := \begin{bmatrix}
\frac{1}{\sqrt{w_0}}\calGT\calP_T\calG(\msh{}{\vz_i}\ve_i^\tran)\ve_0\\
\vdots\\
\frac{1}{\sqrt{w_\ell}}\calGT\calP_T\calG(\msh{}{\vz_i}\ve_i^\tran)\ve_\ell\\
\vdots\\
\frac{1}{\sqrt{w_{n-1}}}\calGT\calP_T\calG(\msh{}{\vz_i}\ve_i^\tran)\ve_{n-1}
\end{bmatrix},
\end{align*}
\kw{}{the  second term can be expressed as}
\begin{align*}
\frac{n}{m}\gfronorm{\calP_T\calG\lb\calAT_k\calA_k - \E{\calAT_k\calA_k}\rb\calGT(\mZ)} =: \frac{n}{m}\vecnorm{\sum_{i\in\Omega_k}\vy_i}{2}.\numberthis\label{eq:kwadd03}
\end{align*}
Clearly, $\vy_i$ are independent random vectors with zero mean. 

A direct calculation yields that  
\begin{align*}
\vecnorm{\vy_i}{2} &= \sqrt{\sum_{j=0}^{n-1}\frac{1}{w_j}\vecnorm{\calGT\calP_T\calG(\msh{}{\vz_i}\ve_i^\tran)\ve_j}{2}^2}\\
&= \gfronorm{\calP_T\calG(\msh{}{\vz_i}\ve_i^\tran)}\\
&= \frac{1}{\sqrt{w_i}}\gfronorm{\calP_T\calG\lb\sqrt{w_i}\msh{}{\vz_i}\ve_i^\tran\rb}\\
&\lesssim \frac{1}{\sqrt{w_i}}\vecnorm{\msh{}{\vz_i}}{2}\sqrt{\frac{\mu_1r\log(s n)}{n}}\\
&= \frac{1}{\sqrt{w_i}}\sqrt{\frac{\mu_1r\log(s n)}{n}}\cdot\vecnorm{\lb\vb_i\vb_i^* - \mI\rb\calGT(\mZ)\ve_i}{2}\\
&\leq \frac{1}{\sqrt{w_i}}\sqrt{\frac{\mu_1r\log(s n)}{n}}\cdot\opnorm{\vb_i\vb_i^* - \mI}\cdot\vecnorm{\calGT(\mZ)\ve_i}{2}\\
&\leq \sqrt{\frac{\mu_1r\log(s n)}{n}}\cdot s\mu_0 \cdot \ginfnorm{\mZ},
\end{align*}
where the fourth line follows from Lemma \ref{lem: supp5} \msh{}{and the last line is due to the definition of $\ginfnorm{\cdot}$ in \eqref{eq: norm definition}.}

\kw{}{Additionally, we have}
\begin{align*}
\opnorm{\E{\sum_{i\in\Omega_k}\vy_i\vy_i^*}} &\leq \sum_{i\in\Omega_k}\E{\vecnorm{\vy_i}{2}^2}\\
&= \sum_{i\in\Omega_k} \E{\gfronorm{\calP_T\calG(\msh{}{\vz_i}\ve_i^\tran)}^2}\\
&\lesssim \sum_{i\in\Omega_k} \frac{1}{w_i}\frac{\mu_1r\log(s n)}{n}\cdot\E{\vecnorm{\msh{}{\vz_i}}{2}^2}\\
&\lesssim s\mu_0\frac{\mu_1r\log(s n)}{n}\cdot\sum_{i\in\Omega_k}\frac{1}{w_i}\vecnorm{\calGT(\mZ)\ve_i}{2}^2\\
&\lesssim \frac{s\mu_0\cdot\mu_1r\log(s n)}{n}\cdot\gfronorm{\mZ}^2, 
\end{align*}
where the third line is due to Lemma \ref{lem: supp5} and the fourth line follows from
\begin{align*}
\E{\vecnorm{\msh{}{\vz_i}}{2}^2} &= \E{\vecnorm{\lb\vb_i\vb_i^* - \mI\rb\calGT(\mZ)\ve_i}{2}^2}\\
&= \E{\ve_i^T\lb\calGT(\mZ)\rb^*(\vb_i\vb_i^* - \mI)^2\calGT(\mZ)\ve_i}\\
&= \ve_i^T\lb\calGT(\mZ)\rb^*\lb\E{\lb\vecnorm{\vb_i}{2}^2\vb_i\vb_i^*\rb} - \mI\rb\calGT(\mZ)\ve_i\\
&\leq s\mu_0\vecnorm{\calGT(\mZ)\ve_i}{2}^2.\numberthis\label{ineq: Ezi^2}
\end{align*}
The same upper bound can be obtained for \opnorm{\E{\sum_{i\in\Omega_k}\vy_i^*\vy_i}}. 

Applying the matrix Bernstein inequality yields that 
\begin{align*}
\frac{n}{m}\vecnorm{\sum_{i\in\Omega_k}\vy_i}{2} &\lesssim \frac{n}{m}\lb\sqrt{\frac{s\mu_0\mu_1r\log^2(s n)}{n}}\gfronorm{\mZ} + \sqrt{\frac{\mu_1r\log(s n)}{n}}\cdot s\mu_0 \log(s n)\cdot \ginfnorm{\mZ}\rb\\
&= \sqrt{\frac{\mu_1r\log(s n)}{n}}\lb\sqrt{\frac{n k_0s\mu_0\log(s n)}{m}}\gfronorm{\mZ} + \frac{n s\mu_0\log(s n)}{m}\ginfnorm{\mZ}\rb
\end{align*}
holds with \msh{}{probability at least $1-(sn)^{-c}$ for a universal constant $c>0$.}
\kw{}{Noting \eqref{eq:kwadd02} and \eqref{eq:kwadd03}, it follows immediately that}
\begin{align*}
\gfronorm{\calP_T\calG\lb\calI - \frac{n}{m}\calAT_k\calA_k\rb\calGT(\mZ)} &\lesssim \sqrt{\frac{\mu_1r\log(s n)}{n}}\lb\sqrt{\frac{n\log(s n)}{m}}\gfronorm{\mZ} + \frac{n\log(s n)}{m}\ginfnorm{\mZ}\rb\\
&\quad + \sqrt{\frac{\mu_1r\log(s n)}{n}}\lb\sqrt{\frac{n k_0s\mu_0\log(s n)}{m}}\gfronorm{\mZ} + \frac{n s\mu_0\log(s n)}{m}\ginfnorm{\mZ}\rb\\
&\lesssim \sqrt{\frac{\mu_1r\log(s n)}{n}}\lb\sqrt{\frac{4n k_0s\mu_0\log(s n)}{m}}\gfronorm{\mZ} + \frac{2n s\mu_0\log(s n)}{m}\ginfnorm{\mZ}\rb
\end{align*}
holds with \msh{}{probability greater than $1-(sn)^{-c}$}.
%%%%%%%%

\subsection{Proof of Lemma \ref{lem: key lemma4}}
By the triangle inequality, we have
\begin{align*}
\ginfnorm{\calP_T\calG\lb\calI - \frac{n}{m}\calAT_k\calA_k\rb\calGT(\mZ)} &\leq \ginfnorm{\calP_T\calG\lb\calI - \frac{n}{m}\E{\calAT_k\calA_k}\rb\calGT(\mZ)} + \frac{n}{m}\ginfnorm{\calP_T\calG\lb\calAT_k\calA_k - \E{\calAT_k\calA_k}\rb\calGT(\mZ)}\\
&\lesssim \frac{\mu_1r}{n}\lb\sqrt{\frac{n\log(s n)}{m}}\gfronorm{\mZ} + \frac{n\log(s n)}{m}\ginfnorm{\mZ}\rb\\
&\quad + \frac{n}{m}\ginfnorm{\calP_T\calG\lb\calAT_k\calA_k - \E{\calAT_k\calA_k}\rb\calGT(\mZ)},\numberthis\label{eq:kwadd04}
\end{align*}
where the second line is due to \eqref{ineq: part_prop4}. In the following proof, we will  \kw{}{upper bound the second term} by the matrix Bernstein inequality \eqref{ineq: bernstein} \kw{}{and the uniform bound argument}.

If we define $\msh{}{\vz_i} = (\vb_i\vb_i^* - \mI)\calGT(\mZ)\ve_i\in\C^s$ and $\vy_i^j = \frac{1}{\sqrt{w_j}}\calGT\calP_T\calG\lb\msh{}{\vz_i}\ve_i^\tran\rb\ve_j\in\C^s$, the second term can be rewritten as 
\begin{align*}
\frac{n}{m}\ginfnorm{\calP_T\calG\lb\calAT_k\calA_k - \E{\calAT_k\calA_k}\rb\calGT(\mZ)} &= \frac{n}{m}\ginfnorm{\sum_{i\in\Omega_k}\calP_T\calG\lb(\vb_i\vb_i^* - \mI)\calGT(\mZ)\ve_i\ve_i^\tran\rb}\\
&= \frac{n}{m}\ginfnorm{\sum_{i\in\Omega_k}\calP_T\calG(\msh{}{\vz_i}\ve_i^\tran)}\\
&= \frac{n}{m}\sup_{0\leq j\leq n-1}\frac{1}{\sqrt{w_j}}\vecnorm{\sum_{i\in\Omega_k}\calGT\lb\calP_T\calG(\msh{}{\vz_i}\ve_i^\tran)\rb\ve_j}{2}\\
&=: \frac{n}{m}\sup_{0\leq j\leq n-1}\vecnorm{\sum_{i\in\Omega_k}\vy_i^j}{2},\numberthis\label{eq:kwadd05}
\end{align*}
\msh{}{where the first equation follows from \eqref{eq:AktAk} and the isotropy property of $\{\vb_i\}$ in \eqref{eq: isotropy}.}

%Clearly, $\vy_i$ are independent random vectors with zero mean. 

For any fixed $j\in[n]$, $\vecnorm{\vy_i^j}{2}$ can be bounded as follows:
\begin{align*}
\vecnorm{\vy_i^j}{2} &= \frac{1}{\sqrt{w_j}}\vecnorm{\calGT\calP_T\calG\lb\msh{}{\vz_i}\ve_i^\tran\rb\ve_j}{2}\\
&= \frac{1}{\sqrt{w_j}} \sup_{\msh{}{\vecnorm{\bbeta}{2} = 1}}\lab\la\calGT\calP_T\calG(\msh{}{\vz_i}\ve_i^\tran)\ve_j, \bbeta\ra\rab\\
&= \frac{1}{\sqrt{w_i}}\sup_{\msh{}{\vecnorm{\bbeta}{2} = 1}} \frac{\sqrt{w_i}}{\sqrt{w_j}}\lab\la\calP_T\calG(\msh{}{\vz_i}\ve_i^\tran),\calG(\msh{}{\bbeta}\ve_j^\tran)\ra\rab\\
&\leq \frac{1}{\sqrt{w_i}}\frac{3\mu_1r}{n}\vecnorm{\msh{}{\vz_i}}{2}\numberthis\label{ineq: yij}\\
&= \frac{1}{\sqrt{w_i}}\frac{3\mu_1r}{n}\vecnorm{(\vb_i\vb_i^* - \mI)\calGT(\mZ)\ve_i}{2}\\
&\msh{}{\leq \frac{1}{\sqrt{w_i}}\frac{3\mu_1r}{n}\opnorm{\vb_i\vb_i^* - \mI}\cdot\vecnorm{\calGT(\mZ)\ve_i}{2}}\\
&\leq s\mu_0\cdot\frac{3\mu_1r}{n}\ginfnorm{\mZ},
\end{align*}
where the fourth line follows from Lemma \ref{lem: supp3} \msh{}{and the last line is due to the incoherence property of $\{\vb_i\}$ in \eqref{eq: incoherence b} and the definition of $\ginfnorm{\cdot}$ in \eqref{eq: norm definition}.} 

%To bound $\E{\sum_{i\in\Omega_k}\vy_i^j(\vy_i^j)^*}$, we have
Moreover,  
\begin{align*}
\E{\sum_{i\in\Omega_k}\vy_i^j(\vy_i^j)^*} &\leq \E{\sum_{i\in\Omega_k}\vecnorm{\vy_i^j}{2}^2}\\
&\lesssim \sum_{i\in\Omega_k}\frac{1}{w_i}\lb\frac{\mu_1r}{n}\rb^2\E{\vecnorm{\msh{}{\vz_i}}{2}^2}\\
&\lesssim \sum_{i\in\Omega_k}\frac{1}{w_i}\lb\frac{\mu_1r}{n}\rb^2 s\mu_0\cdot \vecnorm{\calGT(\mZ)\ve_i}{2}^2\\
&\lesssim \lb\frac{\mu_1r}{n}\rb^2 s\mu_0\cdot\gfronorm{\mZ}^2,
\end{align*}
\msh{}{where the second line is due to \eqref{ineq: yij} and the third line follows from \eqref{ineq: Ezi^2}.}
It also holds that $\E{\sum_{i\in\Omega_k}(\vy_i^j)^*\vy_i^j}\leq \lb\frac{\mu_1r}{n}\rb^2 s\mu_0\cdot\gfronorm{\mZ}^2$. 

Applying the matrix Bernstein inequality and taking the uniform bound implies that
\begin{align*}
\frac{n}{m}\sup_{0\leq j\leq n-1}\vecnorm{\sum_{i\in\Omega_k}\vy_i^j}{2} &\lesssim \frac{n}{m}\lb\frac{\mu_1r}{n}\sqrt{s\mu_0\log(s n)}\gfronorm{\mZ} + s\mu_0\log(s n)\frac{\mu_1r}{n}\ginfnorm{\mZ}\rb\\
&= \frac{\mu_1r}{n}\lb\sqrt{\frac{n k_0s\mu_0\log(s n)}{m}}\gfronorm{\mZ} + \frac{n s\mu_0\log(s n)}{m}\ginfnorm{\mZ}\rb
\end{align*}
holds with \msh{}{probability at least $1-ns^{-c_2}$ for a numerical constant $c_2>2$.}
\kw{}{Noting \eqref{eq:kwadd04} and \eqref{eq:kwadd05} we can conclude that}
\begin{align*}
\ginfnorm{\calP_T\calG\lb\calI - \frac{n}{m}\calAT_k\calA_k\rb\calGT(\mZ)} &\lesssim \frac{\mu_1r}{n}\lb\sqrt{\frac{n\log(s n)}{m}}\gfronorm{\mZ} + \frac{n\log(s n)}{m}\ginfnorm{\mZ}\rb\\
&\quad + \frac{\mu_1r}{n}\lb\sqrt{\frac{n k_0s\mu_0\log(s n)}{m}}\gfronorm{\mZ} + \frac{n s\mu_0\log(s n)}{m}\ginfnorm{\mZ}\rb\\
&\lesssim \frac{\mu_1r}{n}\lb\sqrt{\frac{4n k_0s\mu_0\log(s n)}{m}}\gfronorm{\mZ} + \frac{2n s\mu_0\log(s n)}{m}\ginfnorm{\mZ}\rb
\end{align*}
holds with \msh{}{probability exceeding $1-ns^{-c_2}$.}
%%%%%%%%

\subsection{Proof of Lemma \ref{lem: key lemma5}}

According to \eqref{eq: incoherence}, a simple algebra yields that 
\begin{align*}
\max_{0\leq i\leq n_1-1}\fronorm{\mU_i\mVT}^2 \leq \max_{0\leq i\leq n_1-1}\fronorm{\mU_i}^2 \leq \frac{\mu_1r}{n}.
\end{align*}
Then the application of Corollary \ref{cor: useful cor2} implies that 
\begin{align*}
\gfronorm{\mU\mVT}^2 \lesssim \frac{\mu_1r\log(s n)}{n}.
\end{align*}
The upper bound of $\ginfnorm{\mU\mVT}$ can be established as follows. 
Note that
\begin{align*}
\ginfnorm{\mU\mVT} = \max_{0\leq i \leq n-1} \frac{\vecnorm{\calGT(\mU\mVT)\ve_i}{2}}{\sqrt{w_i}}.
\end{align*}
For any fixed $i\in[n]$, we have
\begin{align*}
\frac{\vecnorm{\calGT(\mU\mVT)\ve_i}{2}}{\sqrt{w_i}} &= \frac{1}{\sqrt{w_i}}\sup_{\msh{}{\vecnorm{\msh{}{\bbeta}}{2}=1}}\lab\la\calGT(\mU\mVT)\ve_i,\msh{}{\bbeta}\ra\rab\\
&= \frac{1}{\sqrt{w_i}}\sup_{\msh{}{\vecnorm{\msh{}{\bbeta}}{2}=1}}\lab\la\mU\mVT, \calG(\msh{}{\bbeta}\ve_i^\tran)\ra\rab\\
&= \frac{1}{\sqrt{w_i}}\sup_{\msh{}{\vecnorm{\msh{}{\bbeta}}{2}=1}}\lab\la\mU\mVT,\mG_i\otimes\msh{}{\bbeta}\ra\rab\\
&= \frac{1}{w_i}\sup_{\msh{}{\vecnorm{\msh{}{\bbeta}}{2}=1}}\lab\la\mU\mVT,\lb\sum_{\msh{}{\substack{j+k=i\\0\leq j\leq n_1-1\\0\leq k\leq n_2-1}}}\ve_j\ve_k^\tran\rb\otimes\msh{}{\bbeta}\ra\rab\\
&= \frac{1}{w_i}\sup_{\msh{}{\vecnorm{\msh{}{\bbeta}}{2}=1}}\lab\la\mU\mVT,\sum_{\msh{}{\substack{j+k=i\\0\leq j\leq n_1-1\\0\leq k\leq n_2-1}}}(\ve_j\otimes\msh{}{\bbeta})\ve_k^\tran\ra\rab\\
&= \frac{1}{w_i}\sup_{\msh{}{\vecnorm{\msh{}{\bbeta}}{2}=1}}\lab\sum_{\msh{}{\substack{j+k=i\\0\leq j\leq n_1-1\\0\leq k\leq n_2-1}}}\la(\ve_j\otimes\msh{}{\bbeta})^*\mU,\ve_k^\tran\mV\ra\rab\\
&\leq \frac{1}{w_i}\sup_{\msh{}{\vecnorm{\msh{}{\bbeta}}{2}=1}}\sum_{\msh{}{\substack{j+k=i\\0\leq j\leq n_1-1\\0\leq k\leq n_2-1}}}\vecnorm{(\ve_j\otimes\msh{}{\bbeta})^*\mU}{2}\vecnorm{\ve_k^\tran\mV}{2}\\
&\leq \sup_{\msh{}{\vecnorm{\msh{}{\bbeta}}{2}=1}}\sqrt{\frac{1}{w_i}\sum_{\msh{}{\substack{j+k=i\\0\leq j\leq n_1-1\\0\leq k\leq n_2-1}}}\vecnorm{(\ve_j\otimes\msh{}{\bbeta})^*\mU}{2}^2}\sqrt{\frac{1}{w_i}\sum_{\msh{}{\substack{j+k=i\\0\leq j\leq n_1-1\\0\leq k\leq n_2-1}}}\vecnorm{\ve_k^\tran\mV}{2}^2}\\
&= \sup_{\msh{}{\vecnorm{\msh{}{\bbeta}}{2}=1}}\sqrt{\frac{1}{w_i}\sum_{\msh{}{\substack{j+k=i\\0\leq j\leq n_1-1\\0\leq k\leq n_2-1}}}\vecnorm{\msh{}{\bbeta}^*\mU_j}{2}^2}\sqrt{\frac{1}{w_i}\sum_{\msh{}{\substack{j+k=i\\0\leq j\leq n_1-1\\0\leq k\leq n_2-1}}}\vecnorm{\ve_k^\tran\mV}{2}^2}\\
&\leq \sqrt{\frac{1}{w_i}\sum_{\msh{}{\substack{j+k=i\\0\leq j\leq n_1-1\\0\leq k\leq n_2-1}}}\fronorm{\mU_j}^2}\sqrt{\frac{1}{w_i}\sum_{\msh{}{\substack{j+k=i\\0\leq j\leq n_1-1\\0\leq k\leq n_2-1}}}\vecnorm{\ve_k^\tran\mV}{2}^2}\\
&\leq \frac{\mu_1r}{n},
\end{align*}
\msh{}{where the fourth line is due to the definition of $\mG_i$ in \eqref{eq: Hankel basis} and} the last line follows from \eqref{incoherence condition}. Therefore, $\ginfnorm{\mU\mVT} \leq \frac{\mu_1r}{n}$.

%%%%%%%%

%\input{auxiliary}
\section{Auxiliary Results}
\label{aux}
In this section, we present some necessary results which have been used in the previous proofs. The following lemma is used in the proof of Theorem \ref{thr:optimality}.
\begin{lemma}\label{lem:upper bound}
	Suppose $\opnorm{\calA\calA^\ast}\geq 1$ and $\opnorm{\calP_{T}\calG\calAT \calA\calGT\calP_{T} -\calP_{T}\calG\calGT\calP_{T} } \leq\frac{1}{2}$. For any $\msh{}{\mM} \in \C^{s n_1\times n_2}$ which obeys
	\begin{align*}
	\calA\calGT(\msh{}{\mM})=0\quad\mbox{and}\quad(\calI-\calG\calGT)(\msh{}{\mM})=\bzero,
	\end{align*}
	we  have
	\begin{align*}
	\fronorm{\calP_T(\msh{}{\mM})}\leq 4s\mu_0\fronorm{\calP_{T^\perp}(\msh{}{\mM})}.
	\end{align*}
\end{lemma}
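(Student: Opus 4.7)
The two hypotheses on $\mW$ are complementary: $(\calI - \calG\calGT)(\mW) = \bzero$ says that $\mW$ lies in the range of $\calG$, i.e.\ $\mW = \calG\calGT(\mW)$, while $\calA\calGT(\mW) = \bzero$ rearranges as $\calA\calGT\calP_T(\mW) = -\calA\calGT\calP_{T^\perp}(\mW)$. My plan is to squeeze the quantity $\fronorm{\calA\calGT\calP_T(\mW)}$ between a deterministic upper bound in terms of $\fronorm{\calP_{T^\perp}(\mW)}$ and a lower bound in terms of $\fronorm{\calP_T(\mW)}$, and then rearrange.

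For the upper bound I will invoke Lemma~\ref{lem: prop of A} (which supplies $\opnorm{\calA} \leq \sqrt{s\mu_0}$) together with $\opnorm{\calGT} \leq 1$, giving
\begin{align*}
\fronorm{\calA\calGT\calP_T(\mW)}^2 \;=\; \fronorm{\calA\calGT\calP_{T^\perp}(\mW)}^2 \;\leq\; s\mu_0\,\fronorm{\calP_{T^\perp}(\mW)}^2.
\end{align*}
For the lower bound, both $\calP_T\calG\calAT\calA\calGT\calP_T$ and $\calP_T\calG\calGT\calP_T$ are self-adjoint, so the hypothesis $\opnorm{\calP_T\calG\calAT\calA\calGT\calP_T - \calP_T\calG\calGT\calP_T} \leq 1/2$, tested against $\calP_T(\mW)$ and combined with the idempotency of $\calP_T$, produces
\begin{align*}
\fronorm{\calA\calGT\calP_T(\mW)}^2 \;\geq\; \fronorm{\calGT\calP_T(\mW)}^2 - \tfrac{1}{2}\fronorm{\calP_T(\mW)}^2.
\end{align*}

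The remaining step, and the one where the membership $\mW \in \Range(\calG)$ really comes in, is to lower bound $\fronorm{\calGT\calP_T(\mW)}$ by $\fronorm{\calP_T(\mW)}$. Equating $\mW = \calG\calGT(\mW)$ with $\mW = \calP_T(\mW) + \calP_{T^\perp}(\mW)$ and isolating the $T$-part yields
\begin{align*}
\calG\calGT\calP_T(\mW) \;=\; \calP_T(\mW) + (\calI - \calG\calGT)\calP_{T^\perp}(\mW).
\end{align*}
Because $\calGT\calG = \calI$ makes $\calG$ an isometry, $\fronorm{\calGT\calP_T(\mW)} = \fronorm{\calG\calGT\calP_T(\mW)}$, and since $\calI - \calG\calGT$ is an orthogonal projection the triangle inequality then delivers $\fronorm{\calGT\calP_T(\mW)} \geq \fronorm{\calP_T(\mW)} - \fronorm{\calP_{T^\perp}(\mW)}$.

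Chaining the three inequalities is routine. If $\fronorm{\calP_T(\mW)} \leq \fronorm{\calP_{T^\perp}(\mW)}$ the target bound is immediate since $s\mu_0 \geq 1$. Otherwise, writing $a = \fronorm{\calP_T(\mW)}$, $b = \fronorm{\calP_{T^\perp}(\mW)}$ and $\lambda = s\mu_0 \geq 1$, the chain collapses to the scalar inequality $(\lambda - 1)b^2 + 2ab - \tfrac{1}{2}a^2 \geq 0$, whose resolution as a quadratic in $b$ gives $a \leq (2 + \sqrt{2\lambda + 2})\,b$; a one-variable check confirms $2 + \sqrt{2\lambda + 2} \leq 4\lambda$ for every $\lambda \geq 1$ (sharp at $\lambda = 1$), which yields the claim. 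The step I expect to be subtle rather than technically hard is precisely the bridge from $\fronorm{\calA\calGT\calP_T(\mW)}^2$ to $\fronorm{\calP_T(\mW)}^2$: the near-isometry hypothesis by itself only compares the former to $\fronorm{\calGT\calP_T(\mW)}^2$, and it is the $\Range(\calG)$ membership of $\mW$ that closes the gap and is ultimately responsible for the factor $s\mu_0$ in the final bound.
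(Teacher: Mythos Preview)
Your argument is correct, and it differs from the paper's in an interesting way. The paper works with the composite operator $\calG\calAT\calA\calGT + (\calI-\calG\calGT)$: applying the two hypotheses on $\mW$ shows this operator annihilates $\mW$, after which the paper splits $\mW=\calP_T(\mW)+\calP_{T^\perp}(\mW)$ via the triangle inequality. For the $\calP_T$ piece it exploits the orthogonal decomposition into $\Range(\calG)$ and its complement, together with the assumption $\|\calA\calA^\ast\|\geq 1$ (used in the sense $\la y,\calA\calA^\ast y\ra\geq\|y\|_2^2$), to obtain directly $\fronorm{(\calG\calAT\calA\calGT+(\calI-\calG\calGT))\calP_T(\mW)}^2\geq\tfrac12\fronorm{\calP_T(\mW)}^2$; the $\calP_{T^\perp}$ piece is bounded by $2s\mu_0\fronorm{\calP_{T^\perp}(\mW)}$ using $\opnorm{\calA^\ast\calA}\leq s\mu_0$. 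Combining gives $\fronorm{\calP_T(\mW)}\leq 2\sqrt{2}\,s\mu_0\fronorm{\calP_{T^\perp}(\mW)}$.

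By contrast, you work with $\fronorm{\calA\calGT\calP_T(\mW)}$ directly and never invoke the hypothesis $\|\calA\calA^\ast\|\geq 1$ at all; the price is the extra bridging step relating $\fronorm{\calGT\calP_T(\mW)}$ to $\fronorm{\calP_T(\mW)}$ via the identity $\calG\calGT\calP_T(\mW)=\calP_T(\mW)+(\calI-\calG\calGT)\calP_{T^\perp}(\mW)$, and the resulting quadratic inequality at the end. What you gain is a proof that dispenses with one of the stated assumptions (and, incidentally, a sharper implicit constant $2+\sqrt{2s\mu_0+2}$, which is $O(\sqrt{s\mu_0})$ rather than $O(s\mu_0)$ for large $s\mu_0$); what the paper's route gains is that no quadratic needs to be solved and the argument is a touch shorter once the composite operator is in hand. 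Both are valid.
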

%%%%

\begin{proof}
	\kw{}{It follows \eqref{eq: w cond 1} and \eqref{eq: w cond 2} that }
	\begin{align*}
	0 &=\fronorm{(\calG\calAT\calA\calGT+(\calI-\calG\calGT))(\msh{}{\mM})}\\
	&\geq \fronorm{(\calG\calAT\calA\calGT+(\calI-\calG\calGT))\calP_T(\msh{}{\mM})}-\fronorm{(\calG\calAT\calA\calGT+(\calI-\calG\calGT))\calP_{T^\perp}(\msh{}{\mM})}.
	\end{align*}
	For the first term, 
	\begin{align*}
	\fronorm{(\calG\calAT\calA\calGT+(\calI-\calG\calGT))\calP_T(\msh{}{\mM})}^2 &= \fronorm{\calG\calAT\calA\calGT\calP_T(\msh{}{\mM})}^2+\fronorm{(\calI-\calG\calGT)\calP_T(\msh{}{\mM})}^2\\
	&= \la\calG\calAT\calA\calGT\calP_T(\msh{}{\mM}),\calG\calAT\calA\calGT\calP_T(\msh{}{\mM})\ra+\la\calP_T(\msh{}{\mM}),(\calI-\calG\calGT)\calP_T(\msh{}{\mM})\ra\\
	&= \la\calA\calGT\calP_T(\msh{}{\mM}),(\calA\calAT)\calA\calGT\calP_T(\msh{}{\mM})\ra+\la\calP_T(\msh{}{\mM}),(\calI-\calG\calGT)\calP_T(\msh{}{\mM})\ra\\
	&\geq \la\calP_T(\msh{}{\mM}),\calG\calAT\calA\calGT\calP_T(\msh{}{\mM})\ra+\la\calP_T(\msh{}{\mM}),(\calI-\calG\calGT)\calP_T(\msh{}{\mM})\ra\\
	&=\msh{}{\fronorm{\calP_T(\msh{}{\mM})}^2}+\la\calP_T(\msh{}{\mM}),\calP_T(\calG\calAT\calA\calGT-\calG\calGT)\calP_T(\msh{}{\mM})\ra\\
	&\geq \msh{}{\fronorm{\calP_T(\msh{}{\mM})}^2}-\opnorm{\calP_T(\calG\calAT\calA\calGT-\calG\calGT)\calP_T}\cdot\fronorm{\calP_T(\msh{}{\mM})}^2\\
	&\geq \frac{1}{2}\msh{}{\fronorm{\calP_T(\msh{}{\mM})}^2}.
	\end{align*}
	where the fourth step is due to \msh{}{\eqref{ineq: lower bound of ATA} in Lemma~\ref{lem: prop of A}.}
	%the assumption $\opnorm{\calA\calA^\ast}\geq 1$.
	
	For the second term,
	\begin{align*}
	\fronorm{(\calG\calAT\calA\calGT+(\calI-\calG\calGT))\calP_{T^\perp}(\msh{}{\mM})} &\leq \fronorm{(\calG\calAT\calA\calGT)\calP_{T^\perp}(\msh{}{\mM})}+\fronorm{(\calI-\calG\calGT)\calP_{T^\perp}(\msh{}{\mM})}\\
	&\leq \opnorm{\calG}\cdot\opnorm{\calAT\calA}\cdot\opnorm{\calGT}\cdot\fronorm{\calP_{T^\perp}(\msh{}{\mM})}+\opnorm{\calI-\calG\calGT}\cdot\fronorm{\calP_{T^\perp}(\msh{}{\mM})}\\
	&\leq (1+s\mu_0)\fronorm{\calP_{T^\perp}(\msh{}{\mM})}\\
	&\leq 2s\mu_0\fronorm{\calP_{T^\perp}(\msh{}{\mM})}
	\end{align*}
	where the third line is due to $\opnorm{\calG}=1$, $\opnorm{\calGT}\leq 1$ and \eqref{eq:more of ATA} in Lemma~\ref{lem: prop of A}. 
	
	Combining these two terms together completes the proof.
\end{proof}
%%%%

The following lemmas play an important role in the proofs of Lemmas \ref{lem: partition} to \ref{lem: key lemma5}.
\begin{lemma}
	\label{lem: supp1}
	\msh{}{Recall that $\mU$ and $\mV$ obey (III.4).} For any fixed \msh{}{$\vz\in\C^s$}, 
	%under the condition \eqref{eq: incoherence}, 
	there holds
	\begin{align}
	&\max_{0\leq i\leq n-1}\fronorm{\mU^{\ast}\calG(\msh{}{\vz}\ve_{i}^\tran)}^2\leq\vecnorm{\msh{}{\vz}}{2}^2\cdot\frac{\mu_1r}{n},\numberthis\label{ineq: first}\\
	&\max_{0\leq i\leq n-1}\fronorm{\calG(\msh{}{\vz}\ve_{i}^\tran)\mV}^2\leq\vecnorm{\msh{}{\vz}}{2}^2\cdot\frac{\mu_1r}{n},\numberthis\label{ineq: second}\\
	%\label{ineq: supp13}
	&\max_{0\leq i\leq n-1}\fronorm{\calP_T\calG(\msh{}{\vz}\ve_{i}^\tran)}^2\leq 2\vecnorm{\msh{}{\vz}}{2}^2\cdot\frac{\mu_1r}{n}.\numberthis\label{ineq: third}
	\end{align}
	%\msh{}{where $\ve_i$ is $(i+1)$th column of the $n\times n$ identity matrix $\mI_n$.}
\end{lemma}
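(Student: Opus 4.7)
The plan is to exploit the identity $\calG(\vx\ve_i^\tran)=\mG_i\otimes\vx$, which comes from the definition $\calG(\mX)=\sum_{i=0}^{n-1}\mG_i\otimes\vx_i$ applied to the rank-one matrix $\vx\ve_i^\tran$, together with the incoherence bounds on $\mU$ and $\mV$ provided by Lemma~\ref{incoherence 2}, namely $\fronorm{\mU_j}^2\le\mu_1 r/n$ and $\vecnorm{\ve_k^\tran\mV}{2}^2\le\mu_1 r/n$. Since $\mG_i=\frac{1}{\sqrt{w_i}}\sum_{j+k=i}\ve_j\ve_k^\tran$, the structure of $\calG(\vx\ve_i^\tran)$ is essentially a sum of rank-one ``anti-diagonal'' blocks, each of which can be controlled by a single row of $\mU$ or a single row of $\mV$.

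For \eqref{ineq: first}, I would compute column by column. Using the block decomposition $\mU^\tran=[\mU_0^\ast,\dots,\mU_{n_1-1}^\ast]$, for each $k$ with $0\le i-k\le n_1-1$ the $k$-th column of $\mU^{\ast}\calG(\vx\ve_i^\tran)=\mU^{\ast}(\mG_i\otimes\vx)$ is exactly $\frac{1}{\sqrt{w_i}}\mU_{i-k}^\ast\vx$, and is zero otherwise. Squaring, summing over $k$, and using $\vecnorm{\mU_j^\ast\vx}{2}^2\le\opnorm{\mU_j}^2\vecnorm{\vx}{2}^2\le\fronorm{\mU_j}^2\vecnorm{\vx}{2}^2\le(\mu_1 r/n)\vecnorm{\vx}{2}^2$ yields a total bound of $\frac{1}{w_i}\cdot w_i\cdot(\mu_1 r/n)\vecnorm{\vx}{2}^2$.

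For \eqref{ineq: second}, I would apply the Kronecker-product identity $(\mA\otimes\mB)(\mC\otimes\mD)=(\mA\mC)\otimes(\mB\mD)$ to each column of $\mV$, viewed as $\mV\ve_\ell\otimes 1$. This gives $\calG(\vx\ve_i^\tran)\mV\ve_\ell=(\mG_i\mV\ve_\ell)\otimes\vx$, hence $\fronorm{\calG(\vx\ve_i^\tran)\mV}^2=\fronorm{\mG_i\mV}^2\vecnorm{\vx}{2}^2$. Expanding $\mG_i\mV=\frac{1}{\sqrt{w_i}}\sum_{j+k=i}\ve_j(\ve_k^\tran\mV)$ and using that the rows $\ve_j$ for distinct $j$ are orthogonal, I obtain $\fronorm{\mG_i\mV}^2=\frac{1}{w_i}\sum_{j+k=i}\vecnorm{\ve_k^\tran\mV}{2}^2\le\mu_1 r/n$.

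For \eqref{ineq: third}, I would write the tangent-space projection as $\calP_T(\mZ)=\mU\mU^\ast\mZ+(\mI-\mU\mU^\ast)\mZ\mV\mV^\ast$. The two summands lie in orthogonal subspaces (their column spaces are $\Range(\mU)$ and $\Range(\mU)^\perp$ respectively), so Frobenius norms add: $\fronorm{\calP_T(\mZ)}^2\le\fronorm{\mU^\ast\mZ}^2+\fronorm{\mZ\mV}^2$. Applying this to $\mZ=\calG(\vx\ve_i^\tran)$ and invoking \eqref{ineq: first} and \eqref{ineq: second} gives the factor $2\mu_1 r/n$. No step here is a serious obstacle; the only subtlety is tracking the block structure of $\mU$ carefully in the first inequality and using the correct mixed-product property of $\otimes$ in the second.
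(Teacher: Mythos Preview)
Your proposal is correct and follows essentially the same approach as the paper: both use the identity $\calG(\vx\ve_i^\tran)=\mG_i\otimes\vx=\frac{1}{\sqrt{w_i}}\sum_{j+k=i}(\ve_j\otimes\vx)\ve_k^\tran$ and then control each term via the incoherence bounds $\fronorm{\mU_j}^2\le\mu_1 r/n$ and $\vecnorm{\ve_k^\tran\mV}{2}^2\le\mu_1 r/n$. Your organization is slightly cleaner in places---for \eqref{ineq: second} you factor via the mixed-product rule to get $\fronorm{\mG_i\mV}^2\vecnorm{\vx}{2}^2$ directly, and for \eqref{ineq: third} you use the orthogonal decomposition $\calP_T=\mU\mU^\ast+( \mI-\mU\mU^\ast)(\cdot)\mV\mV^\ast$ rather than the paper's expansion $\fronorm{\calP_T(\mZ)}^2=\fronorm{\mU^\ast\mZ}^2+\fronorm{\mZ\mV}^2-\fronorm{\mU^\ast\mZ\mV}^2$---but these are presentational variants of the same argument.
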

%where $\ve_i\in\R^n$ is the basis vector. 
%%%%
\begin{proof}
	To show \eqref{ineq: first}, note that for any $0\leq i \leq n-1$,  
	\begin{align*}
	\calG(\msh{}{\vz}\ve_i^\tran) &= \mG_i\otimes\msh{}{\vz}\\
	&= \lb\sum_{\msh{}{\substack{j+k=i\\0\leq j\leq n_1-1\\0\leq k\leq n_2-1}}}\frac{1}{\sqrt{w_i}}\ve_j\ve_k^\tran\rb\otimes\msh{}{\vz}\\
	&= \sum_{\msh{}{\substack{j+k=i\\0\leq j\leq n_1-1\\0\leq k\leq n_2-1}}}\frac{1}{\sqrt{w_i}}\lb\ve_j\otimes\msh{}{\vz}\rb\ve_k^\tran,
	\end{align*}
	\msh{}{where the second equality is due to the definition of $\mG_i$ in \eqref{eq: Hankel basis}.} 
	%where $\ve_j\in\R^{n_1}$ and $\ve_k\in\R^{n_2}$ are basis vectors. 
	It follows that
	\begin{align*}
	\fronorm{\mU^{\ast}\calG(\msh{}{\vz}\ve_i^\tran)}^2 &=\la\mU^{\ast}\calG(\msh{}{\vz}\ve_i^\tran),\mU^{\ast}\calG(\msh{}{\vz}\ve_i^\tran)\ra\\
	&=\frac{1}{w_i}\la\sum_{\msh{}{\substack{j+k=i\\0\leq j\leq n_1-1\\0\leq k\leq n_2-1}}}\mU^{\ast}\lb\ve_j\otimes\msh{}{\vz}\rb\ve_k^\tran,\sum_{\msh{}{\substack{p+q=i\\0\leq p\leq n_1-1\\0\leq q\leq n_2-1}}}\mU^{\ast}\lb\ve_p\otimes\msh{}{\vz}\rb\ve_q^\tran\ra\\
	&=\frac{1}{w_i}\sum_{\msh{}{\substack{j+k=i\\0\leq j\leq n_1-1\\0\leq k\leq n_2-1}}}\la\mU^{\ast}\lb\ve_j\otimes\msh{}{\vz}\rb,\mU^{\ast}\lb\ve_j\otimes\msh{}{\vz}\rb\ra\\
	&=\frac{1}{w_i}\sum_{\msh{}{\substack{j+k=i\\0\leq j\leq n_1-1\\0\leq k\leq n_2-1}}}\vecnorm{\mU^{\ast}\lb\ve_j\otimes\msh{}{\vz}\rb}{2}^2\\
	&=\frac{1}{w_i}\sum_{\msh{}{\substack{j+k=i\\0\leq j\leq n_1-1\\0\leq k\leq n_2-1}}}\vecnorm{\mU_j^{\ast}\msh{}{\vz}}{2}^2\\
	&\leq \frac{1}{w_i}\sum_{\msh{}{\substack{j+k=i\\0\leq j\leq n_1-1\\0\leq k\leq n_2-1}}}\vecnorm{\msh{}{\vz}}{2}^2\cdot\fronorm{\mU_j}^2\\
	&\leq\vecnorm{\msh{}{\vz}}{2}^2\cdot\frac{\mu_1r}{n},
	\end{align*}
	where the last step follows from \eqref{incoherence condition}.
	
	As for \eqref{ineq: second}, note that
	\begin{align*}
	\fronorm{\calG(\msh{}{\vz}\ve_i^\tran)\mV}^2 &= \la\calG(\msh{}{\vz}\ve_i^\tran)\mV,\calG(\msh{}{\vz}\ve_i^\tran)\mV\ra\\
	&= \frac{1}{w_i}\la\sum_{\msh{}{\substack{j+k=i\\0\leq j\leq n_1-1\\0\leq k\leq n_2-1}}}(\ve_j\otimes\msh{}{\vz})\ve_k^\tran\mV,\sum_{\msh{}{\substack{p+q=i\\0\leq p\leq n_1-1\\0\leq q\leq n_2-1}}}(\ve_p\otimes\msh{}{\vz})\ve_q^\tran\mV\ra\\
	&= \frac{1}{w_i}\sum_{\msh{}{\substack{j+k=i\\0\leq j\leq n_1-1\\0\leq k\leq n_2-1}}}\sum_{\msh{}{\substack{p+q=i\\0\leq p\leq n_1-1\\0\leq q\leq n_2-1}}}\la(\ve_j\otimes\msh{}{\vz})\ve_k^\tran\mV,(\ve_p\otimes\msh{}{\vz})\ve_q^\tran\mV\ra\\
	&= \frac{1}{w_i}\sum_{\msh{}{\substack{j+k=i\\0\leq j\leq n_1-1\\0\leq k\leq n_2-1}}}\sum_{\msh{}{\substack{p+q=i\\0\leq p\leq n_1-1\\0\leq q\leq n_2-1}}}\la\lb\ve_p^\tran\otimes\msh{}{\vz}^*\rb\lb\ve_j\otimes\msh{}{\vz}\rb\ve_k^\tran\mV,\ve_q^\tran\mV\ra\\
	&= \frac{1}{w_i}\sum_{\msh{}{\substack{j+k=i\\0\leq j\leq n_1-1\\0\leq k\leq n_2-1}}}\sum_{\msh{}{\substack{p+q=i\\0\leq p\leq n_1-1\\0\leq q\leq n_2-1}}}\la(\ve_p^\tran\ve_j)\otimes(\msh{}{\vz}^*\msh{}{\vz})\ve_k^\tran\mV,\ve_q^\tran\mV\ra\\
	&=\frac{1}{w_i}\sum_{\msh{}{\substack{j+k=i\\0\leq j\leq n_1-1\\0\leq k\leq n_2-1}}}\la\msh{}{\vz}^*\msh{}{\vz}\ve_k^\tran\mV,\ve_k^\tran\mV\ra\\
	&= \frac{\vecnorm{\msh{}{\vz}}{2}^2}{w_i}\sum_{\msh{}{\substack{j+k=i\\0\leq j\leq n_1-1\\0\leq k\leq n_2-1}}}\la\ve_k^\tran\mV,\ve_k^\tran\mV\ra\\
	%&= \vecnorm{\msh{}{\vz}}{2}^2\vecnorm{\mVT\ve_k}{2}^2\\
	&\leq \vecnorm{\msh{}{\vz}}{2}^2\frac{\mu_1r}{n},
	\end{align*}
	where the last step is also due to \eqref{incoherence condition}. 
	
	For the inequality \eqref{ineq: third}, \msh{}{by the definition of $\calP_T$ in \eqref{eq: PT},} we have
	\begin{align*}
	\fronorm{\calP_T\calG\lb\msh{}{\vz}\ve_i^\tran\rb}^2 &= \la\calP_T\calG\lb\msh{}{\vz}\ve_i^\tran\rb,\calP_T\calG\lb\msh{}{\vz}\ve_i^\tran\rb\ra\\
	&= \la\calP_T\calG\lb\msh{}{\vz}\ve_i^\tran\rb,\calG(\msh{}{\vz}\ve_i^\tran)\ra\\
	&= \la\mU\mU^*\calG\lb\msh{}{\vz}\ve_i^\tran\rb+\calG\lb\msh{}{\vz}\ve_i^\tran\rb\mV\mV^*-\mU\mU^*\calG\lb\msh{}{\vz}\ve_i^\tran\rb\mV\mV^*,\calG\lb\msh{}{\vz}\ve_i^\tran\rb\ra\\
	&= \fronorm{\mU^*\calG(\msh{}{\vz}\ve_i^\tran)}^2+\fronorm{\calG\lb\msh{}{\vz}\ve_i^\tran\rb\mV}^2 -\fronorm{\mU^*\calG\lb\msh{}{\vz}\ve_i^\tran\rb\mV}^2\\
	&\leq \fronorm{\mU^*\calG(\msh{}{\vz}\ve_i^\tran)}^2+\fronorm{\calG\lb\msh{}{\vz}\ve_i^\tran\rb\mV}^2\\
	&\leq 2\vecnorm{\msh{}{\vz}}{2}^2\frac{\mu_1 r}{n},
	\end{align*}
	which completes the proof.
\end{proof}
%%%%%%%%

After \kw{}{replacing} $\msh{}{\vz}$ with $\vb_i$ in Lemma~\ref{lem: supp1}, we obtain the following corollary based on the incoherence property \eqref{eq: incoherence b} of $\vb_i$, where $\vb_i$ is the $i$th column of $\mB^*$.
\begin{corollary}
	\label{cor: useful cor}
	Under the condition \eqref{eq: incoherence}, there holds
	\begin{align}
	&\max_{0\leq i\leq n-1}\fronorm{\mU^{\ast}\calG(\vb_i\ve_{i}^\tran)}^2\leq\frac{\mu_0\mu_1sr}{n},\\
	&\max_{0\leq i\leq n-1}\fronorm{\calG(\vb_i\ve_{i}^\tran)\mV}^2\leq\frac{\mu_0\mu_1sr}{n},\\
	\label{ineq: supp13}
	&\max_{0\leq i\leq n-1}\fronorm{\calP_T\calG(\vb_i\ve_{i}^\tran)}^2\leq \frac{2\mu_0\mu_1sr}{n}.
	\end{align}
	%where $\vb_i$ is the $i$th column of $\mB^*$. 
\end{corollary}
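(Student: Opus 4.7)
The plan is to derive Corollary \ref{cor: useful cor} as a direct consequence of Lemma \ref{lem: supp1} by substituting $\vx = \vb_i$ and controlling the norm $\vecnorm{\vb_i}{2}$ via the incoherence property of the distribution $F$.

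First, I would observe that the three inequalities in Lemma \ref{lem: supp1} are all deterministic statements that hold for every fixed $\vx \in \C^s$. In particular, for any realization of $\vb_i \in \C^s$, we may apply the lemma pointwise with $\vx := \vb_i$ to obtain
\begin{align*}
\fronorm{\mU^{\ast}\calG(\vb_i\ve_{i}^\tran)}^2 &\leq \vecnorm{\vb_i}{2}^2\cdot \frac{\mu_1 r}{n},\\
\fronorm{\calG(\vb_i\ve_{i}^\tran)\mV}^2 &\leq \vecnorm{\vb_i}{2}^2\cdot \frac{\mu_1 r}{n},\\
\fronorm{\calP_T\calG(\vb_i\ve_{i}^\tran)}^2 &\leq 2\vecnorm{\vb_i}{2}^2\cdot \frac{\mu_1 r}{n}.
\end{align*}

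Next, I would invoke the incoherence property \eqref{eq: incoherence b} from Assumption~\ref{assumption 1}, which guarantees that $|\vb_i[\ell]|^2 \leq \mu_0$ for every coordinate $\ell \in \{0,\dots,s-1\}$. Summing over coordinates yields the deterministic bound
\begin{align*}
\vecnorm{\vb_i}{2}^2 = \sum_{\ell=0}^{s-1} |\vb_i[\ell]|^2 \leq s\mu_0.
\end{align*}
Substituting this into the three displayed inequalities and taking the maximum over $i \in \{0,\dots,n-1\}$ produces precisely the three claimed bounds. There is no probabilistic step required here, since the incoherence bound is a sure bound on the support of $F$; the randomness of $\vb_i$ only enters later when these estimates are combined with Bernstein-type arguments.

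The argument is essentially a one-line substitution, so I do not anticipate any serious obstacle. The only subtlety worth flagging is that Lemma \ref{lem: supp1} is stated for a fixed $\vx$ independent of the randomness, whereas here $\vb_i$ is itself random; however, this causes no issue because the bound is uniform over all $\vx \in \C^s$ with a given norm and the incoherence condition controls $\vecnorm{\vb_i}{2}$ almost surely.
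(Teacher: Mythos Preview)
Your proposal is correct and matches the paper's own argument essentially verbatim: the paper simply states that the corollary follows from replacing $\vx$ with $\vb_i$ in Lemma~\ref{lem: supp1} and invoking the incoherence property \eqref{eq: incoherence b}, which is exactly what you do. Your additional remark that the bound is deterministic (so there is no issue applying the lemma to the random vector $\vb_i$) is a correct and useful clarification.
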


\begin{lemma}
	\label{lem: supp2}
	Under the condition \eqref{eq: incoherence}, for any fixed matrix $\mW\in\C^{sn_1\times n_2}$,
	\begin{align}
	\vecnorm{\calGT\calP_T(\mW)\ve_i}{2}\leq\fronorm{\mW}\cdot\sqrt{\frac{2\mu_1r}{n}}.
	\end{align}
	%where $\ve_i\in\R^n$ is the basis vector. 
\end{lemma}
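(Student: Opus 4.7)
The plan is to use duality to convert the vector norm into an inner product, and then apply Cauchy--Schwarz together with the previously established bound \eqref{ineq: third} on $\fronorm{\calP_T\calG(\vx\ve_i^\tran)}$.

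First, I would write
\begin{align*}
\vecnorm{\calGT\calP_T(\mW)\ve_i}{2} = \sup_{\vx\in\C^s:\,\vecnorm{\vx}{2}=1}\lab\la \calGT\calP_T(\mW)\ve_i,\vx\ra\rab.
\end{align*}
For each fixed unit $\vx$, by adjointness of $\calG/\calG^\ast$ and self-adjointness of $\calP_T$,
\begin{align*}
\la \calGT\calP_T(\mW)\ve_i,\vx\ra = \la \calP_T(\mW),\calG(\vx\ve_i^\tran)\ra = \la \mW, \calP_T\calG(\vx\ve_i^\tran)\ra.
\end{align*}

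Next, applying Cauchy--Schwarz in the Frobenius inner product and invoking \eqref{ineq: third} from Lemma~\ref{lem: supp1},
\begin{align*}
\lab\la \mW, \calP_T\calG(\vx\ve_i^\tran)\ra\rab \leq \fronorm{\mW}\cdot\fronorm{\calP_T\calG(\vx\ve_i^\tran)}\leq \fronorm{\mW}\cdot\sqrt{\frac{2\mu_1 r}{n}}\cdot \vecnorm{\vx}{2}.
\end{align*}
Taking the supremum over all unit-norm $\vx$ yields the stated inequality. The key ingredients are all already in place (self-adjointness of $\calP_T$, the adjoint identity $\la \calG(\cdot),\cdot\ra = \la \cdot,\calG^\ast(\cdot)\ra$, and \eqref{ineq: third}), so no real obstacle is anticipated — the lemma is essentially a corollary of Lemma~\ref{lem: supp1} via duality.
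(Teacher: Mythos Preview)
Your proposal is correct and follows essentially the same approach as the paper: express the vector norm via duality, move through the adjoints $\calG^\ast$ and $\calP_T$, and apply Cauchy--Schwarz together with \eqref{ineq: third} from Lemma~\ref{lem: supp1}. The paper's proof is line-for-line the same argument.
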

%%%%
\begin{proof}
	\kw{}{The result follows from a direct calculation: }
	\begin{align*}
	\vecnorm{\calGT\calP_T(\mW)\ve_i}{2} &= \sup_{\msh{}{\vecnorm{\bbeta}{2}=1}}\lab\la\calGT\calP_T(\mW)\ve_i,\msh{}{\bbeta}\ra\rab\\
	&=\sup_{\msh{}{\vecnorm{\bbeta}{2}=1}}\lab\la\calGT\calP_T(\mW),\msh{}{\bbeta}\ve_i^\tran\ra\rab\\
	&= \sup_{\msh{}{\vecnorm{\bbeta}{2}=1}}\lab\la\mW,\calP_T\calG(\msh{}{\bbeta}\ve_i^\tran)\ra\rab\\
	&\leq \fronorm{\mW}\cdot\sup_{\msh{}{\vecnorm{\bbeta}{2}=1}}\fronorm{\calP_T\calG(\msh{}{\bbeta}\ve_i^\tran)}\\
	&\leq \fronorm{\mW}\cdot\sqrt{\frac{2\mu_1r}{n}},
	\end{align*}
	where the last line follows from \eqref{ineq: third} in Lemma \ref{lem: supp1}.
\end{proof}
%%%%%%%%

By \kw{}{combining Lemmas \ref{lem: supp1} and \ref{lem: supp2}}, the following corollary can be established, which is used in the proof of \eqref{ineq: part_prop1}.
\begin{corollary}
	\label{cor: useful cor1}
	For any fixed matrix $\mW\in\C^{s n_1\times n_2}$, under the condition \eqref{eq: incoherence}, there holds 
	\begin{align*}
	\max_{0\leq i\leq n-1}\fronorm{\calP_T\calG\lb\calGT\calP_T(\mW)\ve_i\ve_i^\tran\rb}^2 \leq \fronorm{\mW}^2\cdot\lb\frac{2\mu_1r}{n}\rb^2,
	\end{align*}
	%where $\ve_i\in\R^n$ is the basis vector.
\end{corollary}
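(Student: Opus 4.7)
The plan is to obtain this bound by a direct two-step chaining of the two immediately preceding results, Lemma~\ref{lem: supp1} and Lemma~\ref{lem: supp2}, without any further calculation. Specifically, for each fixed index $i\in\{0,\ldots,n-1\}$, set
$\vx_i := \calGT\calP_T(\mW)\ve_i \in \C^s$,
so that the quantity to bound takes exactly the form $\fronorm{\calP_T\calG(\vx_i\ve_i^\tran)}^2$ appearing on the left-hand side of \eqref{ineq: third}.

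First, I would apply \eqref{ineq: third} of Lemma~\ref{lem: supp1} with this choice of $\vx_i$ to obtain
\begin{align*}
\fronorm{\calP_T\calG\lb\calGT\calP_T(\mW)\ve_i\ve_i^\tran\rb}^2 \;\le\; 2\vecnorm{\vx_i}{2}^2\cdot\frac{\mu_1 r}{n}.
\end{align*}
Then, since $\vx_i = \calGT\calP_T(\mW)\ve_i$ is precisely the vector whose norm is controlled by Lemma~\ref{lem: supp2}, I would plug in
\begin{align*}
\vecnorm{\vx_i}{2}^2 \;=\; \vecnorm{\calGT\calP_T(\mW)\ve_i}{2}^2 \;\le\; \fronorm{\mW}^2 \cdot \frac{2\mu_1 r}{n}.
\end{align*}
Combining these two inequalities yields
\begin{align*}
\fronorm{\calP_T\calG\lb\calGT\calP_T(\mW)\ve_i\ve_i^\tran\rb}^2 \;\le\; 2\cdot\frac{\mu_1 r}{n}\cdot\fronorm{\mW}^2\cdot\frac{2\mu_1 r}{n} \;=\; \fronorm{\mW}^2\cdot\lb\frac{2\mu_1 r}{n}\rb^2,
\end{align*}
and taking the maximum over $i$ gives the claim. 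There is no real obstacle here: the corollary is essentially the composition of the two lemmas, with $\vx$ in Lemma~\ref{lem: supp1} specialized to the vector whose Euclidean norm is the content of Lemma~\ref{lem: supp2}, and the two factors of $\sqrt{2\mu_1 r/n}$ and $\sqrt{2\mu_1 r/n}$ multiply to give precisely $(2\mu_1 r/n)^2$.
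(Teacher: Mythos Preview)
Your proposal is correct and follows exactly the same two-step argument as the paper: set $\vx = \calGT\calP_T(\mW)\ve_i$, apply \eqref{ineq: third} of Lemma~\ref{lem: supp1}, then bound $\vecnorm{\vx}{2}^2$ via Lemma~\ref{lem: supp2}. There is nothing to add.
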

%%%%
\begin{proof}
	%\kw{}{Setting} $\msh{}{\bbeta} := \calGT\calP_T(\mW)\ve_i\in\C^s$ and 
	Applying Lemma \ref{lem: supp1} yields that
	\begin{align*}
	\max_{0\leq i\leq n-1}\fronorm{\calP_T\calG\lb\calGT\calP_T(\mW)\ve_i\ve_i^\tran\rb}^2 
	%&\leq \vecnorm{\msh{}{\bbeta}}{2}^2\cdot\frac{2\mu_1r}{n}\\  
	&\leq \vecnorm{\calGT\calP_T(\mW)\ve_i}{2}^2\cdot\frac{2\mu_1r}{n}\\
	&\leq \fronorm{\mW}^2\cdot\lb\frac{2\mu_1r}{n}\rb^2,
	\end{align*}
	where the last line is due to Lemma \ref{lem: supp2}.
\end{proof}
%%%%%%%%

\begin{lemma}
	\label{lem: supp3}
	For any two fixed vectors $\msh{}{\bbeta,\bgamma}\in\C^s$, 
	\begin{align*}
	\sqrt{\frac{w_i}{w_j}}\lab\la\calP_T\calG(\msh{}{\bbeta}\ve_{i}^\tran), \calG(\msh{}{\bgamma}\ve_j^\tran)\ra\rab \leq \frac{3\mu_1r}{n}\cdot\vecnorm{\msh{}{\bbeta}}{2}\vecnorm{\msh{}{\bgamma}}{2}
	\end{align*}
	\kw{}{holds} for any $(i,j)\in [n]\times[n]$.
\end{lemma}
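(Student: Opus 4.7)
My plan is to split the projection $\calP_T(\mM) = \mU\mU^\ast\mM + \mM\mV\mV^\ast - \mU\mU^\ast\mM\mV\mV^\ast$ and bound the three corresponding pieces of the inner product separately. Writing
\begin{align*}
\la \calP_T\calG(\vx\ve_i^\tran), \calG(\vy\ve_j^\tran)\ra = A + B - C,
\end{align*}
with $A = \la \mU^\ast\calG(\vx\ve_i^\tran), \mU^\ast\calG(\vy\ve_j^\tran)\ra$, $B = \la \calG(\vx\ve_i^\tran)\mV, \calG(\vy\ve_j^\tran)\mV\ra$, and $C = \la \mU^\ast\calG(\vx\ve_i^\tran)\mV, \mU^\ast\calG(\vy\ve_j^\tran)\mV\ra$, the goal reduces to showing that each term, once multiplied by $\sqrt{w_i/w_j}$, contributes at most $\frac{\mu_1 r}{n}\vecnorm{\vx}{2}\vecnorm{\vy}{2}$, so the three together yield the claimed constant of $3$.

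For $A$, I would use the explicit block form $\calG(\vx\ve_i^\tran) = \mG_i\otimes\vx$ and the block partition of $\mU$ to write $(\mU^\ast\calG(\vx\ve_i^\tran))\ve_k = \frac{1}{\sqrt{w_i}}\mU_{i-k}^\ast\vx$ for valid $k$, leading to
\begin{align*}
A = \frac{1}{\sqrt{w_iw_j}}\sum_{k\in K_{ij}}\vx^\ast\mU_{i-k}\mU_{j-k}^\ast\vy,
\end{align*}
where $K_{ij}$ is the common range of $k$ for which both $i-k$ and $j-k$ lie in $[0, n_1-1]$, so $|K_{ij}|\leq \min(w_i, w_j)\leq w_j$. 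Bounding each summand via Cauchy-Schwarz and invoking $\fronorm{\mU_\ell}^2\leq \mu_1 r/n$ from Lemma~\ref{incoherence 2} yields $|A|\leq \frac{|K_{ij}|}{\sqrt{w_iw_j}}\vecnorm{\vx}{2}\vecnorm{\vy}{2}\frac{\mu_1 r}{n}$, and multiplying by $\sqrt{w_i/w_j}$ then absorbs the weight factor since $|K_{ij}|/w_j\leq 1$. Term $B$ is handled symmetrically, this time using the explicit form of $\calG(\vx\ve_i^\tran)\mV$ and the column-wise incoherence $\vecnorm{\ve_k^\tran\mV}{2}^2\leq \mu_1 r/n$ in place of the block incoherence of $\mU$.

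The cross term $C$ is where I expect the main difficulty. A direct Cauchy-Schwarz using $\fronorm{\mU^\ast\calG(\vx\ve_i^\tran)\mV}\leq \fronorm{\mU^\ast\calG(\vx\ve_i^\tran)}\opnorm{\mV}\leq \vecnorm{\vx}{2}\sqrt{\mu_1 r/n}$ (from Lemma~\ref{lem: supp1} together with $\opnorm{\mV}=1$) already gives $|C|\leq \vecnorm{\vx}{2}\vecnorm{\vy}{2}\frac{\mu_1 r}{n}$, but this bound on its own does not absorb the $\sqrt{w_i/w_j}$ prefactor when $w_i > w_j$. To handle this, my plan is to expand $C$ explicitly as a double sum over pairs $(\ell, k)$ with $\ell+k = i$ and $(\ell',k')$ with $\ell'+k'=j$, bound $|\ve_{k'}^\tran\mV\mV^\ast\ve_k|\leq \vecnorm{\mV^\ast\ve_k}{2}\vecnorm{\mV^\ast\ve_{k'}}{2}$, and decouple the resulting sum so that Cauchy-Schwarz within each factor produces the same $|K_{ij}|/w_j$-type savings that worked for $A$ and $B$. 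Summing the three contributions completes the proof.
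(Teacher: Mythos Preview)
Your treatment of $A$ and $B$ is fine and essentially matches the paper's argument (you use the pointwise block bound $\fronorm{\mU_\ell}^2\leq\mu_1 r/n$ from Lemma~\ref{incoherence 2} where the paper uses the averaged version \eqref{incoherence condition}, but both work). The gap is in your plan for the cross term $C$.

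The term $C$ has no $K_{ij}$-type structure to exploit. Writing out the expansion,
\[
\sqrt{\frac{w_i}{w_j}}\,C=\frac{1}{w_j}\sum_{\ell+k=i}\sum_{p+q=j}(\vx^\ast\mU_\ell\mU_p^\ast\vy)(\ve_q^\tran\mV\mV^\ast\ve_k),
\]
is a genuine product of two independent index sets of sizes $w_i$ and $w_j$; there is no diagonal restriction forcing a common column index as in $A$ and $B$. If you carry out your proposed decoupling $|\ve_q^\tran\mV\mV^\ast\ve_k|\leq\|\mV^\ast\ve_q\|_2\|\mV^\ast\ve_k\|_2$ together with $|\vx^\ast\mU_\ell\mU_p^\ast\vy|\leq\|\mU_\ell^\ast\vx\|_2\|\mU_p^\ast\vy\|_2$ and then apply Cauchy--Schwarz within each factor, you obtain
\[
\sqrt{\frac{w_i}{w_j}}\,|C|\leq \frac{1}{w_j}\Big(\|\vx\|_2\,w_i\,\frac{\mu_1 r}{n}\Big)\Big(\|\vy\|_2\,w_j\,\frac{\mu_1 r}{n}\Big)=w_i\Big(\frac{\mu_1 r}{n}\Big)^2\|\vx\|_2\|\vy\|_2,
\]
which exceeds $\frac{\mu_1 r}{n}\|\vx\|_2\|\vy\|_2$ whenever $w_i\mu_1 r/n>1$ (and $w_i$ can be as large as $\lfloor(n+1)/2\rfloor$). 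So the savings you are counting on simply are not there.

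The paper avoids this by \emph{not} decoupling the $\mU$- and $\mV$-factors pointwise. Instead it applies Cauchy--Schwarz directly to the double sum, splitting it into $\sqrt{\frac{1}{w_j}\sum\sum|\vx^\ast\mU_\ell\mU_p^\ast\vy|^2}$ and $\sqrt{\frac{1}{w_j}\sum\sum|\ve_q^\tran\mV\mV^\ast\ve_k|^2}$, and then in each square root extends the $i$-indexed sum to the full range $0\leq \ell\leq n_1-1$ (respectively $0\leq k\leq n_2-1$). This lets one invoke $\sum_\ell\|\mU_\ell\vw\|_2^2=\|\mU\vw\|_2^2\leq\|\vw\|_2^2$ (i.e.\ $\opnorm{\mU}=1$) to collapse that sum, leaving only the $j$-indexed sum divided by $w_j$, which the incoherence condition \eqref{incoherence condition} controls by $\mu_1 r/n$. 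The essential point is that the orthonormality of $\mU$ and $\mV$ must be used \emph{after} Cauchy--Schwarz, not discarded beforehand through pointwise bounds.
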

%%%%
\begin{proof}
	Recall that 
	\begin{align*}
	\calG(\msh{}{\bbeta}\ve_i^\tran)=\mG_i\otimes\msh{}{\bbeta}=\lb\frac{1}{\sqrt{w_i}}\sum_{\substack{k+t=i\\0\leq k\leq n_1-1\\0\leq t\leq n_2-1}}\ve_k\ve_t^\tran\rb\otimes\msh{}{\bbeta}\quad\mbox{and}\quad
	\calG(\msh{}{\bgamma}\ve_j^\tran)=\mG_j\otimes\msh{}{\bgamma}=\lb\frac{1}{\sqrt{w_j}}\sum_{\substack{p+q=j\\0\leq p\leq n_1-1\\0\leq q\leq n_2-1}}\ve_p\ve_q^\tran\rb\otimes\msh{}{\bgamma}.
	\end{align*}
	By the definition of $\calP_T$ \msh{}{in \eqref{eq: PT}}, we have
	\begin{align*}
	\sqrt{\frac{w_i}{w_j}}\lab\la\calP_T\calG(\msh{}{\bbeta}\ve_i^\tran),\calG(\msh{}{\bgamma}\ve_j^\tran)\ra\rab \leq &\sqrt{\frac{w_i}{w_j}}\lab\la\mU\mU^*\lb\mG_i\otimes\msh{}{\bbeta}\rb,\mG_j\otimes\msh{}{\bgamma}\ra\rab + \sqrt{\frac{w_i}{w_j}}\lab\la\lb\mG_i\otimes\msh{}{\bbeta}\rb\mV\mVT,\mG_j\otimes\msh{}{\bgamma}\ra\rab\\ 
	&+ \sqrt{\frac{w_i}{w_j}}\lab\la\mU\mU^*\lb\mG_i\otimes\msh{}{\bbeta}\rb\mV\mVT,\mG_j\otimes\msh{}{\bgamma}\ra\rab.
	\end{align*}
	
	It suffices to bound each of the three terms separately. For the first term, we have
	\begin{align*}
	\sqrt{\frac{w_i}{w_j}}\lab\la\mU\mU^*\lb\mG_i\otimes\msh{}{\bbeta}\rb,\mG_j\otimes\msh{}{\bgamma}\ra\rab &=\sqrt{\frac{w_i}{w_j}}\lab\la\mU\mU^*\lb\lb\sum_{\msh{}{\substack{k+t=i\\0\leq k\leq n_1-1\\0\leq t\leq n_2-1}}}\frac{1}{\sqrt{w_i}}\ve_k\ve_t^\tran\rb\otimes\msh{}{\bbeta}\rb,\lb\sum_{\msh{}{\substack{p+q=j\\0\leq p\leq n_1-1\\0\leq q\leq n_2-1}}}\frac{1}{\sqrt{w_j}}\ve_p\ve_q^\tran\rb\otimes\msh{}{\bgamma}\ra\rab\\
	&= \sqrt{\frac{w_i}{w_j}}\lab\frac{1}{\sqrt{w_i w_j}}\la\sum_{\msh{}{\substack{k+t=i\\0\leq k\leq n_1-1\\0\leq t\leq n_2-1}}}\mU\mU^*(\ve_k\otimes\msh{}{\bbeta})\ve_t^\tran,\sum_{\msh{}{\substack{p+q=j\\0\leq p\leq n_1-1\\0\leq q\leq n_2-1}}}(\ve_p\otimes\msh{}{\bgamma})\ve_q^\tran\ra\rab\\
	&= \frac{1}{w_j}\lab\sum_{\msh{}{\substack{k+t=i\\0\leq k\leq n_1-1\\0\leq t\leq n_2-1}}}\sum_{\msh{}{\substack{p+q=j\\0\leq p\leq n_1-1\\0\leq q\leq n_2-1}}}\la\mU\mU^*(\ve_k\otimes\msh{}{\bbeta})\ve_t^\tran,(\ve_p\otimes\msh{}{\bgamma})\ve_q^\tran\ra\rab\\
	&= \frac{1}{w_j}\lab\sum_{\msh{}{\substack{p+q=j,q\leq i\\0\leq p\leq n_1-1\\0\leq q\leq n_2-1}}}\la\mU^*(\ve_{i-q}\otimes\msh{}{\bbeta}),\mU^*(\ve_p\otimes\msh{}{\bgamma})\ra\rab\\
	&\leq \frac{1}{w_j}\sum_{\msh{}{\substack{p+q=j,q\leq i\\0\leq p\leq n_1-1\\0\leq q\leq n_2-1}}}\vecnorm{\mU^*(\ve_{i-q}\otimes\msh{}{\bbeta})}{2}\cdot\vecnorm{\mU^*(\ve_p\otimes\msh{}{\bgamma})}{2}\\
	&= \frac{1}{w_j}\sum_{\msh{}{\substack{p+q=j,q\leq i\\0\leq p\leq n_1-1\\0\leq q\leq n_2-1}}}\vecnorm{\mU_{i-q}^*\msh{}{\bbeta}}{2}\cdot\vecnorm{\mU_p^*\msh{}{\bgamma}}{2}\\
	&\leq \frac{1}{w_j} \sum_{\msh{}{\substack{p+q=j,q\leq i\\0\leq p\leq n_1-1\\0\leq q\leq n_2-1}}}  \fronorm{\mU_{i-q}}\cdot\fronorm{\mU_p}\cdot\vecnorm{\msh{}{\bbeta}}{2}\cdot \vecnorm{\msh{}{\bgamma}}{2}\\
	&\leq \sqrt{\frac{1}{w_j} \sum_{\msh{}{\substack{p+q=j,q\leq i\\0\leq p\leq n_1-1\\0\leq q\leq n_2-1}}}\fronorm{\mU_{i-q}}^2}\cdot \sqrt{ \frac{1}{w_j} \sum_{\msh{}{\substack{p+q=j,q\leq i\\0\leq p\leq n_1-1\\0\leq q\leq n_2-1}}}\fronorm{\mU_p}^2}\cdot\vecnorm{\msh{}{\bbeta}}{2}\cdot\vecnorm{\msh{}{\bgamma}}{2}\\
	&\leq \frac{\mu_1r}{n}\cdot\vecnorm{\msh{}{\bbeta}}{2}\cdot\vecnorm{\msh{}{\bgamma}}{2}.
	\end{align*}
	%In step $(a)$, we only consider $q\leq i$. Because if $q$ is always larger than $i$, the outcome becomes zero and still can be bounded by $\frac{\mu_1r}{n}\cdot\vecnorm{\vx}{2}\cdot\vecnorm{\vy}{2}$.
	The second term can be bounded in a similar way. For the last term, we have
	\begin{align*}
	\sqrt{\frac{w_i}{w_j}}\lab\la\mU\mU^*\lb\mG_i\otimes\msh{}{\bbeta}\rb\mV\mVT,\mG_j\otimes\msh{}{\bgamma}\ra\rab &= \sqrt{\frac{w_i}{w_j}}\lab\la\mU\mU^*\lb\mG_i\otimes\msh{}{\bbeta}\rb,(\mG_j\otimes\msh{}{\bgamma})\mV\mVT\ra\rab\\
	&= \frac{1}{w_j} \lab\sum_{\msh{}{\substack{k+t=i\\0\leq k\leq n_1-1\\0\leq t\leq n_2-1}}}\sum_{\msh{}{\substack{p+q=j\\0\leq p\leq n_1-1\\0\leq q\leq n_2-1}}}\la\mU\mU^*\lb\ve_k\otimes\msh{}{\bbeta}\rb\ve_t^\tran,\lb\ve_p\otimes\msh{}{\gamma}\rb\ve_q^\tran\mV\mVT\ra\rab \\
	&= \frac{1}{w_j}\lab\sum_{\msh{}{\substack{k+t=i\\0\leq k\leq n_1-1\\0\leq t\leq n_2-1}}}\sum_{\msh{}{\substack{p+q=j\\0\leq p\leq n_1-1\\0\leq q\leq n_2-1}}}\la\lb\ve_p^\tran\otimes\msh{}{\bgamma}^*\rb\mU\mU^*\lb\ve_k\otimes\msh{}{\bbeta}\rb,\ve_q^\tran\mV\mVT\ve_t\ra\rab\\
	&= \frac{1}{w_j}\lab\sum_{\msh{}{\substack{k+t=i\\0\leq k\leq n_1-1\\0\leq t\leq n_2-1}}}\sum_{\msh{}{\substack{p+q=j\\0\leq p\leq n_1-1\\0\leq q\leq n_2-1}}}\la\lb\mU_p^*\msh{}{\bgamma}\rb^*\lb\mU_k^*\msh{}{\bbeta}\rb, \ve_q^\tran\mV\mVT\ve_t\ra\rab\\
	&\leq \frac{1}{w_j} \sum_{\msh{}{\substack{k+t=i\\0\leq k\leq n_1-1\\0\leq t\leq n_2-1}}}\sum_{\msh{}{\substack{p+q=j\\0\leq p\leq n_1-1\\0\leq q\leq n_2-1}}}\lab\msh{}{\bgamma}^*\mU_p\mU_k^*\msh{}{\bbeta}\rab\cdot \lab\ve_q^\tran\mV\mVT\ve_t\rab\\
	&\leq \sqrt{\frac{1}{w_j}\sum_{\msh{}{\substack{k+t=i\\0\leq k\leq n_1-1\\0\leq t\leq n_2-1}}} \sum_{\msh{}{\substack{p+q=j\\0\leq p\leq n_1-1\\0\leq q\leq n_2-1}}}  \lab\msh{}{\bgamma}^*\mU_p\mU_k^*\msh{}{\bbeta}\rab^2} \sqrt{\frac{1}{w_j}\sum_{\msh{}{\substack{k+t=i\\0\leq k\leq n_1-1\\0\leq t\leq n_2-1}}} \sum_{\msh{}{\substack{p+q=j\\0\leq p\leq n_1-1\\0\leq q\leq n_2-1}}}\lab\ve_q^\tran\mV\mV^*\ve_t\rab^2}\\
	&\leq \vecnorm{\msh{}{\bbeta}}{2}\sqrt{\frac{1}{w_j}\sum_{\msh{}{\substack{k+t=i\\0\leq k\leq n_1-1\\0\leq t\leq n_2-1}}} \sum_{\msh{}{\substack{p+q=j\\0\leq p\leq n_1-1\\0\leq q\leq n_2-1}}}\vecnorm{\msh{}{\bgamma}^*\mU_p\mU_k^*}{2}^2}\sqrt{\frac{1}{w_j}\sum_{\msh{}{\substack{k+t=i\\0\leq k\leq n_1-1\\0\leq t\leq n_2-1}}} \sum_{\msh{}{\substack{p+q=j\\0\leq p\leq n_1-1\\0\leq q\leq n_2-1}}}\lab\ve_q^\tran\mV\mV^*\ve_t\rab^2}\\
	&\leq \vecnorm{\msh{}{\bbeta}}{2}\sqrt{\frac{1}{w_j} \sum_{\msh{}{\substack{p+q=j\\0\leq p\leq n_1-1\\0\leq q\leq n_2-1}}}\sum_{k=0}^{n_1-1}\vecnorm{\msh{}{\bgamma}^*\mU_p\mU_k^*}{2}^2}\sqrt{\frac{1}{w_j}\sum_{\msh{}{\substack{p+q=j\\0\leq p\leq n_1-1\\0\leq q\leq n_2-1}}}\sum_{t=0}^{n_2-1}\lab\ve_q^\tran\mV\mV^*\ve_t\rab^2}\\
	&= \vecnorm{\msh{}{\bbeta}}{2}\sqrt{\frac{1}{w_j}\sum_{\msh{}{\substack{p+q=j\\0\leq p\leq n_1-1\\0\leq q\leq n_2-1}}}\vecnorm{\msh{}{\bgamma}^*\mU_p\mU^*}{2}^2}\sqrt{\frac{1}{w_j}\sum_{\msh{}{\substack{p+q=j\\0\leq p\leq n_1-1\\0\leq q\leq n_2-1}}}\vecnorm{\ve_q^\tran\mV\mVT}{2}^2}\\
	&\leq \vecnorm{\msh{}{\bbeta}}{2}\vecnorm{\msh{}{\bgamma}}{2}\opnorm{\mU}\opnorm{\mV}\sqrt{\frac{1}{w_j}\sum_{\msh{}{\substack{p+q=j\\0\leq p\leq n_1-1\\0\leq q\leq n_2-1}}}\fronorm{\mU_p}^2}\cdot\sqrt{\frac{1}{w_j}\sum_{\msh{}{\substack{p+q=j\\0\leq p\leq n_1-1\\0\leq q\leq n_2-1}}}\vecnorm{\ve_q^\tran\mV}{2}^2}\\
	&\leq \frac{\mu_1r}{n}\vecnorm{\msh{}{\bbeta}}{2}\cdot\vecnorm{\msh{}{\bgamma}}{2},
	\end{align*}
	\msh{}{where the last step is due to \eqref{incoherence condition}.}
	
	Combining the three bounds together completes the proof.
\end{proof}
%%%%%%%%

The following lemma is established in \cite{chen2014robust} and the proof will be omitted here. 
\begin{lemma}
	\label{lem: supp4}
	Suppose a matrix $\msh{}{\mF}\in\C^{n_1\times n_2}$ satisfies
	\begin{align}
	\max_{0\leq i \leq n_1-1}\vecnorm{\ve_i^\tran\msh{}{\mF}}{2}^2\leq B.
	\end{align}
	We have 
	\begin{align}
	\sum_{i=0}^{n-1}\frac{1}{w_i}\lab\la\msh{}{\mF},\mG_i\ra\rab^2\lesssim B\log(n).
	\end{align}
\end{lemma}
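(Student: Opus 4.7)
The plan is to massage the left-hand side into a double sum over matrix entries and then exploit the hat-shaped structure of the counting function $w_i$ together with the row-$L^2$ hypothesis. By the definition of $\mG_i$ one has $\la\mX,\mG_i\ra=\frac{1}{\sqrt{w_i}}\sum_{j+k=i}\mX_{jk}$, so
\begin{align*}
\sum_{i=0}^{n-1}\frac{|\la\mX,\mG_i\ra|^2}{w_i}
 = \sum_{i=0}^{n-1}\frac{1}{w_i^2}\left|\sum_{j+k=i}\mX_{jk}\right|^2.
\end{align*}
A Cauchy--Schwarz estimate on each antidiagonal yields $\bigl|\sum_{j+k=i}\mX_{jk}\bigr|^2\le w_i\sum_{j+k=i}|\mX_{jk}|^2$, and consequently
\begin{align*}
\sum_{i=0}^{n-1}\frac{|\la\mX,\mG_i\ra|^2}{w_i}
 \le \sum_{i=0}^{n-1}\frac{1}{w_i}\sum_{j+k=i}|\mX_{jk}|^2
 = \sum_{j=0}^{n_1-1}\sum_{k=0}^{n_2-1}\frac{|\mX_{jk}|^2}{w_{j+k}}.
\end{align*}

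It remains to show $\sum_{j,k}|\mX_{jk}|^2/w_{j+k}\lesssim B\log n$. The key observation is that $w_i=\min(i+1,n_1,n_2,n-i)$ is unimodal in $i$; on the interval $[j,j+n_2-1]$ its minimum is attained at the endpoints, which (using $n_1=n_2$) evaluate to $w_j=j+1$ and $w_{j+n_2-1}=n_1-j$. This furnishes the $k$-free lower bound
\begin{align*}
w_{j+k}\;\ge\;\min(j+1,\,n_1-j)\qquad\text{for all } 0\le k\le n_2-1.
\end{align*}
Plugging this in, pulling the denominator out of the $k$-sum, and invoking the hypothesis $\vecnorm{\ve_j^\tran\mX}{2}^2\le B$ gives
\begin{align*}
\sum_{j,k}\frac{|\mX_{jk}|^2}{w_{j+k}}
 \le \sum_{j=0}^{n_1-1}\frac{\vecnorm{\ve_j^\tran\mX}{2}^2}{\min(j+1,n_1-j)}
 \le B\sum_{j=0}^{n_1-1}\frac{1}{\min(j+1,n_1-j)}
 \lesssim B\log n,
\end{align*}
where the last step is the standard harmonic-sum estimate (the summand is symmetric about $j=(n_1-1)/2$ and controlled by twice a harmonic sum of length $n_1/2$).

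The main subtlety, and the reason the bound cannot be obtained by cruder estimates such as $|\mX_{jk}|^2\le B$ (both of which yield only $O(nB)$), is that one must extract a denominator depending solely on the row index $j$ before summing along that row. The unimodality of $w_i$, combined with the symmetric choice $n_1=n_2=(n+1)/2$, is precisely what makes this extraction possible and converts the antidiagonal Cauchy--Schwarz estimate into the desired $\log n$ factor.
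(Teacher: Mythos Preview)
Your proof is correct. The paper itself does not supply a proof of this lemma: it writes ``The following lemma is established in \cite{chen2014robust} and the proof will be omitted here,'' so there is no in-paper argument to compare against. Your route---Cauchy--Schwarz on each antidiagonal followed by the endpoint/unimodality bound $w_{j+k}\ge\min(j+1,n_1-j)$ to reduce to a harmonic sum over rows---is the standard elementary argument for this estimate and matches in spirit what one finds in the cited reference. One minor remark: your endpoint evaluation uses $n_1=n_2$, which the paper does assume (``Without loss of generality, we assume $n_1=n_2=(n+1)/2$''); for general $n_1,n_2$ the same unimodality argument gives $w_{j+k}\ge\min(j+1,n_1-j)$ as well, since $w_j=\min(j+1,n_2)\ge\min(j+1,n_1-j)$ when $j\le n_1-1\le n_2-1$ (assuming $n_1\le n_2$), and $w_{j+n_2-1}=\min(n_1,n_1-j)=n_1-j$, so the bound and hence the proof go through without the symmetry assumption.
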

%%%%%%%%

We will apply \kw{}{this lemma to upper bound} $\gfronorm{\mZ}$ for $\mZ\in\C^{s n_1\times n_2}$. Note that $\mZ$ can be written as
\begin{align*}
\mZ = \begin{bmatrix}
\vz_{0,0}&\cdots&\vz_{0,n_2-1}\\
\vdots&\ddots&\vdots\\
\vz_{n_1-1,0}&\cdots&\vz_{n_1-1,n_2-1}
\end{bmatrix},
\end{align*}
where $\vz_{i,j}\in\C^s$ is the $(i,j)$th block of $\mZ$. %If the norm of each block can be bounded by a constant, then $\gfronorm{\mZ}$ can also be bounded with the similar bound up to a logarithm constant.   

\begin{corollary}
	\label{cor: useful cor2}
	For any matrix $\mZ\in\C^{s n_1\times n_2}$ satisfying
	\begin{align}
	\label{ineq: block bound}
	\max_{0\leq i \leq n_1-1}\sum_{j=0}^{n_2-1}\vecnorm{\vz_{i,j}}{2}^2\leq B,
	\end{align}
	we have
	\begin{align}
	\gfronorm{\mZ}^2 \lesssim B\log(n).
	\end{align}
\end{corollary}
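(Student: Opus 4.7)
The strategy is to reduce the block ($s$-dimensional) claim to the scalar Hankel estimate in Lemma~\ref{lem: supp4} by collapsing each block $\vz_{j,k}\in\C^s$ into its $\ell_2$-norm. First I would unfold the definition of the $(\calG,\mathsf F)$-norm: since $\calG^{\ast}=\calD^{-1}\calH^{\ast}$ with $\mD=\diag(\sqrt{w_0},\ldots,\sqrt{w_{n-1}})$, one has
\begin{align*}
\calG^{\ast}(\mZ)\ve_i \;=\; \frac{1}{\sqrt{w_i}}\sum_{\substack{j+k=i\\ 0\le j\le n_1-1\\ 0\le k\le n_2-1}}\vz_{j,k},
\qquad
\gfronorm{\mZ}^2 \;=\; \sum_{i=0}^{n-1}\frac{\vecnorm{\calG^{\ast}(\mZ)\ve_i}{2}^2}{w_i}.
\end{align*}

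Second, I would pass from vector blocks to scalar entries by the triangle inequality followed by Cauchy--Schwarz, yielding $\vecnorm{\sum_{j+k=i}\vz_{j,k}}{2}\le \sum_{j+k=i}\vecnorm{\vz_{j,k}}{2}$. Defining the auxiliary scalar matrix $\tilde\mX\in\R^{n_1\times n_2}$ by $\tilde\mX_{j,k}=\vecnorm{\vz_{j,k}}{2}$, the anti-diagonal sum is exactly $\sum_{j+k=i}\tilde\mX_{j,k}=\sqrt{w_i}\,\la\tilde\mX,\mG_i\ra$, by the definition of $\mG_i$ in \eqref{eq: Hankel basis}. Combining these identifications gives
\begin{align*}
\frac{\vecnorm{\calG^{\ast}(\mZ)\ve_i}{2}^2}{w_i}
\;\le\; \frac{|\la\tilde\mX,\mG_i\ra|^2}{w_i},
\end{align*}
so that $\gfronorm{\mZ}^2 \le \sum_{i=0}^{n-1} w_i^{-1}|\la\tilde\mX,\mG_i\ra|^2$.

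Third, I would check the hypothesis of Lemma~\ref{lem: supp4} for $\tilde\mX$: the $i$th row of $\tilde\mX$ has squared $\ell_2$-norm $\sum_{j=0}^{n_2-1}\vecnorm{\vz_{i,j}}{2}^2\le B$ by assumption~\eqref{ineq: block bound}. Applying Lemma~\ref{lem: supp4} to $\tilde\mX$ with this bound $B$ then yields $\sum_i w_i^{-1}|\la\tilde\mX,\mG_i\ra|^2\lesssim B\log(n)$, which is the desired conclusion. There is no real obstacle here; the only subtlety is making sure that the two successive inequalities (triangle, then Cauchy--Schwarz of the form $(\sum_{j+k=i}\tilde\mX_{j,k})^2\le w_i\sum_{j+k=i}\tilde\mX_{j,k}^2$, which is implicit in the normalization of $\mG_i$) line up with the weighting $1/w_i$ built into both $\gfronorm{\cdot}$ and the scalar Hankel bound, so that no stray factor of $w_i$ survives and the $\log(n)$ factor from Lemma~\ref{lem: supp4} is the only non-trivial loss.
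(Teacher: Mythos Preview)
Your proposal is correct and follows essentially the same route as the paper: define the scalar matrix with entries $\tilde\mX_{j,k}=\vecnorm{\vz_{j,k}}{2}$, use the triangle inequality to bound $\gfronorm{\mZ}^2$ by $\sum_i w_i^{-1}|\la\tilde\mX,\mG_i\ra|^2$, and then invoke Lemma~\ref{lem: supp4} via the row bound \eqref{ineq: block bound}. Your parenthetical about a second Cauchy--Schwarz step is unnecessary---the triangle inequality alone already gives $\vecnorm{\calG^{\ast}(\mZ)\ve_i}{2}\le \la\tilde\mX,\mG_i\ra$, and the remaining $1/w_i$ weighting is exactly what Lemma~\ref{lem: supp4} consumes.
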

%%%%
\begin{proof}
	Define the matrix
	\begin{align*}
	\msh{}{\widetilde\mZ}=\begin{bmatrix}
	\vecnorm{\vz_{0,0}}{2}&\cdots&\vecnorm{\vz_{0,n_2-1}}{2}\\
	\vdots&\ddots&\vdots\\
	\vecnorm{\vz_{n_1-1,0}}{2}&\cdots&\vecnorm{\vz_{n_1-1,n_2-1}}{2}
	\end{bmatrix}\in\R^{n_1\times n_2}.
	\end{align*}
	%According to \eqref{ineq: block bound}, we have
	%The condition \eqref{ineq: block bound} simply implies that 
	%\begin{align*}
	%    \max_{0\leq a \leq n_1-1}\vecnorm{\ve_a^\tran\mX}{2}^2\leq B.
	%\end{align*}
	The definition of $\calGT$ implies that the $i$th column of $\calGT(\mZ)$ is given by
	\begin{align*}
	\calGT(\mZ)\ve_i = \frac{1}{\sqrt{w_i}}\sum_{\msh{}{\substack{j+k=i\\0\leq j\leq n_1-1\\0\leq k\leq n_2-1}}} \vz_{j,k},
	\end{align*}
	%where $a\in[n]$.
	It follows that
	\begin{align*}
	\gfronorm{\mZ}^2 &= \sum_{i=0}^{n-1}\frac{1}{w_i}\vecnorm{\calGT(\mZ)\ve_i}{2}^2\\
	&= \sum_{i=0}^{n-1}\frac{1}{w_i}\vecnorm{\frac{1}{\sqrt{w_i}}\sum_{\msh{}{\substack{j+k=i\\0\leq j\leq n_1-1\\0\leq k\leq n_2-1}}} \vz_{j,k}}{2}^2\\
	&\leq \sum_{i=0}^{n-1}\frac{1}{w_i}\lb\frac{1}{\sqrt{w_i}}\sum_{\msh{}{\substack{j+k=i\\0\leq j\leq n_1-1\\0\leq k\leq n_2-1}}}\vecnorm{\vz_{j,k}}{2}\rb^2\\
	&= \sum_{i=0}^{n-1}\frac{1}{w_i}\lb\frac{1}{\sqrt{w_i}}\sum_{\msh{}{\substack{j+k=i\\0\leq j\leq n_1-1\\0\leq k\leq n_2-1}}}\la\msh{}{\widetilde\mZ}\ve_k,\ve_j\ra\rb^2\\
	&= \sum_{i=0}^{n-1}\frac{1}{w_i}\lb\frac{1}{\sqrt{w_i}}\sum_{\msh{}{\substack{j+k=i\\0\leq j\leq n_1-1\\0\leq k\leq n_2-1}}}\la\msh{}{\widetilde\mZ},\ve_j\ve_k^\tran\ra\rb^2\\
	&= \sum_{i=0}^{n-1}\frac{1}{w_i}\lb\la\msh{}{\widetilde\mZ},\mG_i\ra\rb^2,
	\end{align*}
	\msh{}{where the last line follows from the definition of $\mG_i$ in \eqref{eq: Hankel basis}.}
	
	\kw{}{Since the condition \eqref{ineq: block bound}  implies that 
		$
		\max\limits_{0\leq i \leq n_1-1}\vecnorm{\ve_i^\tran\msh{}{\widetilde\mZ}}{2}^2\leq B,
		$}
	applying Lemma \ref{lem: supp4} completes the proof.
\end{proof}
%%%%%%%%

The following lemma can be established based on Corollary \ref{cor: useful cor2}. It has been used in the proofs of \eqref{ineq: part_prop3} and \eqref{ineq: part3}.
\begin{lemma}
	\label{lem: supp5}
	For any fixed $\msh{}{\vz}\in \C^s$,
	\begin{align*}
	\gfronorm{\calP_T\calG(\sqrt{w_i}\msh{}{\vz}\ve_i^\tran)}^2 \lesssim \vecnorm{\msh{}{\vz}}{2}^2\cdot\frac{\mu_1r\log(s n)}{n}.
	\end{align*}
\end{lemma}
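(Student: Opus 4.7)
The plan is to apply Corollary \ref{cor: useful cor2}. Set $\mW:=\calG(\sqrt{w_i}\vx\ve_i^\tran)=\sqrt{w_i}\,(\mG_i\otimes\vx)$ and expand
\[
\calP_T(\mW)=\mU\mU^{*}\mW+\mW\mV\mV^{*}-\mU\mU^{*}\mW\mV\mV^{*}.
\]
By Corollary \ref{cor: useful cor2}, applied to each summand, it suffices to bound the maximum block-row Frobenius norm $\max_{0\le a\le n_1-1}\fronorm{\mZ_a}^2$ by a constant multiple of $\vecnorm{\vx}{2}^2\mu_1 r/n$; the triangle inequality for the norm $\gfronorm{\cdot}$ then assembles the three pieces.

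The linchpin of the argument is the identity
\[
\opnorm{\mW}=\vecnorm{\vx}{2},
\]
which holds \emph{independently} of $w_i$. Indeed, a direct computation shows $\mG_i\mG_i^{*}$ is the diagonal matrix with entries $1/w_i$ on the valid indices (and zeros elsewhere), so $\opnorm{\mG_i}=1/\sqrt{w_i}$; hence the prefactor $\sqrt{w_i}$ in $\mW=\sqrt{w_i}(\mG_i\otimes\vx)$ is exactly canceled. Combining this with the submultiplicativity $\fronorm{AB}\le\fronorm{A}\opnorm{B}$, the incoherence bounds $\max_j\fronorm{\mU_j}^2\le\mu_1 r/n$ and $\max_k\vecnorm{\ve_k^{\tran}\mV}{2}^2\le\mu_1 r/n$ from \eqref{eq: incoherence}, and the facts $\opnorm{\mU^{*}}\le 1$ and $\opnorm{\mV\mV^{*}}=1$, one obtains uniformly in $a$
\begin{align*}
\fronorm{(\mU\mU^{*}\mW)_a}^2&=\fronorm{\mU_a\mU^{*}\mW}^2\le\fronorm{\mU_a}^2\opnorm{\mU^{*}\mW}^2\le\frac{\mu_1 r}{n}\vecnorm{\vx}{2}^2,\\
\fronorm{(\mW\mV\mV^{*})_a}^2&=\vecnorm{\vx}{2}^2\,\vecnorm{\ve_{i-a}^{\tran}\mV\mV^{*}}{2}^2\le\frac{\mu_1 r}{n}\vecnorm{\vx}{2}^2,\\
\fronorm{(\mU\mU^{*}\mW\mV\mV^{*})_a}^2&\le\fronorm{\mU_a}^2\opnorm{\mU^{*}\mW\mV\mV^{*}}^2\le\frac{\mu_1 r}{n}\vecnorm{\vx}{2}^2,
\end{align*}
with the convention that any block-row index outside the valid range contributes zero (the $a$-th block row of $\mW$ vanishes unless $0\le i-a\le n_2-1$, and analogously for $\mU^{*}\mW$). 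Applying Corollary \ref{cor: useful cor2} to each of the three summands with $B=C\mu_1 r\vecnorm{\vx}{2}^2/n$ yields
\[
\gfronorm{\mU\mU^{*}\mW}^2+\gfronorm{\mW\mV\mV^{*}}^2+\gfronorm{\mU\mU^{*}\mW\mV\mV^{*}}^2\lesssim \frac{\mu_1 r\log n}{n}\vecnorm{\vx}{2}^2,
\]
and a final application of the triangle inequality together with $\log n\le\log(sn)$ delivers the claimed bound.

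The main obstacle, and the only non-routine step, is recognizing $\opnorm{\mW}=\vecnorm{\vx}{2}$: the naive estimate $\opnorm{\mW}\le\fronorm{\mW}=\sqrt{w_i}\vecnorm{\vx}{2}$ would drag in an extra factor of $\sqrt{w_i}$, inflating the max block-row bound by a factor of roughly $w_i\mu_1 r/n$ and spoiling the application of Corollary \ref{cor: useful cor2} at the scale $\mu_1 r\log(sn)/n$.
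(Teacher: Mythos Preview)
Your proof is correct and follows essentially the same approach as the paper: decompose $\calP_T\mW$ into the three standard pieces, bound each block-row Frobenius norm by $\mu_1 r\vecnorm{\vx}{2}^2/n$ using the cancellation $\sqrt{w_i}\,\opnorm{\mG_i}=1$ together with \eqref{eq: incoherence}, and then invoke Corollary~\ref{cor: useful cor2} on each piece. The only cosmetic difference is that the paper carries $w_i\opnorm{\mG_i}^2\leq 1$ through the chain of inequalities rather than isolating $\opnorm{\mW}=\vecnorm{\vx}{2}$ up front, and handles the second term $\mW\mV\mV^{*}$ by analogy rather than by your explicit block-row computation.
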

%%%%
\begin{proof}
	\msh{}{Recalling the definition of $\calP_T$ in \eqref{eq: PT}}, we have
	\begin{align*}
	\calP_T\calG(\sqrt{w_i}\msh{}{\vz}\ve_i^\tran) = \mU\mU^*\calG\lb\sqrt{w_i}\msh{}{\vz}\ve_i^\tran\rb + \calG\lb\sqrt{w_i}\msh{}{\vz}\ve_i^\tran\rb\mV\mVT - \mU\mU^*\calG\lb\sqrt{w_i}\msh{}{\vz}\ve_i^\tran\rb\mV\mVT.
	\end{align*}
	It suffices to bound the three terms separately. 
	%Corollary \ref{cor: useful cor2} will be applied to complete the proof. 
	For the first term, recall that $\mU\in\C^{s n_1\times r}$ can be rewritten as 
	\begin{align*}
	\mU=\begin{bmatrix}
	\mU_0\\
	\vdots\\
	\mU_{n_1-1}
	\end{bmatrix},
	\end{align*}
	where $\mU_\ell\in \C^{s\times r}$ is the $\ell$-th block. 
	Since
	\begin{align*}
	\fronorm{\mU_\ell\mU^*\calG\lb\sqrt{w_i}\msh{}{\vz}\ve_i^\tran\rb}^2 &= w_i\fronorm{\mU_\ell\mU^*\lb\mG_i\otimes\msh{}{\vz}\rb}^2\\
	&\leq w_i \fronorm{\mU_\ell}^2\cdot\opnorm{\mU}^2\cdot\opnorm{\mG_i\otimes\msh{}{\vz}}^2\\
	&\leq w_i\frac{\mu_1r}{n}\cdot\opnorm{\mG_i}^2\cdot\vecnorm{\msh{}{\vz}}{2}^2\\
	&\leq \frac{\mu_1r}{n}\cdot\vecnorm{\msh{}{\vz}}{2}^2,
	\end{align*}
	\msh{}{where the third line follows from \eqref{eq: incoherence}, then}
	the application of Corollary \ref{cor: useful cor2} yields that 
	\begin{align*}
	\gfronorm{\mU\mU^*\calG\lb\sqrt{w_i}\msh{}{\vz}\ve_i^\tran\rb}^2\lesssim \frac{\mu_1r\log(s n)}{n}\cdot\vecnorm{\msh{}{\vz}}{2}^2.
	\end{align*}
	The same bound can be obtained for $\calG\lb\sqrt{w_i}\msh{}{\vz}\ve_i^\tran\rb\mV\mVT$.
	
	For the last term, we have
	\begin{align*}
	\fronorm{\mU_\ell\mU^*\calG\lb\sqrt{w_i}\msh{}{\vz}\ve_i^\tran\rb\mV\mVT}^2&\leq w_i \fronorm{\mU_\ell}^2\cdot\opnorm{\mU}^2\cdot\opnorm{\calG(\msh{}{\vz}\ve_i^\tran)}^2\cdot\opnorm{\mV\mVT}^2\\
	&\leq w_i \frac{\mu_1r}{n}\opnorm{\mG_i}^2\cdot\vecnorm{\msh{}{\vz}}{2}^2\\
	&\leq \frac{\mu_1r}{n}\cdot\vecnorm{\msh{}{\vz}}{2}^2,
	\end{align*}
	\msh{}{where the second line is due to \eqref{eq: incoherence}}.
	Applying Corollary \ref{cor: useful cor2} again yields that
	\begin{align*}
	\gfronorm{\mU\mU^*\calG\lb\sqrt{w_i}\msh{}{\vz}\ve_i^\tran\rb\mV\mVT}^2\ \lesssim \frac{\mu_1r\log(s n)}{n}\cdot\vecnorm{\msh{}{\vz}}{2}^2.
	\end{align*}
	The proof is completed after combining the three bounds together.
\end{proof}

\section{Conclusion}
A convex approach called Vectorized Hankel Lift is proposed for blind super-resolution. It is based on the observation that the corresponding vectorized Hankel matrix is low rank if the Fourier  samples of the unknown PSFs lie in a low dimensional subspace. Theoretical guarantee has been established for Vectorized Hankel Lift, showing that exact  resolution can be achieved provided the number of samples is nearly optimal. We leave the robust analysis of the method to the future work. {\color{black}In particular, we would like  to see whether the technique that bridges convex and nonconvex programs in \cite{chen2020noisy} may yield  an optimal error bound for the blind supoer-resolution problem.}

For low rank matrix recovery and spectrally sparse signal recovery, many simple yet efficient nonconvex iterative algorithms have been developed and analysed based on inherent low rank structures of the problems \cite{wei2016guarantees,wei2016guarantees1,cai2015fast,cai2019fast,cai2018spectral}. Thus, it is also interesting  to develop nonconvex optimization methods for blind super-resolution based on the low rank structure  of the vectorized Hankel matrix. In fact, preliminary numerical results suggest that a variant of the gradient method in \cite{cai2018spectral} is also able to reconstruct the target matrix arsing in the blind super-resolution problem from a few number of the spectrum samples. A detailed discussion towards this line of research will be reported separately. 

{\color{black} For the single snapshot MUSIC  and the MMV MUSIC, the super-resolution effect has been studied in \cite{liao2016music, li2019conditioning, li2021stability}.  Since the spatial smoothing MUSIC is designed to improve the performance of the MMV MUSIC, it is also interesting to  investigate the super-resolution effect of this variant. The equivalence between it and MUSIC through  Vectorized Hankle Lift (i.e., Lemma~\ref{lem:MUSIC via VHM}) may provide a new perspective to approach  this problem. }
\label{conclusion}

%%%%%%%%%%%%%%%
%%%%%%%%%%%%%%%
\section*{Acknowledgments}
KW would like to thank Wenjing Liao for fruitful discussions on the subspace methods for line spectrum estimation, 
and would like to thank Zai Yang for pointing out that the MUSIC variant arising naturally from the vectorized
Hankel lift framework is indeed equivalent to the spatial smoothing technique proposed to improve the performance
of the MMV MUSIC.
%%%%%%%%%%%%%%%
%\input{supplement}
\bibliographystyle{plain}
\bibliography{ref}
%%%%%%%%%%%%%%%%
%\iffalse

%%%%%%%%%%%%%%%%
\end{document}